\newtheorem{theorem}{Theorem}
\newtheorem{construction}{Construction}
\newtheorem{lemma}{Lemma}
\newtheorem{corollary}{Corollary}
\newtheorem{definition}{Definition}
\newtheorem{remark}{Remark}
\newtheorem{example}{Example}
\newtheorem{observation}{Observation}
\newcommand{\beqno}{ \begin{equation*} }
\newcommand{\eeqno}{ \end{equation*} }
\newcommand{\beq}{ \begin{equation} }
\newcommand{\eeq}{ \end{equation} }
\newcommand{\calM}{\mathcal{M}}
\newcommand{\calC}{\mathcal{C}}
\newcommand{\calS}{\mathcal{S}}
\newcommand{\bbf}{\mathbb{F}}
\begin{document}

\title{Fractional repetition codes with flexible repair from combinatorial designs }
\author{Oktay Olmez and Aditya Ramamoorthy

\thanks{This work was supported in part by the NSF under grants CCF-1320416, CCF-1149860, CCF-1116322 and DMS-1120597 and by TUBITAK project numbers 114F246 and 115F064. The material in this work has appeared in part at the 50th Annual Allerton Conference on Communication, Control and Computing, 2012, the 2013 International Symposium on Network Coding and the 2013 Asilomar Conference on Signals, Systems and Computers.

Oktay Olmez (oolmez@ankara.edu.tr) is with the Department of Mathematics at Ankara University, Tandogan, Ankara, Turkey. Aditya Ramamoorthy (adityar@iastate.edu) is with the Department of Electrical and Computer Engineering at Iowa State University, Ames, IA 50011. Copyright (c) 2014 IEEE. Personal use of this material is permitted.  However, permission to use this material for any other purposes must be obtained from the IEEE by sending a request to pubs-permissions@ieee.org.}
}

\maketitle
\begin{abstract}
Fractional repetition (FR) codes are a class of regenerating codes for distributed storage systems with an exact (table-based) repair process that is also uncoded, i.e., upon failure, a node is regenerated by simply downloading packets from the surviving nodes. In our work, we present constructions of FR codes based on Steiner systems and resolvable combinatorial designs such as affine geometries, Hadamard designs and mutually orthogonal Latin squares. The failure resilience of our codes can be varied in a simple manner. We construct codes with normalized repair bandwidth ($\beta$) strictly larger than one; these cannot be obtained trivially from codes with $\beta = 1$. Furthermore, we present the Kronecker product technique for generating new codes from existing ones and elaborate on their properties. FR codes with locality are those where the repair degree is smaller than the number of nodes contacted for reconstructing the stored file. For these codes we establish a tradeoff between the local repair property and failure resilience and construct codes that meet this tradeoff. Much of prior work only provided lower bounds on the FR code rate. In our work, for most of our constructions we determine the code rate for certain parameter ranges.
\end{abstract}
\begin{keywords}
fractional repetition code, combinatorial design, Steiner systems, affine geometry, high girth, resolvable design, regenerating codes, local repair.
\end{keywords}
\section{Introduction}
\label{sec:intro}
Large scale data storage systems that are employed in social networks, video streaming websites and cloud storage are becoming increasingly popular. In these systems, the integrity of the stored data and the speed of the data access needs to be maintained even in the presence of unreliable storage nodes. This issue is typically handled by introducing redundancy in the storage system, through the usage of replication and/or erasure coding. However, the large scale, distributed nature of the systems under consideration introduces another issue. Namely, if a given storage node fails, it need to be regenerated so that the new system continues to have the properties of the original system. 
It is of course desirable to perform this regeneration in a distributed manner and optimize performance metrics associated with the regeneration process. Firstly, one would like to ensure that the regeneration process be fast. For this purpose we would like to minimize the data that needs to be downloaded from the surviving nodes. Moreover, we would like the surviving nodes and the new node to perform very little (ideally no) computation, as this also induces a substantial delay in the regeneration process that is comparable to the download time (since nowadays, memory access bandwidth is comparable to network bandwidth \cite{inside_ssd_book}). In addition, the regeneration induces a workload on the surviving storage nodes and it is desirable to perform the regeneration by connecting to a small number of nodes. Connecting to a small set of nodes also reduces the overall energy consumption of the system.

\begin{figure*}[t]
\centering
\null\hfill
\subfloat[]{
  \centering
  \label{DSS-(5,3,4)}
  \includegraphics[scale=0.7]{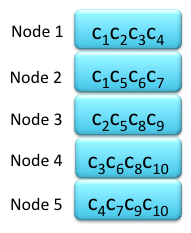}
  }
  \hfill
\subfloat[]{
  \label{multisymbol}
  \centering
  \raisebox{2mm}{
  \includegraphics[scale=0.7]{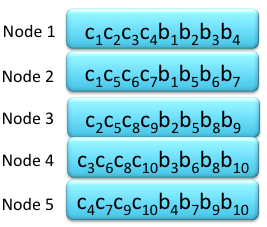}}}
  \hfill\null
\caption{(a) A DSS with $(n,k,d,\alpha) = (5,3,4,4)$. Each node contains a subset of size $4$ of the packets from $\{c_1, \dots, c_{10}\}$. First node for instance contains symbols $c_i, i = 1, \dots, 4$ that is $\{c_1, c_2, c_3, c_4\}$. When there is no confusion we simply use the notation $c_1c_2c_3c_4$ instead of the set notation $\{c_1, c_2, c_3, c_4\}$. (b) The DSS is constructed by applying a $(20, 18)$-MDS code followed by the inner fractional repetition code shown in the figure. It is specified with parameters $(5,3,4,8)$. When a node fails we contact the remaining four nodes and download two packets from each to repair the failed node. This DSS can be obtained from the $(5,3,4,4)$ DSS on the left by trivial $\beta$-expansion, where $\beta = 2$.}
\end{figure*}

In recent years, codes which are designed to satisfy the needs of data storage systems have been the subject of much investigation and there is extensive literature on this topic. 
Depending upon the specific metrics that are optimized there are different requirements that the distributed storage system needs to satisfy. However, broadly speaking, all systems have the following general characteristics. A  distributed storage system (henceforth abbreviated to DSS) consists of $n$ storage nodes, each of which stores $\alpha$ packets (we use symbols and packets interchangeably). A given user, also referred to as the data collector needs to have the ability to reconstruct the stored file by contacting any $k$ nodes; this is referred to as the maximum distance separability (MDS) property of the system. To ensure reliability in the system, the DSS also needs to repair a failed node. This is accomplished by contacting a set of $d$ surviving nodes and downloading $\beta$ packets from each of them for a total repair bandwidth of $\gamma = d \beta$ packets. Thus, the system has a repair degree of $d$, normalized repair bandwidth $\beta$ and total repair bandwidth $\gamma$. The new DSS should continue to have the MDS property.

A simple technique for obtaining a DSS is to treat the file that needs to be stored as a set of symbols over a large enough finite field, generate encoded symbols by using an MDS code (such as a Reed-Solomon (RS) code) and then store each encoded symbol on a different storage node. It is well recognized that the drawback of this method is that upon failure of a given storage node, a large amount of data needs to be downloaded from the remaining storage nodes (equivalent to recreating the file). To address this issue, the technique of regenerating codes was developed in the work of Dimakis et al. \cite{dimakis2010}. In the framework of \cite{dimakis2010}, the repair degree $d \geq k$ and the system needs to have the property that a failed node can be repaired from {\it any} set of $d$ surviving nodes.
The principal idea of regenerating codes is to use subpacketization. In particular, one treats a given physical block as consisting of multiple symbols (unlike the MDS code that stores exactly one symbol in each node). Coding is now performed across the packets such that the file can be recovered by contacting a certain minimum number of nodes. In addition, one can regenerate a failed node by downloading appropriately coded data from the surviving nodes. The work of \cite{dimakis2010} identified a fundamental tradeoff between the amount of storage at each node and the amount of data downloaded for repairing a failed node under the mechanism of functional repair, where the new node is functionally equivalent to the failed node, though it may not be an exact copy of it. Two points on the curve deserve special mention and are arguably of the most interest from a practical perspective. The minimum bandwidth regenerating (MBR) point refers to the point where the repair bandwidth, $\gamma$ is minimum. Likewise, the minimum storage regenerating (MSR) point refers to the point where the storage per node, $\alpha$ is minimum. 

In a different line of work, it has been argued that repair bandwidth is not the only metric for evaluating the repair process. It has been observed that the number of nodes that are contacted for purposes of repair is also an important metric that needs to be considered. The model of \cite{dimakis2010}, which enforces repair from any set of $d$ surviving nodes requires $d$ to be at least $k$. The notion of local repair was introduced in \cite{gopalan2012, papD12, oggier2011}, and considers the design of DSS where $d < k$. However, one only requires that there is some set of $d$ surviving nodes from which the repair can take place.

The majority of work in the design of codes for DSS considers {\it coded} repair where the surviving nodes and the new node need to compute linear combinations of the stored symbols for regeneration. It is well recognized that the read/write bandwidth of machines is comparable to the network bandwidth \cite{inside_ssd_book}.
Thus, this process induces additional undesirable delays \cite{jiekak2013} in the repair process. The process can also be potentially memory intensive since the packets comprising the file are often very large (of the order of GB).  Motivated by these issues, reference \cite{el2010} considered the following variant of the DSS problem. The DSS needs to satisfy the property of {\it exact} and {\it uncoded} repair, i.e., the regenerating node needs to produce an exact copy of the failed node by simply downloading packets from the surviving nodes. This allows the entire system to work without requiring any computation at the surviving nodes. In addition, they considered systems that are resilient to multiple $(> 1)$ failures. However, the DSS only has the property that the repair can be conducted by contacting some set of $d$ nodes, i.e., unlike the original setup, repair is not guaranteed by contacting any set of $d$ nodes. This is reasonable as most practical systems operate via a table-based repair, where the new node is provided information on the set of surviving nodes that it needs to contact. The work of \cite{el2010} proposed a construction whereby an outer MDS code is concatenated with an inner ``fractional repetition" code that specifies the placement of the coded symbols on the storage nodes. The main challenge here is to design the inner fractional repetition (FR) code in a systematic manner.

In this work, we present several families of FR codes and analyze their properties.
This paper is organized as follows. In Section \ref{sec:back_rel_work}, we outline our precise problem formulation, elaborate on the related work in the literature and summarize the contributions of our work. We discuss our FR code constructions for the case when $d \geq k$ in Section \ref{sec:resolv_design_dss}, and explain the Kronecker product technique in Section \ref{sec:iterative_construction}. The locally recoverable FR codes where $d<k$ are considered in Section \ref{sec:locally_recoverable_dss} and Section \ref{sec:conclusions_future_work} outlines the conclusions and opportunities for future work.

\begin{figure*} [t]
\centering
\includegraphics[scale=0.32]{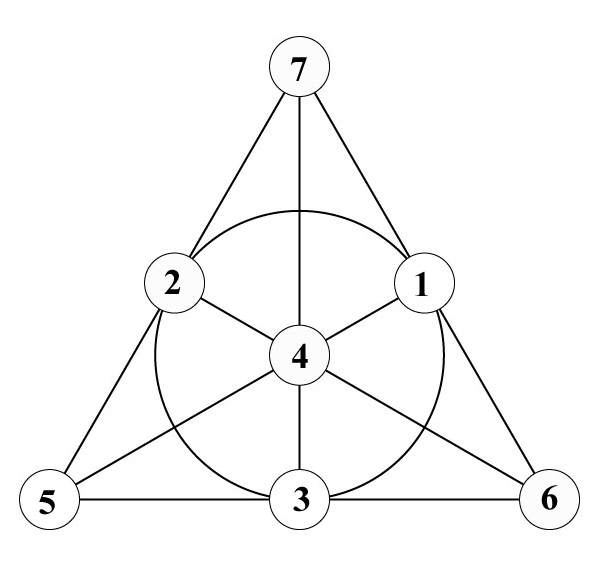}
\includegraphics[scale=0.52]{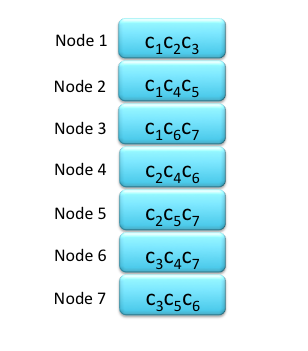}
\caption{\label{fanoplane1} $(7, 3, 3, 1)-$BIBD also known as the Fano plane. Nodes of the DSS, which can be obtained from the Fano plane, are listed on the right. }
\end{figure*}

\section{Background, Related Work and Summary of Contributions}
\label{sec:back_rel_work}
A DSS is specified by parameters $(n,k,d,\alpha)$ where $n$ - number of storage nodes, $k$ - the minimum number of nodes to be contacted for recovering the file, $d$ - the number of nodes to be contacted in order to regenerate a failed node and $\alpha$ -  the storage capacity. In case of repair, the new node downloads $\beta$ packets from each surviving node, for a total of $\gamma = d \beta$ packets. Let $\calM$ denote the size of file being stored on the DSS. 
We consider the design of fractional repetition codes that are best explained by means of the following example \cite{rashmi2009} with $(n,k,d, \alpha) = (5,3,4,4)$.

\begin{example}
\label{example:complete_graph}
Consider a file of $\calM = 9$ packets $(a_1, \dots, a_9) \in \mathbb{F}_q^9$ that needs to be stored on the DSS. We use a $(10,9)$ MDS code that outputs $10$ packets $c_i = a_i, i = 1, \dots, 9$ and $c_{10} = \sum_{i=1}^9 a_i$. The coded packets $c_1, \dots, c_{10}$ are placed on $n=5$ storage nodes as shown in Fig. \ref{DSS-(5,3,4)}. This placement specifies the inner fractional repetition code. It can be observed that each $c_i$ is repeated $\rho = 2$ times and the total number of symbols $\theta = 10$. Any user who contacts any $k=3$ nodes can recover the file (using the MDS property). Moreover, a failed node can be regenerated by downloading one packet each from the four surviving nodes, i.e., $\beta =1 $ and $d = 4$, so that $\gamma = 4$.
\end{example}

Thus, the approach uses an MDS code to encode a file consisting of a certain number of symbols. Let $\theta$ denote the number of encoded symbols. Copies of these symbols are placed on the $n$ nodes such that each symbol is repeated $\rho$ times and each node contains $\alpha$ symbols. Moreover, if a given node fails, it can be exactly recovered by downloading $\beta$ packets from some set of $d$ surviving nodes, for a total repair bandwidth of $\gamma=d\beta$. It is to be noted that in this case $\alpha = \gamma$, i.e., these schemes operate at the MBR point. In the example above, $\beta = 1$, so that $\alpha = d$. One can also consider systems with $\beta > 1$ in general. A simple way to do this is replicating the symbols in the storage system. The resultant DSS has the parameters $(n,k,d,\beta \alpha)$ with $\beta>1$. However, in this work we show that there are infinite families of FR codes with $\beta >1$ which cannot be obtained this way. In Fig. \ref{multisymbol} we illustrate the DSS obtained by replicating the $(5,3,4,4)$-DSS when $\beta=2$.

\begin{figure*} [t]
\centering
\includegraphics[scale=0.65]{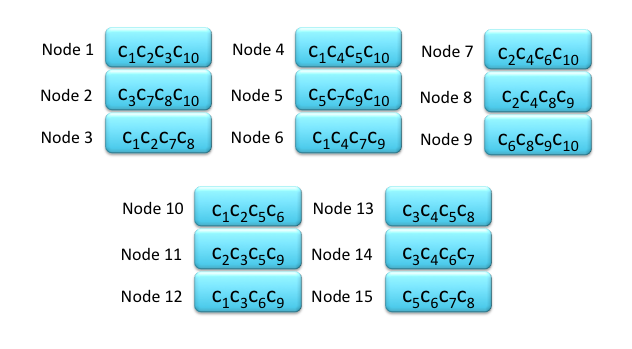}
\caption{The figure shows a DSS where $n=15,k=4, d=2,\theta=10,\alpha =4,\rho =6$. A node can be repaired by contacting the other two nodes in the same column. The system is resilient up to $5$ node failures.}
\label{fig:non_trivial_large_beta}
\end{figure*}

Before introducing the formal definition of a fractional repetition (FR) code we need the notion of $\beta$-recoverability. Let $[n]$ denote the set $\{1, 2, \dots, n\}$.
\begin{definition}[\textbf{$\beta$-recoverability}]
Let $\Omega = [\theta]$ and $V_i, i = 1, \dots, d$ be subsets of $\Omega$. Let $V = \{V_1, \dots, V_d\}$ and consider $A \subset \Omega$ with $|A| = d\beta$. We say that $A$ is \textit{$\beta$-recoverable} from $V$ if there exist $B_i \subseteq V_i$ for each $i= 1, \dots, d$ such that $B_i \subset A, |B_i| = \beta$ and $\displaystyle \cup_{i=1}^d B_i = A$.
\end{definition}

\begin{definition}[\textbf{FR Codes}]
\label{defn:fr_code}
A \textit{fractional repetition} (FR) code $\calC = (\Omega, V)$ for a $(n,k,d,\alpha)$-DSS with repetition degree $\rho$ and normalized repair bandwidth $\beta = \alpha/d$  ($\alpha$ and $\beta$ are positive integers) is a set of $n$ subsets $V=\{V_1, \dots, V_n\}$ of a symbol set $\Omega = [\theta]$ with the following properties.
\begin{itemize}
\item[(a)] The cardinality of each $V_i$ is $\alpha$.
\item[(b)] Each element of $\Omega$ is contained in exactly $\rho$ sets in $V$.
\item[(c)] Let $V^{surv}$ denote any $(n- \tau)$ sized subset of $V$ and $V^{fail} = V \setminus V^{surv}$. Each $V_j \in V^{fail}$ is $\beta$-recoverable from some $d$-sized subset of $V^{surv}$. Let $\rho_{res}$ be the maximum value of $\tau$ such that this property holds.
\end{itemize}

We provide the following example to illustrate that requirement (c) of Definition \ref{defn:fr_code} plays an important role in our study.

\begin{example} \label{eg:beta_rec_cond_fr} Consider the sets $\Omega=\{1,2,3,4,5,6\}$, and two different families of subsets of $\Omega$ as shown below.
\begin{align*}
V &=\{\{1,2,3\}, \{2,3,4\},\{4,5,6\},\{1,5,6\}\}, \text{~and}\\
W &=\{\{1,2,3\}, \{3,4,5\},\{2,5,6\},\{1,4,6\}\}.
\end{align*}
Both $V$ and $W$ satisfy the requirements (a) and (b) of Definition \ref{defn:fr_code}. However, note that $\{1,2,3\} \cap \{4,5,6\}=\emptyset$. This implies that $\{1,2,3\}$ is not $1$-recoverable from the set $$\{\{2,3,4\},\{4,5,6\},\{1,5,6\}\}.$$ So $C=(\Omega, V)$ cannot be a fractional repetition code. In contrast, any failed set in $W$ is $1$-recoverable and thus $C=(\Omega, W)$ is a fractional repetition code with $\delta=1$.
\end{example}

The value of $\rho_{res}$ is a measure of the resilience of the system to node failures, under the constraint of exact and uncoded repair.
The \textit{file size} is given by
\begin{align*}
\displaystyle \mathcal{M} =  \min_{I \subset [n],|I| = k} |\cup_{i \in I} V_i|
\end{align*}
and the \textit{code rate} is defined as $\displaystyle R_{\calC}=\frac{\mathcal{M}}{n\alpha}$. We emphasize that $R_{\calC}$ depends on $k$.
\end{definition}

Note that the parameters of a FR code are such that $\theta \rho = n \alpha$. Thus, the code rate $R_{\calC} = \frac{\calM}{n\alpha} \leq \frac{\theta}{n\alpha} = \frac{1}{\rho}$. Moreover as $\rho \geq 2$, the maximum rate of any FR code is at most $\frac{1}{2}$. It is to be noted that the parameter $\calM$ also sets the code rate of the outer MDS code; it is exactly $\calM/\theta$. For a FR code $\calC = (\Omega, V)$ and an index set $\mathcal{I} \subseteq [n]$, we say that nodes $V_i \in V$ for $i \in \mathcal{I}$ cover $\zeta$ symbols if $\zeta = |\cup_{i \in \mathcal{I}} V_i|$.

\begin{figure*}[t]
\centering
\includegraphics[scale=0.65]{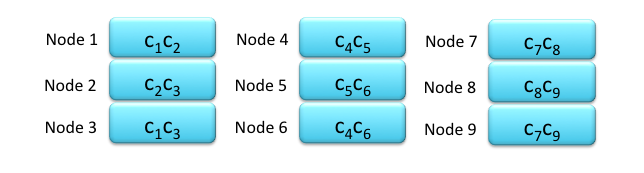}
\caption{ \label{fig:kron_local_eg} A failed node can be recovered by contacting two nodes and downloading one packet from each. The code is resilient up to five failures and the file size is $5$. The minimum distance is of the code is $6$, since any four nodes can recover the file. }
\end{figure*}

The work of \cite{el2010}, only considered FR codes with $\beta = 1$ and $k \leq d$, i.e., for recovery the new node would contact $d$ surviving nodes and download a single packet from each of them. For their codes, the requirement (c) in Definition \ref{defn:fr_code} is satisfied and the system is resilient to $\rho - 1$ failures, i.e., $\rho_{res} = \rho - 1$.  It is to be noted that the requirement of $d \geq k$ is essential in the problem formulation considered in \cite{dimakis2010} since the systems require node recovery from any set of $d$ surviving nodes. In that setup if $d < k$, it is easy to see that one can always specify a failed node and a set of $d$ nodes from which recovery is impossible. However, in the framework of \cite{el2010}, the recovery requirement is relaxed. Specifically, to recover from a failure, the new node contacts a specific set of nodes from which it regenerates the failed node. Thus, the recovery process is table-based and for each node we only need to guarantee the existence of one set of $d$ nodes from which recovery is possible. 
Thus, it becomes possible to have systems with $d<k$. In fact, in Section \ref{sec:locally_recoverable_dss}, of this paper, we present several constructions of FR codes where $d < k$. In the literature, these are referred to as codes that allow for local repair.

For FR codes, the failure resilience $\rho_{res}$ and the code rate $R_\calC$ are two evaluation metrics and it is evident that there is a tradeoff between them. Indeed, if the outer MDS code does not add any redundancy, i.e., $\calM = \theta$ then $k$ would need to be chosen such that any $k$ nodes cover all the $\theta$ symbols and the code rate of the system would be exactly $\frac{\theta}{n \alpha}$. However, in this case the DSS will be resilient to at most $\rho - 1$ failures under any possible recovery procedure, i.e., even without any constraint on the repair.
In contrast, if the outer code introduces nontrivial redundancy, the file size $\calM$ would be lower but it may be possible to reconstruct the DSS in the presence of more than $\rho - 1$ failures. To see this, consider Example \ref{example:complete_graph} where the outer MDS code has rate $9/10$. Note that under exact and uncoded repair, this DSS is resilient to only one failure. However, the DSS can be reconstructed even in the presence of the failure of any two nodes, since any three surviving nodes cover at least nine symbols. Our proposed codes will also be evaluated in terms of their {\it minimum distance} which quantifies this tradeoff. 
\begin{definition}[\textbf {Minimum Distance of a DSS}]
The \textit{minimum distance} of a DSS denoted $d_{\min}$ is defined to be the size of the smallest subset of storage nodes whose failure guarantees that the file is not recoverable from the surviving nodes.
\end{definition}
The Singleton bound on the minimum distance in this context can be found, e.g., in eq. (15) in reference \cite{kamath_et_al_14}.
\begin{lemma}[\textbf{Singleton Bound}]\label{singleton} Consider a DSS with parameters $(n,k,d,\alpha)$ with file size $\calM$ and minimum distance $d_{\min}$. Then,
\begin{align*}
d_{\min} \leq n- \left \lceil \frac{\calM}{\alpha}  \right \rceil +1.
\end{align*}
\end{lemma}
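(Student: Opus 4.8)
The plan is to reduce the claim to an elementary counting argument about how many symbols a small collection of storage nodes can cover. First I would record the fact underlying the whole construction: since the $\theta$ stored symbols are the output of an outer MDS code of dimension $\calM$, a surviving set $S \subseteq [n]$ of nodes suffices to recover the file if and only if those nodes jointly cover at least $\calM$ symbols, i.e. $|\cup_{i \in S} V_i| \geq \calM$. Consequently $d_{\min}$ is the smallest size of a set $F$ of failed nodes for which the surviving set $[n]\setminus F$ covers at most $\calM-1$ symbols; equivalently $n - d_{\min}$ is the largest size of a node set that still covers at most $\calM-1$ symbols.

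Next I would exhibit an explicit failed set that meets the bound. Put $m = \lceil \calM/\alpha \rceil$ and pick any $S \subseteq [n]$ with $|S| = m-1$ (this is legitimate because $\calM \le n\alpha$ forces $m \le n$). By property (a) of Definition \ref{defn:fr_code} each $V_i$ has cardinality $\alpha$, so $|\cup_{i \in S} V_i| \leq (m-1)\alpha$. The defining inequality $m - 1 < \calM/\alpha$ gives $(m-1)\alpha < \calM$, and since both sides are integers this improves to $(m-1)\alpha \leq \calM - 1$. Hence $S$ covers at most $\calM - 1 < \calM$ symbols, so the file is not recoverable from the surviving nodes after the failure of $F = [n] \setminus S$. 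This $F$ has size $n - (m-1) = n - \lceil \calM/\alpha \rceil + 1$, and therefore $d_{\min} \leq n - \lceil \calM/\alpha \rceil + 1$.

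There is no serious obstacle here: the entire argument is the observation that any $\lceil \calM/\alpha\rceil - 1$ nodes simply do not hold enough symbols to recover the file. The only points needing a little care are (i) stating the ``recoverable $\iff$ covers at least $\calM$ symbols'' equivalence cleanly from the outer-MDS structure, and (ii) upgrading the strict inequality $(m-1)\alpha < \calM$ to $(m-1)\alpha \le \calM - 1$ via integrality of the ceiling. As a sanity check, Example \ref{example:complete_graph} has $n=5$, $\calM = 9$, $\alpha = 4$, giving $d_{\min} \le 5 - \lceil 9/4 \rceil + 1 = 3$, which matches the observation there that some two nodes fail to cover the nine symbols while any three nodes do.
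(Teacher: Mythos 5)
Your proof is correct. Note, however, that the paper does not actually prove Lemma \ref{singleton}: it imports the bound from eq.~(15) of the cited reference \cite{kamath_et_al_14}, where it is established for general (vector) storage codes by an information-theoretic accumulation argument. Your route is an elementary, self-contained counting argument tailored to the paper's architecture: because the stored symbols come from an outer MDS code of dimension $\calM$ placed uncoded on the nodes, a surviving set that covers fewer than $\calM$ distinct symbols cannot determine the file, and any $\lceil \calM/\alpha \rceil - 1$ nodes cover at most $(\lceil \calM/\alpha\rceil - 1)\alpha \leq \calM - 1$ symbols, which exhibits a bad failure set of size $n - \lceil \calM/\alpha\rceil + 1$. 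Strictly speaking only this one implication (fewer than $\calM$ symbols $\Rightarrow$ not recoverable) is needed, and it holds for any DSS in which each node carries $\alpha$ symbols' worth of data, so your argument loses essentially no generality while being more transparent than the cited general proof; what the paper's choice buys is that the same statement applies verbatim to arbitrary coded placements, not just the FR-plus-outer-MDS construction. Your integrality step upgrading $(m-1)\alpha < \calM$ to $(m-1)\alpha \leq \calM - 1$ and the check $m \leq n$ are both handled correctly.
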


It turns out that codes that have the local repair property, i.e., codes with $d < k$ suffer a penalty on the maximum possible minimum distance. This tradeoff was captured in the case of scalar (i.e., $\alpha = 1$) codes by
 \cite{gopalan2012} and by \cite{papD12} in the case of vector (i.e., $\alpha > 1$) codes.
\begin{lemma}\label{local_bound} Consider a DSS with parameters $(n,k,d,\alpha)$ with file size $\calM$ and minimum distance $d_{\min}$. Then,
\begin{align*}
d_{\min} \leq n- \left \lceil \frac{\calM}{\alpha}  \right \rceil -\left \lceil \frac{\calM}{d\alpha} \right \rceil + 2.
\end{align*}
\end{lemma}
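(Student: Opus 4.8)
The plan is to prove the bound by exhibiting a \emph{large} set of surviving nodes from which the stored file cannot be reconstructed; this is the standard argument underlying the locality bound of \cite{gopalan2012, papD12}. Concretely, I would show that there exists a set $S \subseteq [n]$ of storage nodes with
$|S| \ge \lceil \calM/\alpha \rceil + \lceil \calM/(d\alpha) \rceil - 2$
such that the file is not recoverable from the nodes in $S$. Once such an $S$ is available, failing the $n-|S|$ nodes outside $S$ witnesses $d_{\min} \le n - |S|$, which gives exactly the claimed inequality.

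To build $S$ I would run a greedy procedure, maintaining $S$ together with $\mathrm{rank}(S)$, the largest amount of file data recoverable from the nodes of $S$. For an FR code this is $\min\{|\cup_{i\in S}V_i|, \calM\}$; in general it is monotone and submodular, satisfies $\mathrm{rank}(\{v\})\le\alpha$ for every node $v$, and the local-repair property gives $\mathrm{rank}(\Gamma(v)\cup\{v\}) = \mathrm{rank}(\Gamma(v))$ for a repair group $\Gamma(v)$ of $v$ with $|\Gamma(v)|\le d$. Starting from $S=\emptyset$, I would first repeatedly pick a rank-increasing node $v\notin S$ and adjoin $\{v\}\cup\Gamma(v)$, for as long as this keeps $\mathrm{rank}(S)\le\calM-d\alpha-1$ (so that the addition is \emph{safe}, i.e.\ it can never lift the rank to $\calM$); then I would adjoin single nodes $w\notin S$ for as long as some such $w$ keeps $\mathrm{rank}(S)\le\calM-1$; finally I may adjoin any remaining nodes that do not increase the rank at all.

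The crux is one invariant together with two endpoint estimates. Letting $g$ be the number of repair-group steps performed so far, I would show that $|S| - \mathrm{rank}(S)/\alpha \ge g$ holds throughout: a group step adjoins $1+|\Gamma(v)\setminus S|$ new nodes while raising the rank by at most $|\Gamma(v)\setminus S|\,\alpha$ (because $v$ is determined by $\Gamma(v)$), so the left-hand side grows by at least $1$; single-node and rank-preserving steps never decrease it. Next, the group phase exits only once $\mathrm{rank}(S)\ge\calM-d\alpha$, and each group step raises the rank by at most $d\alpha$ starting from $0$, so $g\,d\alpha\ge\calM-d\alpha$, hence (by integrality) $g\ge\lceil\calM/(d\alpha)\rceil-1$. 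The single-node phase exits only once every $w\notin S$ would push the rank to exactly $\calM$, which forces $\calM-\mathrm{rank}(S)\le\mathrm{rank}(\{w\})\le\alpha$, i.e.\ $\mathrm{rank}(S)\ge\calM-\alpha$. Combining, $|S| \ge \mathrm{rank}(S)/\alpha + g \ge (\calM/\alpha - 1) + (\lceil\calM/(d\alpha)\rceil-1)$, and integrality of $|S|$ upgrades $\calM/\alpha$ to $\lceil\calM/\alpha\rceil$. Since $\mathrm{rank}(S)\le\calM-1<\calM$, the file is not recoverable from $S$, and the bound follows.

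The main obstacle I anticipate is getting the boundary bookkeeping exactly right so that both ceiling terms appear with the correct additive constant $2$: one must choose the safe stopping threshold for the group phase so that no single step ever overshoots $\calM$, verify that a rank-increasing node always exists while $\mathrm{rank}(S)<\calM$ (and that $S\ne[n]$ at the relevant moments), and check that the degenerate regime $d\alpha\ge\calM$ — where $g=0$ and the bound collapses to the ordinary Singleton bound of Lemma~\ref{singleton} — is consistent with the argument. The remaining ingredients (monotonicity and submodularity of $\mathrm{rank}$, and the ``determined-by'' relation supplied by a repair group) are routine.
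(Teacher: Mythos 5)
The paper offers no proof of Lemma~\ref{local_bound} to compare against: it is stated as a known bound imported from \cite{gopalan2012} (scalar case) and \cite{papD12} (vector case). Your greedy rank-accumulation argument is essentially the standard proof from those references, and it is the same algorithmic style the paper itself later uses for its refined bound for local FR codes with uncoded repair (Lemma~\ref{minimum distance}); your outline is sound. In particular the invariant $|S|-\mathrm{rank}(S)/\alpha\ge g$, the two exit estimates $\mathrm{rank}(S)\ge\calM-d\alpha$ (group phase) and $\mathrm{rank}(S)\ge\calM-\alpha$ (single-node phase, using $S\neq[n]$, which holds because $\mathrm{rank}(S)\le\calM-1<\calM\le\mathrm{rank}([n])$), together with integrality, do yield $|S|\ge\lceil\calM/\alpha\rceil+\lceil\calM/(d\alpha)\rceil-2$ with $\mathrm{rank}(S)<\calM$, hence $d_{\min}\le n-|S|$ as claimed, and the degenerate case $d\alpha\ge\calM$ ($g=0$, Singleton bound of Lemma~\ref{singleton}) is consistent. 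One caveat of phrasing: for your claim that ``the group phase exits only once $\mathrm{rank}(S)\ge\calM-d\alpha$'' you must read the stopping rule as a condition on the \emph{pre-step} rank (perform a group step whenever the current rank is at most $\calM-d\alpha-1$, which already guarantees the post-step rank stays at most $\calM-1$); if the rule were instead that the post-step rank must remain at most $\calM-d\alpha-1$, the phase could halt with much smaller rank and the estimate $g\,d\alpha\ge\calM-d\alpha$ would fail. With that reading fixed, and granting the routine facts you defer (monotone submodular rank, $\mathrm{rank}(\{v\})\le\alpha$, and a repair group of size at most $d$ determining each node), the argument is complete.
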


We note that if $d \geq k$, we have $\lceil \frac{\calM}{d\alpha}\rceil = 1$ so that the bound above reduces to the Singleton bound.
\begin{observation}\label{obs:meet_bound}
A given DSS meets the Singleton bound if $k = \lceil \frac{\calM}{\alpha} \rceil$. Similarly, a code meets the bound in Lemma \ref{local_bound} if $k = \left \lceil \frac{\calM}{\alpha}  \right \rceil +\left \lceil \frac{\calM}{d\alpha} \right \rceil - 1$.
\end{observation}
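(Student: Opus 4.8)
Meeting a bound here means that the inequality of Lemma~\ref{singleton} (respectively Lemma~\ref{local_bound}) holds with equality, so the quantity that must be pinned down is $d_{\min}$ itself. Since those two lemmas already give $d_{\min}$ no larger than the stated right-hand sides, the plan is to prove the matching \emph{lower} bound on $d_{\min}$ under the prescribed value of $k$, and the efficient way to do this is to isolate one elementary inequality that handles both cases at once.

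The inequality I would establish first is that $d_{\min}\ge n-k+1$ for \emph{every} DSS obtained from an FR code. The map $I\mapsto|\cup_{i\in I}V_i|$ is monotone non-decreasing in $I$, and by the definition $\calM=\min_{|I|=k}|\cup_{i\in I}V_i|$ every set of at least $k$ nodes covers at least $\calM$ of the $\theta$ coded symbols; because the outer code is an MDS code of dimension $\calM$, covering $\calM$ symbols suffices to reconstruct the file. Hence no failure pattern touching at most $n-k$ nodes can render the file unrecoverable, which is precisely $d_{\min}\ge n-k+1$.

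It then only remains to substitute the two values of $k$. In the Singleton case $k=\lceil\calM/\alpha\rceil$ gives $n-k+1=n-\lceil\calM/\alpha\rceil+1$, which is exactly the right-hand side of Lemma~\ref{singleton}; combined with that lemma, $d_{\min}=n-\lceil\calM/\alpha\rceil+1$. In the locality case $k=\lceil\calM/\alpha\rceil+\lceil\calM/(d\alpha)\rceil-1$ gives $n-k+1=n-\lceil\calM/\alpha\rceil-\lceil\calM/(d\alpha)\rceil+2$, which is exactly the right-hand side of Lemma~\ref{local_bound}, and the same sandwiching forces equality. As a sanity check, $\calM\le k\alpha$ always, so $k\ge\lceil\calM/\alpha\rceil$ and $n-k+1$ never exceeds either bound, as it must not.

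I do not expect a genuine obstacle: the argument is bookkeeping built on the monotonicity of the coverage function and the already-established upper bounds. The only place that deserves a careful sentence is the step ``covering at least $\calM$ of the $\theta$ symbols suffices to reconstruct the file,'' which is inherent in the FR-code-plus-outer-MDS-code model and in the very definition of $\calM$; once that is made explicit, both halves of the observation follow immediately.
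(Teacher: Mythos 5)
Your argument is correct and is exactly the reasoning the paper leaves implicit for this observation: since any $k$ surviving nodes cover at least $\calM$ symbols and the outer MDS code has dimension $\calM$, every DSS satisfies $d_{\min} \geq n-k+1$, and substituting the two prescribed values of $k$ makes this lower bound coincide with the upper bounds of Lemma~\ref{singleton} and Lemma~\ref{local_bound}. Nothing is missing; the careful sentence you flag about recoverability from any $\calM$ covered symbols is precisely the point the paper takes for granted from its MDS-plus-FR architecture.
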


It is to be noted that the bound in Lemma \ref{local_bound} holds for all possible local repair codes. In this work, we consider the added constraint that the repair takes place purely by download. Thus, for our constructions, the bound in Lemma \ref{local_bound} is in general loose. In Section \ref{sec:locally_recoverable_dss} we derive a tighter upper bound on the minimum distance of codes where the repair process is local and operates purely by download.

At various points we will need to use the well-known inclusion-exclusion principle for computing the maximum file sizes that can be supported by our DSS. For the sake of completeness, we state the result here.
\begin{theorem}\label{thm:inc_enc} [\textbf{Inclusion-Exclusion principle}]
Consider $n$ sets $A_1, A_2, \dots, A_n$. If $\mathcal{I} \subseteq [n]$, let $A_{\mathcal{I}} = \cap_{j \in \mathcal{I}} A_j$. Then
\begin{align}
|A_1 \cup A_2 \cup \dots \cup A_n| = \sum_{\emptyset \neq \mathcal{I} \subseteq [n]} (-1)^{|\mathcal{I}| + 1} |A_{\mathcal{I}}|.
\end{align}
It can also be shown that
\begin{align} \label{eq:inc_enc_lower_bd}
|A_1 \cup A_2 \cup \dots \cup A_n| \geq \sum_{i=1}^n |A_i| - \sum_{i < j} |A_i \cap A_j|.
\end{align}
\end{theorem}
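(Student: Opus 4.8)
The plan is to prove both displayed statements by the standard element-counting (double-counting) argument, reducing each to an elementary binomial identity; an induction on $n$ would also work but is messier. Throughout, take the universe to be $A_1 \cup \dots \cup A_n$, and for an element $x$ in this union let $m = m(x) \geq 1$ be the number of indices $i$ with $x \in A_i$.

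First I would establish the equality. For a nonempty $\mathcal{I} \subseteq [n]$, the element $x$ lies in $A_{\mathcal{I}} = \cap_{j \in \mathcal{I}} A_j$ exactly when $\mathcal{I} \subseteq \{i : x \in A_i\}$, a set of size $m$. Hence, summing the right-hand side over all nonempty $\mathcal{I}$ and exchanging the order of summation, the total contribution of $x$ equals $\sum_{j=1}^{m} (-1)^{j+1} \binom{m}{j}$. By the binomial theorem, $\sum_{j=0}^{m} (-1)^{j} \binom{m}{j} = (1-1)^m = 0$, so $\sum_{j=1}^{m} (-1)^{j+1}\binom{m}{j} = 1$; that is, every $x$ in the union contributes exactly $1$ to the right-hand side, matching its contribution of $1$ to $|A_1 \cup \dots \cup A_n|$. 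Summing over $x$ gives the identity.

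For the inequality, the same bookkeeping gives $\sum_{i=1}^n |A_i| = \sum_{x} m(x)$ and $\sum_{i < j}|A_i \cap A_j| = \sum_x \binom{m(x)}{2}$, while $|A_1 \cup \dots \cup A_n| = \sum_x 1$. Thus the claim is equivalent to $\sum_x \big(\binom{m(x)}{2} - (m(x)-1)\big) \geq 0$, and it suffices that each summand be nonnegative. This follows from the identity $\binom{m}{2} - (m-1) = \tfrac{(m-1)(m-2)}{2} = \binom{m-1}{2}$, which is $\geq 0$ whenever $m \geq 1$; since every $x$ in the union has $m(x) \geq 1$, we are done. (Equivalently, this is the Bonferroni bound obtained by truncating the inclusion--exclusion sum after two terms.) There is no genuine obstacle, as both parts are classical; the only points needing minor care are the legitimacy of exchanging the summation order in the first part and the observation that $m(x) \geq 1$ for every $x$ in the union, which is exactly what makes the quantities $\binom{m(x)-1}{2}$ nonnegative.
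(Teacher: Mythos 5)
Your proof is correct: the double-counting argument for the identity (each element $x$ with multiplicity $m(x)$ contributes $\sum_{j=1}^{m}(-1)^{j+1}\binom{m}{j}=1$ to the right-hand side) and the observation that the inequality reduces pointwise to $\binom{m-1}{2}\geq 0$ are both sound and complete. The paper itself offers no proof of this statement --- it records the classical inclusion--exclusion principle and its two-term Bonferroni truncation only for completeness --- so there is nothing in the paper to compare your argument against; what you have written is the standard textbook proof and fills that gap correctly.
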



%
%

\begin{table*}[t]
\begin{center}
\small
\centering
\begin{tabular}{| p{1.5cm} |  p{2.5cm} |  p{1.5cm} | p{1.5cm} | p{8cm} |}
\hline
Method & $(n,\theta, \alpha, \rho)$ & $\mathcal{M}$ & Range of $k$ & Comments\\
\hline
\hline
Steiner Systems with $t=2$ & $(n,\theta,\alpha, \frac{\theta-1}{\alpha -1})$ & ${\tiny \geq k\alpha - \binom{k}{2}}$ & ${\tiny 1 \leq k \leq \alpha}$ & Steiner systems with $\alpha=3,4,5$ are completely characterized and explicit constructions are known. Here we list the necessary and sufficient conditions for the cases $\alpha=3,4,5$
\begin{itemize}   \item Steiner systems with $\alpha=3$ exists for any $\theta \equiv 1,3 \mod 6$.
\item Steiner systems with $\alpha=4$ exists for any $\theta \equiv 1,4 \mod 12$.
\item Steiner systems with $\alpha=5$ exists for any $\theta \equiv 1,5 \mod 20$.
\end{itemize}
For $\alpha \in \{6,7,8,9\}$ there are only finitely many exceptions where the existence of Steiner systems is unknown. For this we refer the reader to the tables provided in section 3 of the book \cite{colbourn2010}.  \\ 
\hline
Transposed Steiner Systems & $(n,\theta,\frac{\theta-1}{\rho -1}, \rho)$ & ${\tiny \geq k\alpha - \binom{k}{2}}$  & ${\tiny 1 \leq k \leq \alpha}$ & File size equals ${\tiny k\alpha - \binom{k}{2}}$ if the original Steiner system has a maximal arc. \begin{itemize}\item There exist a Steiner system with $\alpha=3$ and a maximal arc if $\theta \equiv 3,7 \mod 12$.
\item  There exist a Steiner system with $\alpha=4$ and a maximal arc if $\displaystyle \frac{\theta-1}{3}$ is a prime power.  \end{itemize} To our best knowledge results about the existence of maximal arcs in Steiner systems with higher values of $\alpha$ are not known. \\
\hline
Grids & $(2a, a^2, a, 2)$ &  $ka - k^2/4$ for even $k$, $ka - (k^2 - 1)/4$ for odd $k$ & $1\leq k\leq a$ & 
The file size calculation can be done for any positive integer $a$.\\
\hline
MOLS (Remark \ref{MOLS1})  & $(4a, a^2, a, 4)$ & ${\tiny \geq k\alpha - \binom{k}{2}}$ & $1\leq k \leq 4$ & File size equals ${\tiny k\alpha - \binom{k}{2}}$ if $k=4$. Use the construction of two MOLS  for order greater than  6 \cite{bose1960}.  \\
\hline
MOLS (Lemma \ref{MOLS2}) & $ (\rho p^m, p^{2m}, p^m,$ \newline $\rho \leq p^m - 1)$ & ${\tiny kp^m - \binom{k}{2}}$ & $1 \leq k \leq \rho$ & $p$ is a prime. Use the construction of MOLS where the order is a prime power.\\
\hline
\end{tabular}
\end{center}
\caption{\label{table:d_greater_k_beta_1}Constructions where $d \geq k$ and $\beta = 1$. Note that we can perform trivial $\beta$-expansion to obtain higher $\beta$.  }
\end{table*}


\begin{table*}[t]
\small
\centering
\begin{tabular}{| p{1.5cm} |  p{3.5cm}  | p{5cm} |}
\hline
$\alpha$ & $\theta$  & Comments\\
\hline
\hline
$q$ & $q^2$  & $q$ is a prime. \\
\hline
$q+1$ & $q^2+q+1$ & $q$ is a prime. \\
\hline
$q+1$ & $q^3+1$ & $q$ is a prime. These designs are known as Unitals. \\
\hline
$2^r$ & $2^{r+s}+2^r-2^s$ & $2\leq r<s$. These designs are known as Denniston designs. \\
\hline
\end{tabular}
\vspace{2mm}
\caption{\label{Steiner-Systems}Well-known infinite families of Steiner systems when $t=2$. These can be found in \cite{colbourn2010}.}
\end{table*}

Many of our constructions will result from combinatorial designs that we briefly introduce (a detailed description can be found in \cite{stinson2004}).
\begin{definition}[\textbf{Combinatorial Design}]
A \textit{combinatorial design} (or, simply a design) is a pair $(\Omega, V)$ where $\Omega$ is a finite set of elements called ``points" and $V$ is a collection of non-empty subsets of $\Omega$ called ``blocks".
\end{definition}
A prototypical example with several applications is the balanced incomplete block design (BIBD).
\begin{definition}[\textbf{Balanced Incomplete Block Design}]
A $(\theta,\rho, \alpha,\lambda)$ \textit{balanced incomplete block design} (BIBD) is
a pair $(\Omega, V )$ that forms a combinatorial design such that $|\Omega| = \theta, |V| = n$; every element of $\Omega$ is contained in exactly $\rho$ blocks and every $2$-subset of $\Omega$ is contained in exactly $\lambda$ blocks.
\end{definition}
Let $n$ denote the number of blocks. By using combinatorial double counting arguments it can be seen that for a BIBD, the following relations hold.
\begin{align}
n \alpha &= \theta \rho,~\mbox{and} \label{bibd_eq_1}\\
\rho(\alpha - 1) &= \lambda(\theta -1). \label{bibd_eq_2}
\end{align}
A $(\theta,\rho, \alpha,\lambda)-$BIBD can be used as the FR code in a DSS as long as $\beta$-recoverability is guaranteed for an appropriate $\beta$ (there are several instances when $\beta = 1$). These $(\theta,\rho, \alpha,\lambda)-$BIBDs include finite projective planes and affine planes. Table \ref{Steiner-Systems} contains a list of well-known families of Steiner systems. A $(a^2+a+1,a+1, a+1,1)-$BIBD is equivalent to a \textit{projective plane} of order $a$. Projective planes have interesting geometric properties that can be used in determining the corresponding file size. For instance, any two blocks of a $(a^2+a+1,a+1, a+1,1)-$BIBD share exactly one point and any two points are contained in exactly one block in a projective plane. The smallest example of a projective plane corresponding to $a=2$ is known as the \textit{Fano plane} and is depicted in Fig. \ref{fanoplane1}. For more information on the projective planes and affine planes we refer the Chapter 2 of \cite{stinson2004}.


One can use the Fano plane to design the inner FR code, by interpreting the points as symbols and the blocks as storage nodes. Suppose we first apply a $(7,6)$-MDS to the file. Then we can place the coded symbols on the storage nodes as depicted in Fig. \ref{fanoplane1}. 
Note that these storage nodes are obtained from the blocks. The obtained DSS has the property that any two nodes share exactly one symbol. Thus, using Theorem \ref{thm:inc_enc} contacting any three nodes recovers at least $9 - \binom{3}{2} = 6$ distinct symbols and hence the file. Furthermore, we can identify a set of three nodes whose intersection is empty, e.g., nodes 1, 2 and 4. Thus, the maximum file size this DSS can support is 6. An affine plane can be obtained by deleting one block and its all points from a projective plane. Hence, an \textit{affine plane} of order $a$ is equivalent to a $(a^2,a+1, a,1)-$BIBD. Here any two points are contained in exactly one block. However, there are in general pairs of blocks that do not have any points in common. More generally, a FR code can be obtained from Steiner systems.

\begin{definition}[\textbf{Steiner Systems}] A $S(t, \alpha, \theta)$ \textit{Steiner system} is a set $\Omega$ of $\theta$ elements and a collection of subsets of $\Omega$ of size $\alpha$ called blocks such that any $t$-subset of the symbol set $\Omega$ appears exactly one of the blocks.
\end{definition}
Steiner systems are examples of $t$-designs. A FR code is a \textit{$t$-design} if every $t$-subset of symbols is contained in exactly $\lambda$ nodes. The concept of $t$-designs can be viewed as a generalization of the concept of BIBDs. Naturally, a $S(2, \alpha, \theta)$ Steiner system is a $(\theta,\rho,\alpha,1)$-BIBD where
$$ \rho=\frac{\theta-1}{\alpha-1}, ~~~n= \frac{(\theta-1)\theta}{(\alpha-1)\alpha},~~~d=\alpha~~\mbox{and}~~\beta=1. $$
Thus, projective planes and affine planes are instances of Steiner systems. A given FR code can be put in one-to-one correspondence with an incidence matrix as explained below.
\begin{definition}[\textbf{Incidence Matrix of a FR Code}] An \textit{incidence matrix of a FR code} $\mathcal{C}=(\Omega,V)$ where $\Omega = [\theta]$ and $V = \{V_1, V_2, \dots, V_n\}$ is the $\theta\times n$  binary matrix $N$ defined by
$$N_{i,j}=\left\{
    \begin{array}{ll}
      1, & \hbox{if $i \in V_j$;} \\
      0, & \hbox{otherwise.}
    \end{array}
  \right.$$
\end{definition}
We shall sometimes refer to the FR code $\mathcal{C}$ by simply referring to its incidence matrix $N$. We will occasionally refer to the bipartite graph corresponding to the FR code as well. This is defined next.
\begin{definition}[\textbf{Bipartite graph of a FR Code}] \label{def:bipartite_gr} For a FR code $\mathcal{C}=(\Omega,V)$ where $\Omega = [\theta]$ and $V = \{V_1, V_2, \dots, V_n\}$ with incidence matrix $N$, we define its bipartite graph $G_b = (V_l \cup V_r, E)$ as follows. We associate the storage nodes in $V$ with the vertices $V_l$ and the points in $\Omega$ with the vertices $V_r$ so that $V_l$ and $V_r$ are disjoint. There exists an edge between $v \in V_l$ and and $u \in V_r$ if and only if $N(u,v) = 1$.
\end{definition}
\begin{table*}[t]
\small
\centering
\begin{tabular}{| p{1.5cm} |  p{3.5cm} |  p{1.0cm} | p{3.5cm} | p{1.5cm} | p{5.0cm} |}
\hline
Method & $(n,\theta, \alpha, \rho)$ & $\beta$ & $\mathcal{M}$ & Range of $k$ & Comments\\
\hline
\hline
Affine Resolvable Designs & $(q \rho, q^m, q^{m-1}, 1 \leq \rho \leq \frac{q^m-1}{q-1})$ & $q^{m-2}$ & $q^m\bigg{(} 1 - \bigg{(}1 - \frac{1}{q}\bigg{)}^k\bigg{)}$ & $1 \leq k \leq m$ & The file size exceeds the trivial lower bound $k\alpha-\beta \binom{k}{2}$. The parallel classes need to be chosen in a careful manner. If $q > m$, then we choose $\rho > m$ and if $q \leq m$, we choose $\rho \leq m$. \\ 
\hline
Hadamard Designs & $(8a-2,4a,2a,4a-1)$ & $a$ & $3a$ & $1 \leq k \leq 2$ & $\beta$ is not restricted to be a prime power. \\
\hline
\end{tabular}
\vspace{2mm}
\caption{\label{table:d_greater_k_beta_larger_1}Constructions where $d \geq k$ and $\beta > 1$. In many cases these construction parameters cannot be obtained by trivial $\beta$-expansion.}
\end{table*}

\begin{table*}[t]
\small
\centering
\begin{tabular}{| p{4.5cm} | p{2.0cm} |  p{2.0cm} |  p{0.75cm} | p{1.5cm} | p{1.5cm} | p{3.0cm} |}
\hline
Base Code & Method & $(n,\theta, \alpha, \rho)$ & $\beta$ & $\mathcal{M}$ & Range of $k$ & Comments\\
\hline
\hline
 $\mathcal{C}^T$ obtained via the transpose of a Steiner system $S(2,\tilde{\alpha},\tilde{\theta})$ with $\tilde{\rho} = \frac{\tilde{\theta}-1}{\tilde{\alpha} -1}$ with maximal arc of size $\tilde{\rho}+1$ &  Kronecker product of $\mathcal{C}^T$ with itself &  $(\tilde{n}^2, \tilde{\theta}^2, \tilde{\alpha}^2, \tilde{\rho}^2)$ & $\tilde{\alpha}$ &  $k\tilde{\rho}^2 - \tilde{\rho} \binom{k}{2}$ & $1 \leq k \leq \tilde{\rho}$ & There exist a Steiner system with $\tilde{\alpha}=3$ and a maximal arc if $\tilde{\theta} \equiv 3,7 \mod 12$. In several cases, the codes obtained via Kronecker product cannot be obtained by trivial $\beta$-expansion.\\
\hline
\end{tabular}
\vspace{2mm}
\caption{\label{table:kronecker} Constructions obtained via Kronecker product, where $d \geq k$.}
\end{table*}

\begin{table*}[t]
\small
\centering
\begin{tabular}{| p{3.5cm} | p{2.0cm} |  p{0.5cm} | p{1.5cm} | p{1.7cm} | p{5.0cm} |}
\hline
 Method & $(n,\theta, \alpha, \rho)$ & $\beta$ & $\mathcal{M}$ & Range of $k$ & Comments\\
\hline
\hline
Use undirected graph $\Gamma = (V,E)$, $|V| = n$, degree $s$ and girth $g$ & $(n,\frac{ns}{2},s,2)$ & 1 & $k(s-1)$ & $k = as+b$ & Need $s > b \geq a+1$ and $k \leq g$. Codes meet the minimum distance bound for locally recoverable codes. Since an $(s,g)$-cage minimizes the number nodes in the system, we will have highest possible code rate for this particular construction. Here we list some of the well-known infinite families of $(s,g)$-cage. \begin{itemize}
\item   $(s,3)$-cages are complete graphs on $s+1$ vertices.
\item   $(s,4)$-cages are complete bipartite graphs on $2s$ vertices.
\item   When $s-1$ is a prime power $(s,6)$-cages can be obtained from incidence graphs of projective planes.
\item   When $s-1$ is a prime power $(s,8)$-cages and $(s,12)$-cages can be obtained from incidence graphs of generalized polygons.
\end{itemize} For more see the survey \cite{Exoo08}.\\
\hline
Use $l$ copies of FR code $\mathcal{C} = (\Omega,V)$ with parameters $(\tilde{n},\tilde{\theta},\tilde{\alpha},\tilde{\rho})$ such that any $\tilde{\Delta}+1$ nodes cover $\tilde{\theta}$ symbols. $|V_i \cap V_j| \leq \tilde{\beta}$ for $i \neq j$. Parameters satisfy $(\tilde{\rho}-1)\tilde{\alpha}\tilde{\theta} - (\tilde{\theta}+\tilde{\alpha})(\tilde{\Delta}-1)\tilde{\beta} \geq 0$. & $(l\tilde{n},l\tilde{\theta},\tilde{\alpha},\tilde{\rho})$ & $\tilde{\beta}$ & $t\tilde{\theta}+\tilde{\alpha}$ & $t\tilde{n}+1$ & Codes meet the minimum distance bound for locally recoverable codes with exact and uncoded repair ({\it cf.} Section \ref{sec:locally_recoverable_dss}).\\
\hline
\end{tabular}
\vspace{2mm}
\caption{\label{table:local_fr} Constructions of local FR codes where $d < k$.}
\end{table*}

\begin{definition}[\textbf{Transposed FR Code}]
For a FR code $C$ with incidence matrix $N$, the code specified by $N^T$ is called \textit{transposed FR code} of $C$ and denoted by $C^T$ if the design obtained from $N^T$ is $\beta$-recoverable for some $\beta$. 
\end{definition}
Note that, in the transposed code, the roles of the storage nodes and the symbols are reversed. An infinite family of transposed codes can be obtained from Steiner systems with $t=2$. In such Steiner systems any pair of symbols is contained in exactly one node which implies that any pair of nodes in the transposed design share exactly one symbol. This in turn means that the transposed design is 1-recoverable. 

Incidence matrices with appropriate parameters can be combined via operations such as the Kronecker product to obtain new matrices (equivalently FR codes) with a new set of parameters. We use this technique extensively in the sequel to generate families of FR codes.
\begin{definition}[\textbf{Kronecker Product}]
If $A$ is an $m$-by-$r$ matrix and $B$ is a $p$-by-$q$ matrix, then the \textit{Kronecker product} $A\otimes B$ is the $mp$-by-$rq$ matrix
\[
\left(
\begin{array}{cccc}
 a_{11}B &a_{12}B &\cdots&a_{1r}B  \\
 a_{21}B &a_{22}B &\cdots&a_{2r}B  \\
 \vdots&\vdots&\vdots&\vdots\\
   a_{m1}B &a_{m2}B &\cdots&a_{mr}B
\end{array}
\right).
\]
 \end{definition}

Let $N_1$ and $N_2$ be two incidence matrices of  FR codes  $\mathcal{C}_1=(\Omega_1,V_1)$ and $\mathcal{C}_2=(\Omega_2,V_2)$ with parameters $(n_1, \theta_1, \alpha_1, \rho_1)$ and  $(n_2, \theta_2, \alpha_2, \rho_2)$ respectively. Let $c_1, \cdots, c_{n_1}$  be the $n_1$ columns of $N_1$ and $d_1, \cdots, d_{n_2}$  be the $n_2$ columns $N_2$ . A new FR code can be obtained from the old one by the following incidence matrix
$$\bar{N}=
\left[
\begin{matrix}
 N_1 \otimes d_1& N_1 \otimes d_2&\cdots&N_1 \otimes d_{n_2}
\end{matrix}
\right].$$

We can find an appropriate permutation matrix $P$ such that  the matrix $\bar{N}P$ is equal to the Kronecker product of $N_1$ and $N_2$. Note that  matrix $P$ reorders the columns of $\bar{N}$.


We can obtain a DSS by replicating the symbols of another DSS via the Kronecker product. In the subsequent discussion we will refer to this technique for obtaining codes with $\beta > 1$ as \textit{trivial $\beta$-expansion}.
\begin{definition} \label{def:trivial_beta} [\textbf{Trivial $\beta$-expansion}] Let $N$ be incidence matrix of a FR code $C$ with parameters $(n, \theta, \alpha, \rho)$ with $\beta=1$. Let $\mathds{1}$ be the $m \times 1$ all-ones column vector. The FR code $\hat{C}$ obtained from $\hat{N}=N \otimes \mathds{1}$ which has parameters $(n, \theta m, \alpha m, \rho)$ is called a trivial $\beta$-expansion of the code $C$ with $\beta=m$.
\end{definition}
In the remainder of this section, we discuss some illustrative examples of FR codes. Our first example is a code with $\beta > 1$ that cannot be obtained by trivial $\beta$-expansion.
\begin{example}[\textbf{A Non-trivial Code with $\beta > 1$}]
  \label{eg:non_trivial_beta}
 Consider the DSS shown in Fig. \ref{fig:non_trivial_large_beta}. The ten symbols are obtained by using an outer $(10,6)$ MDS code followed by the FR code illustrated in Fig. \ref{fig:non_trivial_large_beta}. Note that the DSS can recover from a single node failure by downloading two packets each from two nodes in the same column; hence $d=2$. 
 Moreover, any two nodes share 0,1, or 2 symbols in common which implies that any two nodes recover at least $6$ symbols, thus $k=2$. According to the Singleton bound $d_{\min} \leq 15 - \lceil \frac{6}{4} \rceil + 1 = 14$. 
The system requires only two surviving nodes to recover the file thus the code is resilient up to 13 failures (since $k=2$) and thus meets the Singleton bound.  However, this code (with $\beta = 2$) cannot be arrived at simply by replication. To see this we note that if this were true, the original DSS with $\beta = 1$ must correspond to a storage capacity of $2$ and have a number of symbols which is 5. However, this means that there can be at most $\binom{5}{2} = 10$ distinct storage nodes of capacity two. Thus our design with $n = 15$ cannot be obtained this way. 
\end{example}

The idea underlying Example \ref{eg:non_trivial_beta} can be formalized as follows.
\begin{observation}[\textbf{Non-trivial FR Codes with $\beta>1$}] \label{non_trivial_FR} A FR code with parameters $(n, \theta m, \alpha m, \rho)$, $\beta=m$ and distinct storage nodes cannot be obtained from a trivial $\beta$-expansion if $n>\binom{ \theta }{\alpha}$.
\end{observation}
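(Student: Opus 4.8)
The plan is a short proof by contradiction that simply unwinds Definition \ref{def:trivial_beta}. Suppose a FR code $\hat C$ with parameters $(n,\theta m,\alpha m,\rho)$, $\beta = m$, and pairwise distinct storage nodes were obtainable by a trivial $\beta$-expansion. By Definition \ref{def:trivial_beta}, a trivial $\beta$-expansion by a factor $m'$ yields a code whose normalized repair bandwidth equals $m'$; since $\hat C$ has $\beta = m$, necessarily $m' = m$. Matching the parameter tuple $(n,\theta m,\alpha m,\rho)$ against that of an $m$-fold expansion then forces the base code $C=(\Omega,V)$ to have parameters $(n,\theta,\alpha,\rho)$ and $\beta = 1$, with $\theta\times n$ incidence matrix $N$ satisfying $\hat N = N\otimes\mathds{1}$, where $\mathds{1}$ is the $m\times 1$ all-ones vector and $\hat N$ is the incidence matrix of $\hat C$.

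The first key step is to note that a trivial $\beta$-expansion never identifies two distinct storage nodes. Writing $c_1,\dots,c_n$ for the columns of $N$, the columns of $\hat N$ are exactly $c_1\otimes\mathds{1},\dots,c_n\otimes\mathds{1}$; since tensoring with $\mathds{1}$ merely repeats each entry of a column $m$ times, $c_j\otimes\mathds{1} = c_{j'}\otimes\mathds{1}$ holds if and only if $c_j = c_{j'}$. Hence the storage nodes of $\hat C$ being pairwise distinct forces the columns of $N$, i.e.\ the nodes $V_1,\dots,V_n$, to be pairwise distinct subsets of $\Omega$.

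The second step is the elementary counting bound: each $V_i$ is an $\alpha$-subset of the $\theta$-element point set $\Omega$, and there are only $\binom{\theta}{\alpha}$ such subsets, so $n = |V| \le \binom{\theta}{\alpha}$, contradicting the hypothesis $n > \binom{\theta}{\alpha}$. This proves the observation; as a sanity check, Example \ref{eg:non_trivial_beta} is exactly the instance $\theta = 5$, $\alpha = 2$, $m = 2$, for which $n = 15 > 10 = \binom{5}{2}$.

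I do not anticipate a genuine obstacle here: the argument is essentially two lines once the setup is fixed. The only points needing care are the bookkeeping that ties the candidate base code's point-set size to $\theta$ and node size to $\alpha$ (so that "$\beta = m$" is what makes $\binom{\theta}{\alpha}$ the relevant bound), and the trivial but necessary verification that tensoring a $0/1$ column with $\mathds{1}$ is injective, so that distinctness of nodes is inherited by the base code.
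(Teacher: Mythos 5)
Your proof is correct and follows essentially the same route as the paper, which justifies Observation \ref{non_trivial_FR} by the counting argument sketched in Example \ref{eg:non_trivial_beta}: a trivial $\beta$-expansion with $\beta=m$ would force a base code with $\theta$ points, node size $\alpha$, and $n$ distinct nodes, contradicting $n>\binom{\theta}{\alpha}$. Your added bookkeeping (the expansion factor must equal $m$, and $c_j\otimes\mathds{1}$ preserves distinctness of columns) is exactly the implicit content of the paper's argument, so nothing further is needed.
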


Next, we demonstrate an example of a locally recoverable DSS, i.e., a system where $d < k$ that is constructed using the Kronecker product method. 

\begin{example}[\textbf{Locally Recoverable Code Using Kronecker Product Technique}] Let $\mathcal{C}=(\Omega,V)$ be a FR code with $\Omega=\{1,2,3\}$ and  $V=\{V_1=\{1,2\},V_2=\{2,3\}, V_3=\{1,3\}\}$ with incidence matrix $N$. The code obtained from $\bar{N}=I \otimes N$ is presented in Fig. \ref{fig:kron_local_eg} where $I$ denotes the $3 \times 3$ identity matrix. Suppose that the outer MDS code has parameters $(9,5)$, so that $\theta = 9, \calM = 5$. Consider contacting any of the four nodes depicted in Fig. \ref{fig:kron_local_eg}. These nodes will fall into one of the three columns in the figure. So, there are three cases we need to examine.
\begin{itemize}
\item \textbf{Case (a):} Two nodes can be chosen from one of the columns and one from each of the rest. The union of these nodes has a cardinality of $7$.
\item \textbf{Case (b):} We first select two columns and two nodes within each column. In this case the size of the union is $6$.
\item \textbf{Case (c):} Finally, we can select two columns and choose three nodes in one column and one node in the other column. In this case the cardinality of the union is $5$.
\end{itemize}
Thus, it is evident that contacting any $k=4$ nodes will recover at least $5$ symbols. Note that a failed node can be recovered by contacting the remaining two nodes in its column by downloading one packet from each of them. Thus, $d = 2 < k$. This implies that the code is locally recoverable. 
By applying a similar case analysis for the failure patterns we can conclude that the code is resilient to 5 failures and it  meets the minimum distance bound in Lemma \ref{local_bound}.

\end{example}

\subsection{Summary of Contributions}
In this work we present several constructions of FR codes. The contributions of our work can be summarized as follows.
We construct a large class of FR codes for $d \geq k$ from combinatorial structures such as grids, mutually orthogonal Latin squares (MOLS), resolvable designs and Hadamard designs. These were first presented in the literature in the conference version of the current manuscript \cite{olmez2012}. While \cite{el2010} presented constructions based on Steiner systems, our work presents a rigorous analysis of the file size of the corresponding DSS. The Kronecker product technique for generating new DSS from existing ones is also new \cite{olmez2013_2}. Furthermore, our conference paper \cite{olmez2013} was the first to present locally recoverable FR codes where $d < k$.

Tables \ref{table:d_greater_k_beta_1} -- \ref{table:local_fr} contain a description of the various constructions and the corresponding DSS parameter values that can be achieved by these constructions. We defer an in-depth discussion of these parameters to the respective sections. However, we highlight the key contributions of our work by referring to appropriate rows of Tables \ref{table:d_greater_k_beta_1} -- \ref{table:local_fr} below. Specific details about the construction techniques can be found in the corresponding sections of the paper.
\begin{itemize}
\item We construct a large class of FR codes based on resolvable designs \cite{stinson2004} where the repetition degree ($\rho$) of the symbols can be varied in an easy manner (see Table \ref{table:d_greater_k_beta_1} (rows 3 -- 5) and Table \ref{table:d_greater_k_beta_larger_1}). The constructions of \cite{el2010} lack this flexibility as they are mostly based on Steiner systems where the repetition degree is usually fixed by the construction.
\item We construct FR codes where $\beta > 1$, i.e., the new node downloads more than one packet from the $d$ surviving nodes. We emphasize that starting with a FR code with $\beta = 1$, it is trivially possible to arrive at a code with $\beta > 1$ by trivial $\beta$-expansion ({\it cf.} Definition \ref{def:trivial_beta}). However, such a strategy only results in a limited range of system parameters that can be achieved. We present several codes (see Tables \ref{table:d_greater_k_beta_larger_1} and \ref{table:kronecker}) that achieve certain parameter ranges that cannot be achieved in a trivial manner.
\item Determining the file size that can be supported by a given FR code turns out be challenging. 
Much of the literature in combinatorial designs only discusses the pairwise overlaps between the content of the different storage nodes. However, the file size depends on the union of all subsets of storage nodes of size $k$. In this work we determine the file sizes for most of our constructions. In particular, we demonstrate a family of FR codes whose file size is strictly larger than a simple lower bound that is obtained by applying the inclusion-exclusion principle (see row 1, Table \ref{table:d_greater_k_beta_larger_1}). We also determine the file size for a large class of codes obtained from Steiner systems that were originally considered in \cite{el2010} (see row 2, Table \ref{table:d_greater_k_beta_1}). Several of our constructions are shown to meet the Singleton bound for specific file sizes, which demonstrates their optimality.
\item We present the Kronecker product as a technique for constructing new FR codes from existing ones (Table \ref{table:kronecker}) and analyze the properties of codes thus obtained.
\item In this work, we propose a large family of {\it locally recoverable} FR codes where $d < k$, i.e., the repair degree is strictly smaller than the number of nodes contacted for recovering the stored file. We derive an appropriate minimum distance bound for our class of codes that enjoy local, exact and uncoded repair, and demonstrate constructions that meet these bounds (Table \ref{table:local_fr}).
\end{itemize}

\subsection{Discussion of related work}
The work of Dimakis et al. \cite{dimakis2010} initiated the work on regenerating codes, by demonstrating the tradeoff between the storage capacity of nodes and the repair bandwidth. Their work considered functional repair, where the new node is functionally equivalent to the failed node and demonstrated that random network coding suffices for achieving this tradeoff. Following this, several papers \cite{rashmi2011, rashmi2009, SuhR11, tian2013, papailiopoulos2012, shah2012, tamo2011, el2010, olmez2012} considered the construction of exact repair regenerating codes, where the new node is an exact copy of the failed node. In most cases, these constructions either operate at the minimum storage regenerating (MSR) point \cite{rashmi2011,shah2012,papailiopoulos2011,tamo2011,SuhR11}  or the minimum bandwidth regenerating (MBR) point \cite{rashmi2011, rashmi2009, shah_et_al12_rep_by_transfer, el2010, olmez2012}. More recently, codes with local repair have been investigated where the metric for repair is the number of surviving nodes that are contacted for repair \cite{gopalan2012,oggier2011,papD12,kamath_et_al_14,rawatKSV14,olmez2013}.

Constructions of repair-by-transfer codes, where node repair is performed simply by downloading symbols from surviving nodes was first presented in the work of \cite{shah_et_al12_rep_by_transfer} where they constructed a repair-by-transfer MBR code with $d = n-1$. Repair by transfer codes have also appeared in \cite{ShumH12,HuLS13}. The work of \cite{el2010} also considered such codes (termed ``exact and uncoded repair") but with a repair degree that can be strictly smaller than $n-1$. The repair operates by contacting a specific set of $d$ surviving nodes and is hence table based. Reference \cite{el2010} introduced the system architecture whereby an MDS code is applied to a file consisting of $\calM$ symbols to obtain $\theta$ symbols. These symbols are then placed onto the storage nodes and this placement is referred to as the fractional repetition (FR) code. The codes in \cite{el2010}, were derived from Steiner systems. They provided lower and upper bounds on the corresponding file sizes. Following this, the work of \cite{koo2011} constructed FR codes from bipartite cages. These codes enjoy the property that the node storage capacity is much larger than the replication degree. For the given parameters they design codes with the smallest number of storage nodes. In \cite{koo2011}, they used MOLS to construct bipartite cages and the codes thus obtained are different from ours. In our construction we obtain the storage nodes directly from the set of MOLS and also obtain net FR codes. Reference \cite{ernvall12} presents necessary and sufficient conditions on the existence of a FR code with certain parameters; however, it does not consider the issue of determining the file size for a given $k$. 

The work of \cite{silberstein2014} presents several FR code constructions based on combinatorial structures including regular and biregular graphs, graphs with a given girth, transversal designs, projective planes and generalized polygons. They consider codes where $\alpha = d \geq k$ and $\beta = 1$ and show that the file size of their constructions meets the upper bound presented in \cite{el2010} for $k \leq d$. This work is closely related to the content of Section \ref{sec:resolv_design_dss} of our work. Their construction of FR codes from transversal designs treats the blocks of the transversal design as symbols. Thus, it can be considered as working with the transpose of the incidence matrix corresponding to the original transversal design. Our FR codes in Section \ref{sec:resolv_design_dss} are obtained from nets which can also be viewed as transposes of transversal designs. However, as discussed in Section \ref{sec:discuss_code_params}, the analysis of file size for our constructions cannot be obtained from the results in \cite{silberstein2014}. Our work differs in the sense that we present constructions with non-trivial $\beta$ values, Kronecker product constructions and local FR codes.

The problem of local repair for scalar codes ($\alpha=1$) was first considered in \cite{gopalan2012}. This was extended to vector codes ($\alpha>1$) in \cite{kamath_et_al_14,papD12}. References \cite{kamath_et_al_14,papD12} study the tradeoff between locality and minimum distance and corresponding code constructions. In \cite{kamath_et_al_14}, the authors presented constructions that use the repair-by-transfer MBR codes of \cite{shah_et_al12_rep_by_transfer} as individual components. Local codes were also studied in \cite{kamath2013} where the design consists of an outer Gabidulin encoder followed by inner local MBR encoders. This work (see Construction III.1 in \cite{kamath2013}) also provides examples of local FR codes by using $t$-designs. However, the achievable parameters are limited as $k$ needs to be chosen to be at most $t$ and explicit constructions of $t$-designs for large $t$ are largely unknown (when $t \geq 3$ there are only finitely many known explicit constructions \cite{colbourn2010}). In Section \ref{sec:locally_recoverable_dss} we focus on regenerating codes that allow a repair process in a local manner by simply downloading packets from the surviving nodes. We provide an upper bound for the minimum distance and constructions of codes which meet this bound. Our constructions use local FR codes instead of repair by transfer MBR codes. 
We also note that our codes are quite different from those that appear in \cite{kamath2013, kamath_et_al_14} and allow for a larger range of code parameters. Regenerating codes using $t$-designs were also presented in \cite{tian2013}. The architecture of the codes consists of a layered erasure correction structure that ensures a simple decoding process. These codes are showed to be achieve performance better than time-sharing between MBR and MSR points.

\section{Construction of FR codes when $k \leq d$}
\label{sec:resolv_design_dss}

In this section we present the construction of FR codes where $d \geq k$. As discussed in Example \ref{eg:beta_rec_cond_fr} it is possible that certain set systems do not satisfy the property of $\beta$-recoverability and hence cannot be used to construct FR codes. However, there are a large class of combinatorial designs that can be used to construct FR codes. In particular, we present various constructions of FR codes that are derived from balanced incomplete block designs (BIBDs) and resolvable designs. Our constructions address several issues that exist with prior constructions in the literature. For instance, resolvable designs allow the repetition degree of the symbols in the FR code to be varied in a simple manner, a flexibility that prior constructions typically lack. We present a large class of codes that cannot be obtained via trivial $\beta$-expansion.

Our first set of constructions are FR codes based on Steiner systems with $t=2$ (that are BIBDs) which have been previously considered in the literature \cite{el2010}. However, to our best knowledge, prior work does not provide results on the file size of the constructions. In the discussion below, we present a certain class of Steiner systems for which we can determine the file size of the FR codes obtained from their transpose. To demonstrate the difficulty of determining the file size for a general Steiner system, we first discuss two non-isomorphic Steiner systems with the same parameter values that result in FR codes with different file sizes. This demonstrates that file size calculations for Steiner systems cannot be performed just based on the system parameters. Accordingly, we consider Steiner systems that have maximal arcs \cite{quattrocchi1993,greig2003}. It turns out that we can determine the file size of the corresponding transposed codes.


\subsection{FR codes from Steiner systems}
We consider Steiner systems $S(2,\alpha, \theta)$. Note that the repetition degree of any symbol is $\rho=\frac{\theta-1}{\alpha-1}$ and any two distinct symbols are contained in exactly one node. Consider the FR code $\mathcal{C}=(\Omega,V)$ obtained from it and its transpose.

In general, it is a challenging task to find the file size for a given FR code. For codes obtained from Steiner systems and their transposes, lower bounds based on the inclusion-exclusion principle were presented in \cite{el2010}. However, it is important to note that the file size depends critically on the structure of the Steiner system, i.e., two Steiner systems with the same parameters can have different file sizes. To see this, consider two non-isomorphic Steiner systems $S(2,3,15)$ denoted $\mathcal{D}_1$ and $\mathcal{D}_2$; the nodes of these designs are provided in Tables \ref{Design1} and \ref{Design2}. These designs can also be found in \cite{colbourn2010}.

Let $S$ be a subset of symbols of the design such that no $3$-subset of $S$ is contained in a node. 
By checking all subsets of the symbol set one can observe that the maximum size of $S$ in $\mathcal{D}_1$ and $\mathcal{D}_2$ equals $6$ ($\{0,1,3,6,7,9\}$) and $8$ ($\{1, 2, 4, 6, 7, 8, 9, 13\}$) respectively. 

This observation results in different file sizes in the codes obtained from the transposes of $\mathcal{D}_1$ and $\mathcal{D}_2$, denoted $\mathcal{D}_1^T$ and $\mathcal{D}_2^T$ respectively. In fact for $k=7$, the design $\mathcal{D}_2^T$ yields a code which has file size $\mathcal{M}_2=28$ which matches the inclusion-exclusion lower bound given by $7 \times 7 - \binom{7}{2}$. However, the design $\mathcal{D}_1^T$ yields a code with file size $\mathcal{M}_1=29$ which is strictly larger\footnote{This example corrects an error in Lemma 11 of \cite{el2010}.}.

We now elaborate on the role of $S$ in the above example. Firstly, note that if $\mathcal{D}_i$ is a Steiner system, then any two storage nodes in $\mathcal{D}_i^T$ intersect in one symbol. Consider the corresponding transposed codes $\mathcal{D}_1^T$ and $\mathcal{D}_2^T$, where the roles of symbols and nodes is now reversed. As $S$ for $\mathcal{D}_2$ is of size $8$, it implies that we can pick $k=7$ storage nodes in $\mathcal{D}_2^T$ such that the intersection of any three storage nodes is empty (owing to the definition of $S$). Thus, upon applying the inclusion-exclusion principle, we obtain the file size to be $7 \times 7 - \binom{7}{2} = 28$.

In contrast, the maximum size of $S$ in $\mathcal{D}_1$ is $6$. Thus, for any set of $k=7$ storage nodes in $\mathcal{D}_1^T$ there is at least one three-way intersection that is non-empty. Upon exhaustive enumeration, one can realize that the file size in this case is $29$ which is strictly higher than $28$.

The notion of the set $S$ introduced above can be formalized in terms of a maximal arc in Steiner systems. For Steiner systems that possess a maximal arc, we can therefore determine the file size. In addition, prior results in \cite{quattrocchi1993,greig2003}, demonstrate that such maximal arcs exist in a large class of Steiner systems. In the discussion below, we make these arguments in a formal manner.

\begin{table}[t]
\begin{center}
\begin{tabular}{|c| c | c | c | c |}
  \hline
 $\left\{0, 1, 2\right\}$ & $\left\{0, 3, 4\right\}$ & $\left\{0, 5, 6\right\}$& $\left\{0, 8, 7\right\}$ &$\left\{0, 9, 10\right\}$\\
  \hline
 $\left\{0, 11, 12\right\}$ & $\left\{0, 13, 14\right\}$ &$\left\{1, 3, 5\right\}$ & $\left\{1, 4, 7\right\}$ & $\left\{8, 1, 6\right\}$\\
 \hline
$\left\{1, 11, 9\right\}$ & $\left\{1, 10, 13\right\}$ & $ \left\{1, 12, 14\right\}$ & $\left\{9, 2, 3\right\}$ & $\left\{2, 4, 6\right\}$\\
 \hline
$ \left\{2, 10, 5\right\}$& $\left\{2, 14, 7\right\}$ & $\left\{8, 2, 12\right\}$ & $\left\{2, 11, 13\right\}$ & $\left\{3, 11, 6\right\}$\\
 \hline
$\left\{3, 12, 7\right\}$ & $\left\{8, 3, 13\right\}$ & $\left\{10, 3, 14\right\}$ & $\left\{4, 5, 13\right\}$ &$\left\{8, 9, 4\right\}$\\
 \hline
  $\left\{4, 10, 12\right\}$ &$\left\{11, 4, 14\right\}$ &$\left\{11, 5, 7\right\}$ & $\left\{8, 5, 14\right\}$ & $\left\{9, 12, 5\right\}$\\
 \hline
$\left\{10, 6, 7\right\}$ & $\left\{9, 6, 14\right\}$ & $\left\{12, 13, 6\right\}$ & $\left\{9, 13, 7\right\}$ & $\left\{8, 10, 11\right\}$\\
 \hline
\end{tabular}
\end{center}
\caption{\label{Design1} Nodes of the Steiner system $\mathcal{D}_1$}
\end{table}

\begin{table}[t]
\begin{center}
\begin{tabular}{|c| c | c | c | c |}
  \hline
 $\left\{1, 11, 6\right\}$ &$\left\{1, 2, 5\right\}$ & $\left\{2, 3, 6\right\}$ & $\left\{9, 5, 6\right\}$ &$ \left\{3, 11, 5\right\}$\\
  \hline
 $\left\{7, 13, 5\right\}$& $\left\{11, 4, 13\right\}$ & $\left\{2, 12, 7\right\}$ & $\left\{3, 4, 7\right\}$ & $\left\{8, 4, 5\right\}$ \\
 \hline
$\left\{8, 11, 7\right\}$ & $\left\{4, 12, 6\right\}$ & $\left\{8, 6, 14\right\}$ & $\left\{12, 5, 14\right\}$ & $\left\{8, 3, 13\right\}$\\
 \hline
 $\left\{8, 9, 12\right\}$ & $\left\{9, 10, 13\right\}$ &$\left\{1, 12, 13\right\}$ &$\left\{0, 9, 7\right\}$ & $ \left\{9, 2, 11\right\}$\\
 \hline
  $\left\{0, 8, 2\right\}$ &$\left\{9, 4, 14\right\}$ &$\left\{10, 11, 14\right\}$ & $\left\{0, 11, 12\right\}$ & $\left\{0, 3, 14\right\}$\\
 \hline
 $\left\{8, 1, 10\right\}$ & $\left\{10, 3, 12\right\}$ & $\left\{1, 3, 9\right\}$ & $\left\{0, 10, 5\right\}$ &$\left\{0, 13, 6\right\}$ \\
 \hline
$\left\{1, 14, 7\right\}$&$\left\{2, 4, 10\right\}$& $\left\{2, 13, 14\right\}$ & $\left\{0, 1, 4\right\}$ & $\left\{10, 6, 7\right\}$\\
 \hline
\end{tabular}
\end{center}
\caption{\label{Design2} Nodes of the Steiner system $\mathcal{D}_2$ }
\end{table}

\begin{definition}[\textbf{$s$-arc}]  Let $(\Omega, V)$ be a design. A subset $S \subset \Omega$ with $|S| = s$ is called an \textit{$s$-arc} if for each node $V_i \in V$   either $|V_i \cap S| = 0$ or $|V_i \cap S| = 2$ holds. \end{definition}

The definition of $s$-arc implies that any three symbols from $S$ are not contained in any node in $V$. The largest set $S$ with this property is called a \textit{maximal arc} of the design \cite{assmus1992}. It turns out that we can determine the file size for FR codes obtained from transposes of Steiner systems with nontrivial maximal arcs.
\begin{table*}[t]
\begin{center}
\begin{tabular}{|c| c | c | c |}
  \hline
 $\{1, 4, 7, 8\}$ & $\{0, 2, 8, 9\}$ & $\{1, 3, 5, 9\}$ & $\{2, 4, 5, 6\}$\\
  \hline
$\{0, 3, 6, 7\}$ & $\{6, 9, 12, 13\}$ & $\{5, 7, 13, 14\}$ & $\{6, 8, 10, 14\}$\\
 \hline
$\{7, 9, 10, 11\}$ & $\{5, 8,11, 12\}$ & $\{2, 3, 11, 14\}$ & $\{3, 4, 10, 12\}$\\
 \hline
$ \{0, 4, 11, 13\}$ & $\{0, 1, 12, 14\}$ & $\{1, 2, 10, 13\}$ & $\{0, 5, 10, 15\}$\\
 \hline
$\{1, 6, 11, 15\}$ & $\{2, 7, 12, 15\}$ & $\{3, 8,13, 15\}$ & $\{4, 9, 14, 15\}$\\
 \hline
\end{tabular}
\end{center}
\caption{\label{Design3} Nodes of the Steiner system $S(2, 4,16)$}
\end{table*}
\begin{table*}[t]
\begin{center}
\begin{tabular}{|c| c | c | c |}
  \hline
 $\{1, 4, 12, 13, 15\}$ & $\{0, 2, 13, 14, 16\}$ & $\{1, 3, 10, 14, 17\}$ & $\{2, 4, 10, 11, 18\}$\\
  \hline
$ \{0, 3, 11, 12, 19\}$ & $\{2, 3, 6, 9, 15\}$ & $\{3, 4, 5, 7, 16\}$ & $\{0, 4,6, 8, 17\}$\\
 \hline
$\{0, 1, 7, 9, 18\}$ &  $\{1, 2, 5, 8, 19\}$ & $\{7, 8, 11, 14, 15\}$ & $\{8,9, 10, 12, 16\}$\\
 \hline
$ \{5, 9, 11, 13, 17\}$ & $\{5, 6, 12, 14, 18\}$& $\{6, 7, 10, 13,19\}$ &$\{15, 16, 17, 18, 19\}$\\
 \hline
\end{tabular}
\end{center}
\caption{\label{Design4} Transposed code obtained from the Steiner system $S(2,4,16)$}
\end{table*}

For a maximal arc $S$, consider a symbol $q \in S$. In this case there are $s-1$ pairs of symbols $(p,q)$ such that $p,q \in S$. Since $\mathcal{C}$ is a Steiner system and $S$ is a maximal arc there are $s-1$ distinct nodes in $V$ where each of these pairs occurs. Now, the repetition degree of the system is $\rho$. Thus, there are $\rho - (s-1)$ nodes which contain the symbol $q$ but no other symbol from $S$. Based on our assumption, each node in $V_i \in V$ is such that either $|V_i \cap S| = 0$ or $|V_i \cap S| = 2$. Thus, it has to be the case that $s = \rho + 1$.

\begin{lemma}\label{lemma:maximal_arc}
Let $\mathcal{C} = (\Omega,V)$ be a FR code derived from a Steiner system $S(2, \alpha, \theta)$ with $\rho=\frac{\theta-1}{\alpha-1}$, such that it has a maximal arc of size $\rho + 1$. Then, the transposed FR code $\mathcal{C}^T$ is such that its code rate is $\frac{k \rho - \binom{k}{2}}{n\alpha}$ for $1\leq k \leq \rho + 1$.
\end{lemma}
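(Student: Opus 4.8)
The plan is to compute the file size $\calM$ of the transposed code $\calC^T$ for each $k$ in the range $1 \le k \le \rho+1$ by showing that any $k$ nodes of $\calC^T$ cover exactly $k\rho - \binom{k}{2}$ symbols, and then invoke the definition $R_{\calC^T} = \calM/(n\alpha)$. Recall that in $\calC^T$ the roles of points and blocks are swapped: the nodes of $\calC^T$ are indexed by the $\theta$ points of the Steiner system, and the symbols of $\calC^T$ are indexed by the $n$ blocks. Two nodes of $\calC^T$ correspond to two points $p, q$ of $S(2,\alpha,\theta)$, and their common symbols correspond to the blocks containing both $p$ and $q$; since any $2$-subset of points lies in exactly one block, any two nodes of $\calC^T$ intersect in exactly one symbol. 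Likewise, each node of $\calC^T$ has cardinality $\rho$ (the number of blocks through a point).

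First I would handle the lower-bound direction via inclusion-exclusion. Fix any $k$ nodes of $\calC^T$, i.e., any $k$ points $p_1, \dots, p_k$. By \eqref{eq:inc_enc_lower_bd} of Theorem \ref{thm:inc_enc}, the number of symbols they cover is at least $\sum_i \rho - \sum_{i<j} 1 = k\rho - \binom{k}{2}$. The substantive direction is the matching upper bound: I must exhibit, for each $k \le \rho+1$, a choice of $k$ points covering \emph{exactly} $k\rho - \binom{k}{2}$ symbols, and argue no choice covers fewer — actually, since $\calM$ is a minimum over all $k$-subsets, I only need to show \emph{some} $k$-subset achieves $k\rho-\binom{k}{2}$. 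This is exactly where the maximal arc $S$ (of size $\rho+1$, as established in the paragraph preceding the lemma) enters: choose the $k$ points $p_1,\dots,p_k$ to be any $k$ elements of $S$. Then by the defining property of an arc, no block contains three of them, so every three-way intersection $V_i \cap V_j \cap V_\ell$ in $\calC^T$ is empty. Hence the inclusion-exclusion sum in Theorem \ref{thm:inc_enc} truncates after the pairwise terms, giving coverage exactly $\sum_i \rho - \sum_{i<j}|V_i \cap V_j| = k\rho - \binom{k}{2}$, since each pairwise intersection has size exactly $1$.

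Combining the two directions, $\calM = \min_{|I|=k}|\cup_{i\in I} V_i| = k\rho - \binom{k}{2}$ for every $1 \le k \le \rho+1$ (the case $k=1$ being trivially $\rho$), and therefore $R_{\calC^T} = \frac{k\rho - \binom{k}{2}}{n\alpha}$ as claimed. The main obstacle is purely bookkeeping: one must be careful that the arc has size at least $k$ so that a $k$-subset of $S$ exists — this is guaranteed precisely by $k \le \rho+1 = |S|$, which is why the stated range is exactly this — and one must double-check the parameter identifications ($\alpha$ in the statement of $R_{\calC^T}$ is the block size of the original Steiner system, which is the repetition degree of $\calC^T$, and $n$ is its number of blocks, i.e., the number of nodes of $\calC^T$). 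A minor subtlety worth spelling out is why the maximal arc necessarily has size exactly $\rho+1$ rather than something smaller; but this was already derived in the paragraph immediately before the lemma via a counting argument on the pairs through a fixed point of $S$, so I would simply cite that.
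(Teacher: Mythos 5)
Your proposal is correct and follows essentially the same route as the paper's proof: establish that in $\mathcal{C}^T$ any two nodes share exactly one symbol and each node has size $\rho$, use the inclusion-exclusion lower bound $k\rho - \binom{k}{2}$ for arbitrary $k$-subsets, and achieve it exactly by choosing the $k$ nodes corresponding to points of the maximal arc, where all intersections of three or more nodes are empty. The additional bookkeeping you spell out (the arc size $\rho+1$ covering the stated range of $k$, and the parameter identifications in $R_{\mathcal{C}^T}$) matches what the paper establishes in the paragraph preceding the lemma.
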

\begin{proof}
In the transposed code $\mathcal{C}^T$, consider any subset of nodes of size $k$, where $1 \leq k \leq \rho + 1$. As any two symbols in the original code $\mathcal{C}$ occur in exactly one node of $\mathcal{C}$ it holds that two nodes $V_1$ and $V_2$ in $\mathcal{C}^T$ are such that $|V_1 \cap V_2| = 1$. In addition, the storage capacity of the nodes in $\mathcal{C}^T$ is equal to $\rho$.

Using the inclusion-exclusion principle ({\it cf.} Theorem \ref{thm:inc_enc}), we observe that these nodes cover at least $k\rho - \binom{k}{2}$ symbols in $\mathcal{C}^T$. Now we pick a set of $k$ nodes in $\mathcal{C}^T$ that correspond to a subset of the maximal arc $S$ in $\mathcal{C}$. Based on the argument above, it is clear that any two of these nodes intersect in exactly one symbol and any $l$ of the nodes have an empty intersection if $l \geq 3$. It follows that the union of these nodes has exactly $k \rho - \binom{k}{2}$ symbols. The result follows.
\end{proof}

Next we provide an explicit  example. Let $\mathcal{C}$ be the FR code obtained from a Steiner system $S(2,\alpha = 4, \theta = 16)$.

\begin{example}[\textbf{File size of FR code obtained from the transpose of Steiner System $S(2,\alpha = 4, \theta = 16)$}]
The nodes in $\mathcal{C}$ are specified in Table \ref{Design3} and the nodes of the transposed code $\mathcal{C}^T$ are specified in Table \ref{Design4}.


Since the maximal arc should be a set of with cardinality $6$, we can choose the symbols greedily and construct the set $S=\{0,1,2,3,4,15\}$ as a maximal arc for this Steiner system

According to Lemma \ref{lemma:maximal_arc}, the file size for $\mathcal{C}^T$ for $1\leq k \leq 6$  can be determined by just considering the nodes
\begin{align*}
&\{1, 4, 12, 13, 15\}, \{0, 2, 13, 14, 16\}, \{1, 3, 10, 14, 17\}, \\
&\{2, 4, 10, 11, 18\}, \{0, 3, 11, 12, 19\},~\mbox{and}~\{15, 16, 17, 18, 19\}
\end{align*}
as these correspond to the symbols of $S$ in $\mathcal{C}$.
For these values of $k$, the file size of the code is  $5k-\binom{k}{2}$. Moreover, it is optimal with respect to Singleton bound for $1\leq k \leq 3$ ({\it cf.} Observation \ref{obs:meet_bound}). 
\end{example}

\begin{remark}[\textbf{Steiner Systems with $\alpha=3,4$}]
It is known that several Steiner systems possess maximal arcs. Here we provide the known results for small values of $\alpha$.
\begin{itemize}
\item(\textbf{Maximal arcs in Steiner systems with $\alpha=3$}) By Skolem's construction \cite{skolem1958} we have $S(2,3,\theta)$ for all $\theta\geq7$ and $\theta \equiv 1,3 \mod{6}$. Moreover, for all $\theta\geq7$ and $\theta \equiv 3,7 \mod{12}$ there exists a Steiner system $S(2,3,\theta)$ with at least one maximal arc \cite{quattrocchi1993}.
\item(\textbf{Maximal arcs in Steiner systems with $\alpha=4$})  It is known \cite{stinson2004} that Steiner systems with $\alpha = 4$ exist if and only if $\theta \geq 13$ and $$\theta \equiv 1,4\mod{12}.$$
Furthermore, if $\displaystyle \rho=\frac{\theta-1}{3}$ is a prime power, then there exists an Steiner system $S(2, 4, \theta)$ with a maximal arc of size $\rho+1$ \cite{greig2003}.
\end{itemize}
To our best knowledge, there are no other general results about the existence of maximal arcs in Steiner systems with higher values of $\alpha$.
\end{remark}

%

\subsection{FR codes from resolvable designs}
A major drawback of FR codes obtained from Steiner systems is that the repetition degree of the symbols is quite inflexible. In particular, it is not possible to vary the repetition degree and hence the failure resilience of the DSS in an easy way. To address this issue, we now introduce FR codes that are derived from resolvable designs.

A design $(\Omega, V)$ is said to be resolvable if we can divide the blocks in $V$ into equal-sized partitions such that (a) each partition contains all the symbols in $\Omega$, and (b) the blocks in a given partition have no symbols in common. Under certain conditions, these designs also allow for $\beta$-recoverability. A FR code obtained from such a design is called a resolvable FR code and is naturally resilient to any failure pattern that ensures that at least one partition is left intact. In the discussion below, we introduce the notion of a net FR code (a subclass of resolvable FR codes) that ensures $\beta$-recoverability.

Under this overall framework, we construct several families of net FR codes that allow us to vary the repetition degree in an easy manner. We demonstrate that there exist net FR codes with $\beta > 1$ that cannot be derived by trivial $\beta$-expansion. Furthermore, we answer an open question of \cite{el2010} by demonstrating a FR code that cannot be constructed from Steiner systems. We also provide explicit calculations of the file size for certain ranges of $k$. The overall structure of this subsection is as follows. We first introduce our construction, show that it results in a net FR code and then calculate its file size.

\begin{definition}[\textbf{Resolvable FR Code}] \label{resolvable fractional repetition code}  Let $\calC = (\Omega, V)$ where $V = \{V_1, \dots, V_n\}$ be a FR code. A subset $P \subset V$ is said to be a parallel class if for $V_i \in P$ and $V_j \in P$ with $i \neq j$ we have $\displaystyle V_i \cap V_j = \emptyset$ and $\cup_{\{j : V_j \in P\}} V_j = \Omega$. A partition of  $V$ into $r$ parallel classes is called a resolution. If there exists at least one resolution then the code is called a \textit{resolvable FR code}.
\end{definition}
For a resolvable FR code, we call two storage nodes parallel if they belong to the same parallel class and non-parallel otherwise.
The properties of a resolvable FR code are best illustrated by means of the following example.
\begin{example}
Consider a DSS with parameters $\alpha = 3, \theta = \alpha^2 = 9, \rho = 2$ and $\beta = 1$. Suppose that we arrange the symbols in $\Omega = \{1, \dots, 9\}$ in a $\alpha \times\alpha$ array $A$ shown below.
$$A=\begin{array}{ccc}
1&2&3\\
4&5&6\\
7&8&9
\end{array}.$$
Let the rows and the columns of $A$ form the nodes in the FR code $\calC$ (see Fig. \ref{DSS-(6,3,3)}), thus $n=6$. It is evident that there are two parallel classes in $\calC$, $P^r = \{V_1, V_2, V_3\}$ (corresponding to rows) and $P^c = \{V_4, V_5, V_6\}$ (corresponding to columns). As $\rho = 2$, this code can tolerate one failure.

By our construction it is evident that for $V_i \in P^r$ and $V_j \in P^c$, we have $|V_i \cap V_j| = 1$. Using this we can compute the file size $\calM$ when $k=3$, as follows. Let $a + b=3$ with $a \geq b$. Then, the number of distinct symbols in a set of $3$ nodes from $\calC$ is
\begin{equation*}
3a+(3-a)(3-a)= a^2 + 9 - 3a,
\end{equation*}
where $a$ nodes are from $P^r$ and $(3-a)$ nodes are from $P^c$. This is minimized when $a = 2$. Thus, $\calM = 7$ and $\mathcal{R}_{\calC}=\frac{7}{18}$. Note also that the code is optimal with respect to the Singleton bound since $k = \lceil \frac{\calM}{\alpha} \rceil$.\end{example}

\begin{figure*} [t]
\centering
\includegraphics[scale=0.65]{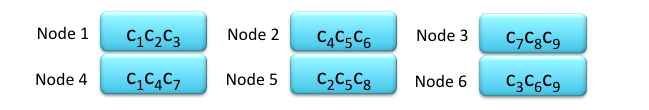}
\caption{A DSS specified with $(n=6,k=3,d=3, \alpha=3)$. Note that the nodes numbered 1,2,3 and 4,5,6 form parallel classes.}
\label{DSS-(6,3,3)}
\end{figure*}
If one starts with a resolvable design with many parallel classes, the repetition degree $\rho$ can be varied easily by adding and/or removing parallel classes if needed. We emphasize that the constructions of \cite{el2010} that are based on Steiner systems largely lack this flexibility as many of them are not resolvable. 


In our proposed systems, we require recovery from a node failure by downloading exactly $\beta$ symbols each from a specified set of $d$ surviving nodes. 
To address this issue, we consider a subclass of resolvable FR codes called net FR codes where the intersection size of any two nodes from distinct parallel classes is exactly $\beta$.

\begin{definition}[\textbf{Net FR Code}] Let $\mathcal{C}=(\Omega,V)$ be a resolvable FR code with parameters $(n=ar, \theta= a^2b, \alpha=ab, \rho=r)$ such that any two non-parallel nodes intersect in exactly $b$ symbols. The design determined by $\mathcal{C}$ is called a net \cite{assmus1992} and we call $\mathcal{C}$ a \textit{net FR code}.
\end{definition}
Examples of net FR codes can be obtained from several combinatorial structures, e.g., grids, affine resolvable designs, Hadamard designs and mutually orthogonal Latin squares (MOLS). We elaborate on these constructions in the subsequent discussion.


Suppose that a net FR code with parameters $(n=ar, \theta= a^2b, \alpha=ab, \rho=r)$ exists. Note that the number of nodes in a parallel class equals $\frac{\theta}{\alpha} = a$. Furthermore, if a given node $V_1 \in V$ fails, this node can be reconstructed by contacting all the nodes in any other intact parallel class and downloading $b$ symbols from each of them. This implies that the code has $d = a, \beta = b$. Next, the code has $\rho = r$ parallel classes and any node can be reconstructed as long as there exists at least one parallel class. Thus, the code is resilient to at least $r-1$ failures, i.e. $\rho_{res} = r-1$.

Note that the parameter $k$ can be chosen such that $1 \leq k \leq d$. The code rate $R_{\mathcal{C}}$ depends on $k$ and needs to be determined. As we shall see determining $R_{\mathcal{C}}$ can be nontrivial in many cases. Specifically, much of the literature in the area of combinatorial designs focuses on pairwise intersections between the storage nodes, whereas the code rate depends on the minimum size of the intersection of any $k$ storage nodes.
Some general results about the code rate of net FR codes can be obtained as discussed in the lemma below. However, a more careful analysis of the algebraic structure of a given construction can allow us to arrive at stronger results.

\begin{lemma}[\textbf{An algorithmic approach for determining the file size of net FR Codes}] 
\label{lemma:net_FR_code}
Let $\mathcal{C}$ be a net FR code with parameters  $(n=ar, \theta= a^2, \alpha=a, \rho=r)$, so that $\beta = 1$. Let $k$ be an integer that satisfies $k \leq \rho$ and $\binom{k-1}{2} < a$. Then, the code rate of the system is $R_{\calC} = (\alpha k -\binom{k}{2})/n\alpha$. 
\end{lemma}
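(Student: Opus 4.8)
The plan is to show that any $k$ storage nodes in a net FR code with parameters $(n=ar,\theta=a^2,\alpha=a,\rho=r)$ and $\beta=1$ cover exactly $\alpha k - \binom{k}{2}$ symbols, under the hypotheses $k\le\rho$ and $\binom{k-1}{2}<a$. The lower bound $\alpha k - \binom{k}{2}$ is immediate from the inclusion–exclusion inequality \eqref{eq:inc_enc_lower_bd} in Theorem \ref{thm:inc_enc}, since in a net any two non-parallel nodes intersect in exactly $b=1$ symbol and any two parallel nodes intersect in $0$ symbols, so $\sum_{i<j}|V_i\cap V_j|\le\binom{k}{2}$. The work is in the matching upper bound, i.e.\ exhibiting $k$ nodes whose union is \emph{exactly} $\alpha k - \binom{k}{2}$; equivalently, choosing $k$ nodes so that all pairwise intersections have size exactly $1$ (no two of the chosen nodes are parallel) \emph{and} no point lies in three or more of them (all triple-wise intersections are empty). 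When those two conditions hold, inclusion–exclusion collapses to $\sum|V_i| - \sum_{i<j}|V_i\cap V_j| = \alpha k - \binom{k}{2}$.

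The first step is to pick the $k$ nodes one parallel class at a time, using a greedy/counting argument. Since $k\le\rho=r$, there are at least $k$ parallel classes, so we can try to select exactly one node from each of $k$ distinct parallel classes; this guarantees no two chosen nodes are parallel, hence every pairwise intersection is exactly $1$. The second step — the main obstacle — is to argue that the choices can be made so that no point is covered three times. Suppose we have already selected nodes $V_{i_1},\dots,V_{i_{t}}$ (one from each of $t<k$ classes) with all triple intersections empty. These $t$ nodes pairwise intersect in $\binom{t}{2}$ distinct points (distinct because a repeated point would be a triple intersection). Now consider the next parallel class $P$, which has $a$ nodes partitioning $\Omega$. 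Each of the $\binom{t}{2}$ ``bad'' points (the pairwise intersection points among the already-chosen nodes) lies in exactly one node of $P$; also each already-chosen node $V_{i_j}$ meets the $a$ nodes of $P$ in a single point each, i.e.\ $V_{i_j}$ distributes its $a$ points one per node of $P$. A node $W\in P$ is a ``safe'' choice precisely if picking it creates no triple overlap, which happens unless $W$ contains one of the $\binom{t}{2}$ bad points. Hence at most $\binom{t}{2}$ nodes of $P$ are unsafe, and since $t\le k-1$ we have $\binom{t}{2}\le\binom{k-1}{2}<a$, so a safe node exists in $P$. Picking it maintains the invariant, and induction on $t$ up to $t=k$ finishes the construction.

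A couple of routine checks remain to be spelled out: that adding a safe node $W$ from $P$ indeed keeps \emph{all} triple intersections empty (a new triple overlap would have to involve $W$ and two earlier nodes $V_{i_p},V_{i_q}$; their common point is one of the catalogued bad points, which $W$ avoids by safety — and any triple not involving $W$ was already empty by the inductive hypothesis), and that the new pairwise intersections $W\cap V_{i_j}$ each have size exactly $1$ (true since $W$ is non-parallel to each $V_{i_j}$, being in a fresh class). Once the $k$ nodes are in hand, the file size equals $\alpha k - \binom{k}{2}$ by the exact inclusion–exclusion computation above, and dividing by $n\alpha$ gives $R_{\calC} = (\alpha k - \binom{k}{2})/(n\alpha)$, as claimed. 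I expect the only genuinely delicate point to be the clean bookkeeping of which nodes of the incoming parallel class are ``unsafe,'' and verifying that the bound $\binom{k-1}{2}<a$ is exactly what is needed there; everything else is direct from the net axioms and Theorem \ref{thm:inc_enc}.
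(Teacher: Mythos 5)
Your proposal is correct and follows essentially the same route as the paper's own proof: a lower bound of $\alpha k-\binom{k}{2}$ from inclusion--exclusion, followed by a greedy selection of one node per parallel class in which the "unsafe" nodes of the incoming class are exactly those containing one of the at most $\binom{t}{2}\le\binom{k-1}{2}<a$ existing pairwise-intersection points, so a safe choice always exists. The paper phrases this as an explicit algorithm maintaining the set $H$ of used intersection points, but the counting and the role of the hypothesis $\binom{k-1}{2}<a$ are identical to yours.
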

\begin{proof} See Appendix.
\end{proof}
\begin{table}[t]
\begin{center}
\begin{tabular}{|c| c | c | c |}
  \hline
 $\{1,  2,  3,  4\}$ &  $\{5,  6,  7,  8\}$ & $\{9,  10,  11,  12\}$ &  $\{13, 14,  15,  16\}$\\
  \hline
$ \{1,  5,  9,  13\}$ & $\{2,  6,  10,  14\}$ & $\{3,  7,  11,  15\}$ & $\{4,  8,  12,  16\}$\\
 \hline
$\{1,  6,  11,  16\}$ & $\{2,  5,  12,  15\}$ & $\{3,  8,  9,  14\}$ & $\{4,  7,  10,  13\}$\\
 \hline
$ \{1,  7,  12,  14\}$ & $\{2,  8,  11,  13\}$ & $\{3,  5,  10,  16\}$ & $\{4,  6,  9,  15\}$\\
 \hline
\end{tabular}
\end{center}
\caption{\label{Design5} A net FR code with parameters $(16,16,4,4)$.}
\end{table}

\begin{example}\label{4mols}
Consider the following FR code obtained from a net with parameters $(n,\theta, \alpha, \rho) = (16,16,4,4)$. The code arises from mutually orthogonal Latin squares (see Section \ref{sec:grids_hadamard_mols}). This FR code can be specified the nodes presented in Table \ref{Design5}. Each row of the table represents a parallel class. 

Since any two non-parallel nodes intersect in exactly one point, the code corresponds to a net FR code with $a = 4, b = 1$, and $r=4$. Thus, $d=4$ and $\beta = 1$.
Suppose that $k=4$, so that $\binom{k-1}{2} < a$. Our algorithm ({\it cf.} Appendix) may choose the following nodes for $k=4$.
$$L=\{\{1,  2,  3,  4\},\{1,  5,  9,  13\},\{2,  5,  12,  15\}, \{4,  6,  9,  15\}\}.$$
So the file size is $\calM = 10$. However this code is not optimal with respect to Singleton bound. However, observe that the code formed by deleting a parallel class has parameters $(n=12,\theta = 16,\alpha=4,\rho=3)$. In this code any three nodes cover at least $9$ symbols. Thus setting $k= \lceil \frac{9}{4}\rceil = 3$ ({\it cf.} Observation \ref{obs:meet_bound}) results in a code that meets the Singleton bound.

\end{example}

Note that while Lemma \ref{lemma:net_FR_code} applies to all net FR codes with $\beta=1$, the requirement that the storage capacity $\alpha = a \geq \binom{k-1}{2}$ is quite restrictive. 
For certain net FR codes that have a tractable algebraic and/or geometric characterization we can perform a more careful analysis and we now turn our attention to them. 
Our first example is a net where the file size $\calM$ is strictly larger than $k\alpha - \beta\binom{k}{2}$.
\subsubsection{Affine Resolvable FR code}
\label{sec:affine_resolv_fr_code}
Affine resolvable designs are a class of resolvable designs where the intersection between two nodes in different parallel classes can be computed exactly. These can be derived from affine geometries that can be intuitively understood as follows. The set of points corresponds to all elements of $\mathbb{F}_q^n$, the vector space of dimension $n$ over a finite field of size $q$, $\mathbb{F}_q$. Thus, the number of points is $q^n$. The blocks correspond to the solutions of certain sets of linear equations over the vector space. For the sake of simplicity, let us consider just one equation, e.g., $x_1 = a$ for $a \in \mathbb{F}_q$. For each $a\in \mathbb{F}_q$ the solution set is of size $q^{n-1}$. Each such solution set corresponds to a block in the design.
Furthermore, these solution sets partition $\mathbb{F}_q^n$. In a similar manner, one can consider other sets of linear equations of the form $\sum_{i=1}^n b_i x_i = a$ where $b_i \in \mathbb{F}_q$ whose solution sets also partition $\mathbb{F}_q^n$. Furthermore any two linear independent linear equations will have a solution set of size $q^{n-2}$, i.e., the intersection between two such blocks will be exactly $q^{n-2}$. 

The resultant block design is a resolvable design \cite{stinson2004}.  In the discussion below, we present a formal presentation of this idea. We also analyze the file size of the obtained system under the condition that the equations are chosen in a specific manner and for an appropriate range of $k$.

Let $q$ be a prime power, $m \geq 2$ and $\Omega = \mathbb{F}_q^m$. Let $1 \leq \delta \leq m-1$. We treat $\Omega$ as an $m$-dimensional vector space over $\mathbb{F}_q$. A $\delta$-flat is the solution set to a system of $m- \delta$ independent linear equations that can be homogeneous or non-homogeneous. The set $\Omega$ and the set of all $\delta$-flats of $\Omega$ comprise the $m$-dimensional affine geometry over $\mathbb{F}_q$, denoted by $AG_m(q)$. It turns out that one can generate a large class of resolvable designs by considering $AG_m(q)$. Let ${m \brack \delta}_q$ denote the Gaussian coefficient, so that
\begin{align*}
{m \brack \delta}_q = \begin{cases} \frac{(q^m-1) (q^{m-1} - 1) \dots (q^{m-\delta+1} - 1)}{(q^\delta -1)(q^{\delta-1} - 1) \dots (q-1)} & \text{~if~} \delta \neq 0,\\
1 &  \text{~if~} \delta = 0.
\end{cases}
\end{align*}
\begin{theorem}[\textbf{Affine Resolvable Designs}]
\label{thm:resolvable_bibd} \cite{stinson2004} Let $V$ denote the set of all $\delta$-flats in $AG_m(q)$. Then $\Omega = \mathbb{F}_q^m$ and $V$ form a resolvable BIBD with $(\theta = q^m, \rho, \alpha =q^\delta,\lambda)$-BIBD with $n = q^{m-\delta} {m \brack \delta}_q, \rho = {m \brack \delta}_q$ and $\lambda = {m-1 \brack \delta-1}_q$.
\end{theorem}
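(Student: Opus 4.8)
First I would trade the description of a $\delta$-flat as the solution set of $m-\delta$ independent linear equations for the equivalent coset description: a $\delta$-flat of $AG_m(q)$ is exactly a coset $W+v$ of some $\delta$-dimensional linear subspace $W \leq \mathbb{F}_q^m$, and this \emph{direction} $W$ is uniquely determined by the flat $F$ since $W = F - p$ for any $p \in F$. (An affine system whose coefficient matrix has rank $m-\delta$ is consistent, and its solution set is a coset of the $\delta$-dimensional solution space of the associated homogeneous system; conversely every such coset is cut out by such a system.) Two of the parameters then fall out at once: a coset of a $\delta$-dimensional subspace has $q^\delta$ elements, so $\alpha = q^\delta$; and there are ${m \brack \delta}_q$ choices of $\delta$-dimensional subspace, each with $q^{m-\delta}$ cosets and no coset counted twice, so $n = q^{m-\delta}{m \brack \delta}_q$. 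Also $|\Omega| = \theta = q^m$ is immediate.

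Next I would compute the two incidence numbers using the action of the affine group: translations act transitively on points and $GL(m,q)$ acts transitively on nonzero vectors, which already shows the counts below are independent of the chosen point or pair. Translating a fixed point to the origin, a $\delta$-flat through $0$ is precisely a $\delta$-dimensional linear subspace, so each point lies on $\rho = {m \brack \delta}_q$ flats. For a fixed pair of distinct points, translate one to $0$ and use $GL(m,q)$ to normalize the other to a fixed nonzero vector $v$; the flats through both are then the $\delta$-dimensional subspaces containing the line $\langle v \rangle$, and $W \mapsto W/\langle v \rangle$ is a bijection from these onto the $(\delta-1)$-dimensional subspaces of $\mathbb{F}_q^m/\langle v \rangle \cong \mathbb{F}_q^{m-1}$, of which there are $\lambda = {m-1 \brack \delta-1}_q$. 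The four facts ($|\Omega| = \theta$, block size $\alpha$, point-replication $\rho$, pair-replication $\lambda$) are exactly the defining axioms of a $(\theta,\rho,\alpha,\lambda)$-BIBD on $n$ blocks; one may additionally verify $n\alpha = \theta\rho$ and $\rho(\alpha-1) = \lambda(\theta-1)$ directly from the Gaussian-coefficient formulas as a sanity check.

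For resolvability I would exhibit the resolution explicitly. Fix a $\delta$-dimensional subspace $W$; its $q^{m-\delta}$ distinct cosets are pairwise disjoint $\delta$-flats whose union is all of $\mathbb{F}_q^m$, hence form a parallel class. Since each $\delta$-flat is a coset of exactly one subspace (its direction), letting $W$ range over all ${m \brack \delta}_q$ subspaces partitions $V$ into $\rho$ parallel classes, each of size $n/\rho = q^{m-\delta}$; this is a resolution, so the design is resolvable. The bookkeeping above is routine; the one step deserving genuine care is the pair-incidence count, where one must pass correctly to the quotient $\mathbb{F}_q^m/\langle v \rangle$ and invoke the standard correspondence between $\delta$-subspaces containing a given line and $(\delta-1)$-subspaces of the quotient. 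Everything else is an application of the orbit structure of $AGL(m,q)$ together with the fact that the cosets of a subspace tile the ambient space.
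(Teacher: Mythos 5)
Your proof is correct. The paper does not actually prove this statement -- it is imported directly from \cite{stinson2004} -- so there is no in-paper argument to compare against; your coset-based treatment (flats as cosets of a unique direction subspace, point and pair counts via ${m \brack \delta}_q$ and the quotient correspondence ${m-1 \brack \delta-1}_q$, and resolvability from the cosets of each fixed direction forming a parallel class) is the standard proof and matches the informal description the paper gives around the theorem, including the construction algorithm for the $\delta = m-1$ case.
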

The case of $m=2, \delta = 1$ corresponds to affine planes. When $\delta = m-1$ we obtain an affine resolvable BIBD with $n = \theta + \rho - 1$. In this case the DSS is specified by the  parameters $\theta=q^m$, $\alpha=q^{m-1}$, $\rho=\frac{q^{m}-1}{q-1}$ and $n=q\rho$.
The design can be obtained by means of the following algorithm.
\begin{itemize}
\item[(i)] Let $\Omega = \{(x_1,x_2, \cdots, x_m): x_i \in \mathbb{F}_q~\mbox{for}~i=1,2,\cdots, m \}$ be the symbol set.
\item[(ii)] Find $\rho$, $(m-1)$-dimensional subspaces of $\bbf_q^m$ such that each of them contains the symbol $(0,0,\cdots, 0) \in \mathbb{F}_q^m$. Note that these subspaces of $\mathbb{F}_q^m$ are the solutions to a single homogeneous linear equation over $\mathbb{F}_q$ in $q$ variables. These $\rho$ subspaces are representatives of the $\rho$ different parallel classes.
\item[(iii)] Construct each parallel class by considering the additive cosets of its representative.
Let $R_1$ be a $(m-1)$-dimensional subspace corresponding to a given homogenous equation. Let $U=\{0, u_1, \dots, u_{q-1}\}$ be the full set of coset representatives of $R_1$. The rest of the blocks can be obtained by the cosets $R^i_1=u_i+R_1$. Note that each of these cosets corresponds to a nonhomogeneous equation.
\end{itemize}
\begin{table}[t]
\begin{center}
\begin{tabular}{|c|}
\hline
$R_1 = \{000, 001, 002, 010, 020, 011, 012, 021, 022\}$\\
 \hline
$ R_2 =\{000, 001, 002, 100, 200, 101, 102, 201, 202\}$ \\
 \hline
$R_3 = \{000, 001, 002, 110, 220, 111, 112, 221, 222\} $ \\
 \hline
$R_4 = \{000, 001, 002, 120, 210, 121, 122, 211, 212\}$ \\
 \hline
$R_5 = \{000, 010, 020, 100, 200, 110, 120, 210, 220\}$ \\
 \hline
 $R_6 = \{000, 010, 020, 101, 202, 111, 121, 212, 222\}$ \\
 \hline
 $R_7 = \{000, 010, 020, 102, 201, 112, 122, 211, 221\}$ \\
 \hline
 $R_8 = \{000, 011, 022, 100, 200, 111, 122, 211, 222\}$ \\
 \hline
 $R_9 = \{000, 011, 022, 101, 202, 112, 120, 210, 221\}$ \\
 \hline
 $R_{10} = \{000, 011, 022, 102, 201, 110, 121, 212, 220\}$ \\
 \hline
 $R_{11} = \{000, 012, 021, 100, 200, 112, 121, 212, 221\} $ \\
 \hline
 $R_{12} = \{000, 012, 021, 101, 202, 110, 122, 211, 220\}$ \\
 \hline
 $R_{13} = \{000, 012, 021, 102, 201, 111, 120, 210, 222\}$\\
 \hline
\end{tabular}
\end{center}
\caption{\label{representatives} Representatives of parallel classes of the FR code with parameters $(n = 39, \theta = 27, \alpha = 9, \beta=3, d=3,  \rho = 13)$.}
\end{table}

\begin{example}[\textbf{An example of an Affine Resolvable Design}] \cite{stinson2004}
Let $q=3$ and $m=3$. The set of symbols is $\Omega = \mathbb{F}_3^3$ and there are 39 blocks which can be partitioned into $13$ parallel classes. The representatives of the $13$ parallel classes are specified in the Table \ref{representatives}, where the vector $[x_1~x_2~x_3]$ is simply written as $x_1 x_2 x_3$.
The other blocks are additive cosets of these 13 representatives. For example, the first parallel class consists of the following blocks.
\begin{align*} B_1 &= \{000, 001, 002, 010, 020, 011, 012, 021, 022\},\\
B_2 &= \{100, 101, 102, 110, 120, 111, 112, 121, 122\}, \text{~and}\\
B_3 &= \{200, 201, 202, 210, 220, 211, 212, 221, 222\}.
\end{align*}
Here the blocks $B_1$, $B_2$ and $B_3$ correspond to equations $x_1=0$, $x_1=1$ and $x_1=2$ respectively.
\end{example}
The overlap between blocks from different parallel classes in the case of affine resolvable designs is known from the following result.
\begin{lemma}
\label{lemma:intersect_affine}
\cite{stinson2004} Any two blocks from different parallel classes of an affine resolvable $(\theta, \rho, \alpha, \lambda)$-BIBD intersect in exactly $\alpha^2/\theta$ symbols.
\end{lemma}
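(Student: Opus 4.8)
The plan is to prove the claim by a short double-counting argument, using only that each parallel class partitions the point set $\Omega$ into blocks of common size $\alpha$, together with the defining feature of an \emph{affine} resolvable design: any two blocks lying in distinct parallel classes meet in the same number of points. First I would fix an arbitrary block $B$ in some parallel class $P_1$ and an arbitrary second parallel class $P_2 \neq P_1$. Affine resolvability supplies a constant $\mu$ with $|B \cap C| = \mu$ for every block $C \in P_2$, and the whole task is to evaluate $\mu$.

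The remaining steps are purely enumerative. Since $P_2$ partitions $\Omega$ into disjoint blocks of size $\alpha$, it consists of exactly $\theta/\alpha$ blocks. Because these blocks partition $\Omega \supseteq B$, each of the $\alpha$ points of $B$ lies in exactly one block of $P_2$, so $\sum_{C \in P_2} |B \cap C| = |B| = \alpha$. On the other hand this sum equals $(\theta/\alpha)\,\mu$ by the previous paragraph, and equating the two expressions yields $\mu = \alpha^2/\theta$, as desired. It is reassuring to test this against the standard cases: for an affine plane of order $q$ we have $\theta = q^2$ and $\alpha = q$, giving $\mu = 1$ (two non-parallel lines meet in a point), and for the $(m-1)$-flats of $AG_m(q)$ in Theorem~\ref{thm:resolvable_bibd} it gives $\mu = q^{m-2}$.

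The step I expect to demand the most attention is not this arithmetic but fixing the precise meaning of \emph{affine resolvability}, since the argument above presupposes that non-parallel blocks have constant pairwise intersection. If one instead takes Bose's extremal condition $n = \theta + \rho - 1$ as the definition, then one must first derive this constancy: I would do so by a second-moment estimate, evaluating $\sum_{C}|B \cap C|$ and $\sum_{C}|B \cap C|^{2}$ over all blocks $C$ that are not parallel to $B$ --- the second sum obtained by counting, for each ordered pair of points of $B$, the blocks containing both and invoking that every $2$-subset of $\Omega$ lies in exactly $\lambda$ blocks --- and then showing that the associated variance is forced to vanish exactly when $n = \theta + \rho - 1$. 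For the constructions actually used in this paper, the constancy is immediate from linear algebra: two distinct $(m-1)$-dimensional subspaces of $\mathbb{F}_q^{m}$ span the whole space, hence intersect in an $(m-2)$-dimensional subspace, so any two non-parallel flats meet in a coset of such a subspace, which has size $q^{m-2} = \alpha^{2}/\theta$.
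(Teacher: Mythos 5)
The paper gives no proof of this lemma at all; it is quoted directly from the cited reference \cite{stinson2004}, so there is nothing in the text to compare against except the standard textbook argument. Your double count is that standard argument and it is correct: once you know that $|B\cap C|=\mu$ is constant over all blocks $C$ of a second parallel class $P_2$, the fact that $P_2$ partitions $\Omega$ into $\theta/\alpha$ blocks forces $\mu\,(\theta/\alpha)=\sum_{C\in P_2}|B\cap C|=|B|=\alpha$, i.e.\ $\mu=\alpha^2/\theta$, and your sanity checks (affine planes, $(m-1)$-flats of $AG_m(q)$) are right. You are also right that the only delicate point is definitional: this paper introduces affine resolvable designs via the extremal condition $n=\theta+\rho-1$ (see the sentence following Theorem \ref{thm:resolvable_bibd}), under which the constancy of non-parallel intersections is not an axiom but a consequence of Bose's theorem, which is proved by exactly the second-moment/variance argument you sketch (using $\sum_C|B\cap C|=\alpha(\rho-1)$ and $\sum_C|B\cap C|\bigl(|B\cap C|-1\bigr)=\alpha(\alpha-1)(\lambda-1)$ over blocks not parallel to $B$). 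You do not carry that computation out, so as written your proof is complete only under the ``constant intersection'' definition of affine resolvability, or for the concrete constructions in the paper, where, as you note, two non-parallel $(m-1)$-flats meet in a coset of an $(m-2)$-dimensional subspace of size $q^{m-2}=\alpha^2/\theta$. Since the paper itself defers the general statement to \cite{stinson2004}, this is a reasonable and essentially equivalent treatment rather than a gap.
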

Using the above facts, we can conclude that an affine resolvable BIBD is an instantiation of a net FR code with parameters $(n = q \frac{q^{m}-1}{q-1}, \theta = q^m, \alpha = q^{m-1}, \rho = \frac{q^{m}-1}{q-1})$ and $d = q, \beta = q^{m-2}$. Of course, the repetition degree can be varied by only retaining as many parallel classes as needed.

\begin{remark}[\textbf{Affine Resolvable FR Codes cannot be obtained by trivial $\beta$-expansion}]
It is important to note that the affine resolvable FR codes are an example of a FR code family with $\beta > 1$ that cannot be obtained by replicating the symbols of a smaller code. To show this we will simply use Observation \ref{non_trivial_FR}. Specifically, consider $m\geq 2q+1$ and $q \geq 3$. In this case the affine resolvable FR code will have parameters $\theta = q^m, \alpha = q^{m-1}, \rho = \frac{q^m - 1}{q-1}, n = q \rho$ and $\beta = q^{m-2}$. If it could be generated from a smaller code simply by replication, this would imply that the smaller code had a storage capacity of $q$ and $q^2$ total symbols. This means it has at most $\binom{\theta/\beta}{\alpha/\beta}=\binom{q^2}{q} \leq \bigg{(}\frac{q^2 e}{q}\bigg{)}^q = (eq)^q \leq q^{2q}$ distinct storage nodes. However, in the affine resolvable FR code we have $n=q\frac{q^m-1}{q-1} \geq q\frac{q^{2q+1}-1}{q-1}$ which can be verified to be strictly larger than $q^{2q}$.
\end{remark}

We can determine the file size of a code $\calC$ obtained from some specific affine resolvable designs, for certain ranges of $k$.  We consider two scenarios depending on the relationship between $q$ and $m$.

\begin{itemize}
\item (\textbf{Case 1: $q > m$})\\
We choose the code $\calC$ such that it has $r \geq m$ parallel classes such that the $i$-th parallel class of $\calC$ corresponds to the homogeneous equation $ x_1 + \alpha_i x_2 + \alpha_i^2 x_3 + \dots + \alpha_i^{m-1} x_m = 0$, where $\alpha_i, i = 1, \dots r$ are all non-zero and distinct. Note that the distinctness requirement also enforces that $q > r$. 
The equations obtained in this manner are such that {\it any} $m$ equations are linearly independent \cite{horn2012}.

For this code we analyze the file size for a fixed $k \leq m$. For a given set  of $k$ blocks, denoted $A_i, i = 1, \dots, k$, it is possible that multiple blocks from the same parallel class are chosen; suppose that these blocks come from $l$ distinct parallel classes, numbered without loss of generality as $1, \dots, l$. Let $z_i$ denote the number of blocks from the $i$-th parallel class, so that
\begin{equation*}
z_1 + z_2 + \dots + z_l = k.
\end{equation*}
If we pick $k_1$ blocks each from a different parallel class, we can immediately conclude that the total number of symbols covered is $q^{m-k_1}$, as the parallel classes correspond to linearly independent equations. Using this fact and the inclusion-exclusion principle, we have
\begin{align*}
&|\cup_{i=1}^k A_i| = \sum_{i_1 = 1}^l z_{i_1} q^{m-1} - \sum_{i_1 < i_2} z_{i_1} z_{i_2} q^{m-2} \\
&+ \sum_{i_1 < i_2 < i_3} z_{i_1} z_{i_2} z_{i_3} q^{m-3} + \dots + (-1)^l z_{1} z_{2} \cdots z_{l} q^{m-l}.
\end{align*}
Upon inspection, 
it is clear that
\begin{align}
\label{eq:file_size_affine_resolv}
&q^m \bigg{(}1 - \Pi_{i=1}^l \bigg{(}1 - \frac{z_i}{q}\bigg{)}\bigg{)} =  \sum_{i_1 = 1}^l z_{i_1} q^{m-1} - \sum_{i_1 < i_2} z_{i_1} z_{i_2} q^{m-2} \nonumber \\
&+ \sum_{i_1 < i_2 < i_3} z_{i_1} z_{i_2} z_{i_3} q^{m-3} + (-1)^l z_{1} z_{2} \cdots z_{l} q^{m-l}.
\end{align}
Thus, we need to analyze the minimum value of the LHS of equation (\ref{eq:file_size_affine_resolv}) (over the possibilities for $z_i, i = 1, \dots, l$) to determine the file size. Using the AM-GM inequality, 
we obtain
\begin{align*}
\frac{1}{l} \sum_{i=1}^l \bigg{(}1 - \frac{z_i}{q}\bigg{)} = 1 - \frac{k}{lq} &\geq \bigg{[} \Pi_{i=1}^l \bigg{(}1 - \frac{z_i}{q}\bigg{)} \bigg{]}^{\frac{1}{l}}\\
\implies \bigg{[} 1 - \frac{k}{lq} \bigg{]}^l &\geq \Pi_{i=1}^l \bigg{(}1 - \frac{z_i}{q}\bigg{)}.
\end{align*}
Equality holds in the above equation when all the $z_i$ terms are equal. In addition, we show below that the function
\begin{align*}
h(l) = \bigg{[} 1 - \frac{k}{lq} \bigg{]}^l
\end{align*}
takes its maximum value over the set $l = 1, \dots, k$ when $l = k$. To see this, let $0 < \chi = \frac{k}{q} < 1$, and consider $\log h(l) = l \log (1 - \frac{\chi}{l})$. Now,
\begin{align*}
 \frac{d}{dl} \log h(l) = \log (1 - \frac{\chi}{l}) + \frac{\frac{\chi}{l}}{ 1 - \frac{\chi}{l}}.
\end{align*}
Let $\chi_1 = \frac{\chi}{l}$ and let us study the function $h_1(\chi_1) = \log (1 - \chi_1) + \frac{\chi_1}{1 - \chi_1}$. Clearly $h_1(0) = 0$. The derivative of $h_1(\chi_1)$ is non-negative for $0 < \chi_1 < 1$, since it equals $\frac{\chi_1}{(1- \chi_1)^2}$. This implies that $h_1(\chi_1) \geq 0$ for $0 < \chi_1 < 1$ and therefore $h'(l) \geq 0$ in the range $l = 1, \dots, k$, i.e., it is an increasing function in this range. This implies that the maximum value of $h(l)$ in the range $l = 1, \dots, k$ is obtained when $l = k$ and $z_i = 1$ for all $i$.

We conclude that the minimum value of the LHS of equation (\ref{eq:file_size_affine_resolv}) is obtained when $k = l$ and $z_i = 1, i = 1, \dots, k$ and that the file size is $q^m \bigg{(} 1 - \bigg{(} 1 - \frac{1}{q} \bigg{)}^k\bigg{)}$.
\item (\textbf{Case 2: $q \leq m$})\\
In this case we choose the code $\calC$ so that it has $r \leq m$ parallel classes. The chosen parallel classes are such that they belong to linearly independent equations. Once again, we can analyze the file size when $k \leq m$. Suppose that we choose $l$ parallel classes and let $z_i$ denote the number of blocks chosen from the $i$-th parallel class. Note that in this case $l \geq \lceil \frac{k}{q} \rceil$ and $z_i \leq q$ for all $i = 1, \dots, l$. Proceeding as in Case 1, we can argue that the function
\begin{align*}
h(l) = \bigg{[} 1 - \frac{k}{lq} \bigg{]}^l
\end{align*}
attains its maximum when $l = k$ and $z_i = 1$ for all $i = 1, \dots, k$. Thus, in this case as well the maximum file size is given by $q^m \bigg{(} 1 - \bigg{(} 1 - \frac{1}{q} \bigg{)}^k\bigg{)}$.
\end{itemize}

\subsubsection{Resolvable FR codes from Grids, Hadamard designs and MOLS}
\label{sec:grids_hadamard_mols}


Note that affine resolvable codes have $\beta$ which is a prime power. We now construct families of net FR codes where $\beta = 1$. Overall, the idea here is to relate the existence of these codes to combinatorial structures such as grids (two-dimensional arrays), Hadamard designs and mutually orthogonal Latin squares. While these combinatorial structures have been studied in their own right, their usage in constructing FR codes is new. In particular, our construction from MOLS demonstrates an instance of a FR code that cannot be derived from Steiner systems (answering an open question in \cite{el2010}).

An $a \times a$ \textit{grid} is a FR code that is obtained as follows.
\begin{itemize}
\item Let $\Omega = \{0, \dots, a^2 -1\}$. Create an 2D-array $A$ whose $(i, j)-th$ entry is $a \times i + j$, where $0 \leq i, j \leq a-1$.
\item Each column and each row of $A$ determines a storage node.
\end{itemize}
It is clear that the FR code so obtained is resolvable. Specifically, the set of columns and the set of rows form a resolution. The parameters are $(n=2a,\theta=a^2,\alpha=a,\rho=2)$. Note that $\beta = 1$ as any row and any column intersect in exactly one symbol. Thus, the code so obtained is also a net FR code. 

\begin{lemma}[\textbf{File size of grid FR Codes}] \label{gridsize} Let $\mathcal{C}$ be a net FR code obtained from an $a \times a$ grid. If $k$ is even, the file size $\mathcal{M}$ of $\calC$ is $ka - k^2/4$ and if $k$ is odd, it is $ka - (k^2-1)/4$

\end{lemma}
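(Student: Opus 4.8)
The plan is to pick an arbitrary set of $k$ storage nodes from the grid code and compute the size of their union, then minimize over all such choices. Recall that the grid has two parallel classes — the $a$ rows and the $a$ columns — and any row meets any column in exactly one symbol, while two rows (or two columns) are disjoint. So a choice of $k$ nodes is determined up to relabeling by how many come from the rows and how many from the columns; write $k = p + q$ with $p$ rows and $q$ columns chosen, say with $p \geq q$. First I would observe that the $p$ rows contribute $pa$ distinct symbols and are pairwise disjoint, and each of the $q$ columns contributes $a$ symbols of which exactly $p$ are already covered by the chosen rows (one per chosen row); moreover the chosen columns are pairwise disjoint, so no symbol is double-counted among them. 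Hence
\begin{align*}
|\cup_{i} V_i| = pa + q(a - p) = (p+q)a - pq = ka - pq.
\end{align*}
(One can also get this directly from the inclusion–exclusion principle of Theorem \ref{thm:inc_enc}, since all nonempty triple intersections of the form row$\cap$column$\cap$anything vanish and the pairwise intersections are row–column pairs, of which there are $pq$.)

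Next I would minimize $ka - pq$ over $p + q = k$ with $p, q \geq 0$ integers, which is the same as maximizing the product $pq$. The product $pq$ subject to a fixed sum $k$ is maximized by taking $p$ and $q$ as equal as possible: if $k$ is even, $p = q = k/2$ gives $pq = k^2/4$; if $k$ is odd, $p = (k+1)/2$, $q = (k-1)/2$ gives $pq = (k^2-1)/4$. This yields the claimed file sizes $\mathcal{M} = ka - k^2/4$ for even $k$ and $\mathcal{M} = ka - (k^2-1)/4$ for odd $k$. I should also check that these balanced choices are actually feasible, i.e. that $\lceil k/2 \rceil \leq a$; this holds since the statement implicitly restricts to $k \leq a$ (as in the grid row of Table \ref{table:d_greater_k_beta_1}), so there are enough rows and columns to realize the minimizing configuration.

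The computation is essentially routine; the only mild subtlety — and the part I would be most careful about — is justifying the exact count $q(a-p)$ for the fresh symbols contributed by the columns, i.e. verifying that no symbol is counted twice among the chosen columns and that each chosen column shares exactly $p$ symbols (not fewer) with the union of the chosen rows. Both follow immediately from the net structure: distinct columns form a parallel class hence are disjoint, and a column meets each of the $p$ chosen rows in a distinct symbol because those $p$ intersection points lie in $p$ distinct rows. Once this bookkeeping is pinned down, the formula $ka - pq$ is exact for every configuration, and the rest is the elementary optimization of $pq$ under a sum constraint. I would close by noting the resulting $\mathcal{M}$ and, if desired, remarking (as the paper does for the $3\times 3$ case) on when $k = \lceil \mathcal{M}/\alpha \rceil$ so that the code meets the Singleton bound of Lemma \ref{singleton}.
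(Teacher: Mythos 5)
Your proposal is correct and follows essentially the same route as the paper's proof: split the $k$ chosen nodes into $s$ rows and $t=k-s$ columns, use the fact that all triple intersections vanish to get the exact count $ka-st$, and then minimize over the split (the paper completes the square in $s$, you equivalently maximize the product $st$ under the fixed sum $k$, with the same parity case distinction). Your extra remarks on feasibility of the balanced split under $k\leq a$ and on the exactness of the count $q(a-p)$ are consistent with, and slightly more explicit than, the paper's argument.
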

\begin{proof}  Assume that we choose $s$ nodes from the parallel class corresponding to the rows and $t$ nodes from the parallel class corresponding to the columns such that $s+t=k$. Note that $k \leq d = a$. It is evident that any three nodes have an empty intersection. Thus, applying the inclusion-exclusion principle, we conclude that any $k$ nodes cover exactly $\alpha k-st$ symbols. Next, note that $\alpha k - st = ak - ks + s^2 = (s - k/2)^2 + ka - k^2/4$ which takes the minimum value $ka - k^2/4 + \min((k/2 - \lceil k/2\rceil)^2, (k/2 - \lfloor k/2 \rfloor)^2)$, i.e., it equals $ka - k^2/4$ when $k$ is even and $ka - (k^2-1)/4$ when $k$ is odd.
\end{proof}

The following corollary can be obtained by examining conditions under which $k = \lceil \frac{\mathcal{M}}{\alpha}\rceil$.
\begin{corollary}
\begin{itemize}
\item Let $k=2u$ and $u^2<a$. Then the FR code obtained from $a \times a$ grid is optimal with respect to the Singleton bound.
\item Let $k=2u+1$ and $u(u+1)<a$. Then the FR code obtained from $a \times a$ grid is optimal with respect to the Singleton bound.
\end{itemize}
\end{corollary}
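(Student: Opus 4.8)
The plan is to reduce everything to Observation \ref{obs:meet_bound}, which states that a DSS meets the Singleton bound exactly when $k = \lceil \calM/\alpha \rceil$. For the grid FR code we have $\alpha = a$, and Lemma \ref{gridsize} already supplies a closed form for the file size $\calM$, so the entire proof amounts to evaluating a single ceiling in each of the two parity cases and checking it equals $k$.

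First I would treat the even case $k = 2u$. Here Lemma \ref{gridsize} gives $\calM = ka - k^2/4 = 2ua - u^2$, hence $\calM/\alpha = 2u - u^2/a$. The hypothesis $u^2 < a$, together with $u \ge 1$ so that $u^2 \ge 1 > 0$, forces $2u - 1 < \calM/a < 2u$, and therefore $\lceil \calM/\alpha \rceil = 2u = k$. Applying Observation \ref{obs:meet_bound} then settles this case.

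Next I would handle the odd case $k = 2u+1$. The useful simplification is the identity $(k^2-1)/4 = \bigl((2u+1)^2 - 1\bigr)/4 = u(u+1)$, so Lemma \ref{gridsize} yields $\calM = ka - u(u+1)$ and $\calM/\alpha = (2u+1) - u(u+1)/a$. Under the hypothesis $u(u+1) < a$ (with $u(u+1) \ge 0$, and the degenerate value $u = 0$ giving $k = 1$, $\calM = a = \alpha$ directly) we get $2u < \calM/a < 2u+1$, whence $\lceil \calM/\alpha \rceil = 2u+1 = k$, and Observation \ref{obs:meet_bound} again applies.

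The argument is essentially bookkeeping and I do not anticipate a genuine obstacle. The only points requiring a little care are the strictness of the inequalities that pin down the ceiling — specifically that $u^2/a$ and $u(u+1)/a$ lie strictly in $(0,1)$ under the stated hypotheses, so that $\lceil \calM/\alpha\rceil$ does not accidentally drop to $k-1$ or rise to $k+1$ — and the boundary value $k = 1$, which is either absorbed by the convention $u = 0$ or noted separately.
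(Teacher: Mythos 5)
Your proposal is correct and is exactly the paper's intended argument: the paper derives this corollary by checking when $k = \lceil \calM/\alpha \rceil$ (Observation \ref{obs:meet_bound}) using the file size formula of Lemma \ref{gridsize}, which is precisely the ceiling computation you carry out in both parity cases. The handling of the strict inequalities and the $u=0$ boundary is fine, so nothing further is needed.
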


A second construction of affine resolvable designs can be obtained from Hadamard matrices or equivalently difference sets as discussed below. Consider an algebraic group $G$ of order $\theta$ and $D \subseteq G$ such that $|D| = \alpha$, with the property that every nonidentity element of $G$ can be expressed as a difference $d_1-d_2$ of elements of $D$ in exactly $\lambda$ ways. We refer to $D$ as a $(\theta,\alpha,\lambda)$-difference set. 

\begin{lemma} [\textbf{Quadratic Residue Difference Set}] \cite{stinson2004}
Let $q=4a-1\geq 7$ be an odd prime power and $G = \mathbb{F}_q$. Let $D=\{z^2: z \in \mathbb{F}_q,~z\neq0 \}$ 
be the set of quadratic residues. Then $D$ is a $(4a-1, 2a-1, a-1)$-difference set in $(\mathbb{F}_q,+)$, where $+$ denotes the additive operation over $\mathbb{F}_q$.\\
\end{lemma}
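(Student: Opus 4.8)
The plan is to verify the difference set property by a direct character sum computation. Since $q = 4a-1 \equiv 3 \pmod 4$, exactly half of the $q-1$ nonzero elements of $\mathbb{F}_q$ are squares, so $|D| = (q-1)/2 = 2a-1$ is immediate. It therefore remains only to show that for every nonzero $g \in \mathbb{F}_q$ the count $N(g) := \#\{(d_1,d_2) \in D \times D : d_1 - d_2 = g\}$ equals $a-1 = \lambda$.

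First I would introduce the quadratic (Legendre) character $\chi$ of $\mathbb{F}_q$, with the convention $\chi(0) = 0$, so that the indicator function of $D$ is $\tfrac{1}{2}(1+\chi(x))$ on $\mathbb{F}_q^\ast$ and $0$ at $x = 0$. Writing $N(g)$ as a sum over $x \in \mathbb{F}_q$ of (indicator of $x \in D$) times (indicator of $x - g \in D$), and noting that both factors are nonzero only for $x \notin \{0, g\}$, I would expand the product to get
\[
N(g) = \tfrac{1}{4}\sum_{x \neq 0, g}\bigl(1 + \chi(x) + \chi(x-g) + \chi(x)\chi(x-g)\bigr).
\]
The constant term contributes $q-2$. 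For the linear terms, using $\sum_{x \in \mathbb{F}_q}\chi(x) = 0$ one finds $\sum_{x \neq 0,g}\chi(x) = -\chi(g)$ and $\sum_{x \neq 0,g}\chi(x-g) = -\chi(-g)$; since $q \equiv 3 \pmod 4$ forces $\chi(-1) = -1$, we have $-\chi(-g) = \chi(g)$, so the two linear terms cancel.

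This leaves the quadratic character sum $T = \sum_{x \neq 0,g}\chi\bigl(x(x-g)\bigr)$, which is the heart of the argument; since the summand already vanishes at $x = 0$ and $x = g$, one may sum over all of $\mathbb{F}_q$. I would evaluate $T$ via the substitution $x \mapsto g x$ (valid as $g \neq 0$), reducing it to $\sum_x \chi\bigl(x(x-1)\bigr)$, and then, for $x \neq 0$, factoring $x(x-1) = x^2(1 - x^{-1})$ so that $\chi(x(x-1)) = \chi(1 - x^{-1})$; as $x$ ranges over $\mathbb{F}_q^\ast$ the quantity $1 - x^{-1}$ ranges over $\mathbb{F}_q \setminus \{1\}$, hence $T = \sum_{y \neq 1}\chi(y) = -\chi(1) = -1$. (Alternatively, one can simply cite the standard evaluation $\sum_x \chi(x^2 + bx + c) = -1$ for $b^2 - 4c \neq 0$.) Substituting everything back gives $N(g) = \tfrac{1}{4}\bigl((q-2) - 1\bigr) = \tfrac{q-3}{4} = a-1$, which is exactly $\lambda$, so $D$ is a $(4a-1, 2a-1, a-1)$-difference set. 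The only genuinely nontrivial step is the evaluation of $T$; the rest is bookkeeping with the orthogonality relation $\sum\chi = 0$ and the sign $\chi(-1) = -1$.
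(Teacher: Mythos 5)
Your proof is correct: the reduction of $N(g)$ to the complete character sum $\sum_x \chi\bigl(x(x-g)\bigr)$, the cancellation of the linear terms via $\chi(-1)=-1$ (which uses $q\equiv 3\pmod 4$), and the evaluation $T=-1$ by the substitution $x\mapsto gx$ and the factorization $x(x-1)=x^2(1-x^{-1})$ are all sound, and the final count $\tfrac{q-3}{4}=a-1$ matches $\lambda$. Note, however, that the paper itself gives no proof of this lemma; it is quoted from the cited design-theory reference, where the standard argument is purely combinatorial rather than character-theoretic: one shows that $N(g)$ depends only on the quadratic character of $g$ (multiply a representation $d_1-d_2=g$ by a square $s^2$ to biject representations of $g$ with those of $s^2g$), then uses that $-1$ is a nonsquare for $q\equiv 3\pmod 4$ to get $N(g)=N(-g)$, forcing $N$ to be constant on all of $\mathbb{F}_q^\ast$, and finally determines the constant by double counting, $|D|\bigl(|D|-1\bigr)=(q-1)\lambda$, giving $\lambda=\tfrac{(2a-1)(2a-2)}{4a-2}=a-1$. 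Your character-sum route buys a self-contained computation of $N(g)$ for each individual $g$ (and the reusable evaluation $\sum_x\chi(x^2+bx+c)=-1$ for nondegenerate quadratics), at the cost of introducing the quadratic character machinery; the textbook route is more elementary and exposes the symmetry that makes the Paley construction work, but only determines $\lambda$ in aggregate. Either argument fully establishes the lemma as used in the paper's Hadamard-design construction.
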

For any $g \in G$, we define the \textit{translate} of $D$ by $g+D=\{g+d: d \in D\}$, and define the \textit{development} of $D$ by $\mbox{Dev}(D)=\{g+D: g\in G\}$. If $D$ is a $(\theta,\alpha,\lambda)$-difference set in $G$, then $(G,\mbox{Dev}(D))$ is a $(\theta, \rho,\alpha,\lambda)$-BIBD \cite{stinson2004}.

Let $(\Omega, V )$ be the $(4a-1, 2a-1, 2a-1, a-1)$-BIBD constructed by using a quadratic residue difference set. Let $\infty \notin \Omega$, and define for $V'=\{B\cup \{\infty\}: B \in V\}$. Then it can be shown that $(\Omega \cup \{\infty\}, V' \cup \{\Omega-B:B \in V\})$ is an affine resolvable $(4a, 4a-1, 2a, 2a-1)$-BIBD. Using the equations (\ref{bibd_eq_1}) and (\ref{bibd_eq_2}) this corresponds to a net FR code with parameters $\theta = 4a, \alpha = 2a, \beta = a, d = 2, \rho = 4a-1$ and $n=8a-2$ (see \cite{stinson2004}, Chapter 5). 
\begin{table}[t]
\begin{center}
\begin{tabular}{|c| c |}
  \hline
$ \{\infty, 1,2,4\}$ & $\{0,3,5,6\}$\\
 \hline
$ \{\infty,2,3,5\}$ & $\{1,4,6,0\}$\\
 \hline
$ \{\infty,3,4,6\}$ & $\{2,5,0,1\}$\\
 \hline
$ \{\infty,4,5,0\}$ & $\{3,6,1,2\}$\\
 \hline
$\{\infty,5,6,1\}$ &$\{4,0,2,3\}$\\
 \hline
$ \{\infty,6,0,2\}$& $\{5,1,3,4\}$\\
 \hline
$ \{\infty,0,1,3\}$ & $\{6,2,4,5\}$\\
 \hline
\end{tabular}
\end{center}
\caption{\label{hadamarddesign} Hadamard design obtained from the $(7, 3, 1)$-difference set in $\Omega=\bbf_7$.}
\end{table}

\begin{example}
$D=\{1,2,4\}$ is a $(7, 3, 1)$-difference set in $\Omega=\bbf_7$. We can construct the Fano plane by using the difference set $D$ which is a $(7, 3, 3,1)$-BIBD. By applying the above construction we can construct a FR code with parameters $\theta =8, n = 14,\alpha = 4, \rho = 7$. Corresponding storage nodes are presented in Table \ref{hadamarddesign} where each row of the table represents a parallel class. 
%
%
%
%
%
\end{example}
For this class of codes, $d$ is always 2. However, they offer more flexibility in the choice of $\beta$; unlike affine geometry based codes, we do not require $\beta$ to be a prime power.
\begin{remark}[\textbf{FR Codes derived from Hadamard Designs cannot be obtained by trivial $\beta$-expansion with $\beta=a$}] In addition, they provide another example of a family of FR codes that cannot be obtained by trivial $\beta$-expansion with $\beta=a$. To show this, we use Observation \ref{non_trivial_FR}. Suppose that such a code could be obtained by trivial $\beta$-expansion with $\beta=a$, then the original code would correspond to a FR code with $4$ symbols and storage capacity of $2$. In this case, 
there can be at most $\binom{4}{2} = 6$ nodes. In contrast, the code obtained from the Hadamard design has $8a-2>6$ nodes (as $a \geq 2$). 

Since any two non-parallel nodes share $a$ symbols in common, any $k=2$ nodes cover at least $3a$ symbols where $\alpha=2a$. Moreover, $k=2 = \lceil \frac{3a}{2a}\rceil$. Hence the code is optimal with respect to Singleton bound for $k=2$.
\end{remark}


We now discuss another construction of net FR codes that can be obtained from MOLS.
\begin{definition}[\textbf{Latin Square}] A \textit{Latin square} of order $a$ with entries from a set $\Omega$ with $|\Omega|=a$ is an $a \times a$ array $L$ in which every cell contains an element of $\Omega$ such that every row of $L$ is a permutation of $\Omega$ and every column of $L$ is a permutation of $\Omega$.
\end{definition}


\begin{definition}[\textbf{Orthogonal Latin Squares}] Suppose that $L_1$ and $L_2$ are Latin squares of order $a$ with entries from $\Omega_1$ and $\Omega_2$ respectively (where $|\Omega_1| = |\Omega_2|$). We say that $L_1$ and $L_2$ are orthogonal Latin squares if for every $x \in \Omega_1$ and for every $y \in \Omega_2$ there is a unique cell $(i,j)$ such that $L_1(i,j)=x$ and $L_2(i,j)=y.$
\end{definition}
Equivalently, one can consider the superposition of $L_1$ and $L_2$ in which each cell $(i,j)$ is occupied by the pair $(L_1(i,j), L_2(i,j))$. Then, $L_1$ and $L_2$ are orthogonal if and only if the resultant array has every value in $\Omega_1 \times \Omega_2$. A set of $r$ Latin squares $L_1, \dots, L_r$ of order $a$ are said to be \textit{mutually orthogonal} if $L_i$ and $L_j$ are orthogonal for all $1 \leq i < j \leq r$.

We now demonstrate a procedure of constructing net FR codes from MOLS \cite{yates1936}.
Let $\Omega=\{1, 2, \cdots, a^2\}$, and let $L_1, L_2, \cdots L_{r-2}$ be a set of $r-2$ MOLS of order $a$ ($r-2 \leq a-1$).
\begin{itemize}
\item Arrange the elements of $\Omega$ in a $a \times a$ array $A$. Each row and each column of $A$ corresponds to a storage node (this gives us $2a$ nodes).
\item Note that  $L_i$ takes values in $\{1, \dots, a\}$. Within $L_i$ identify the set of $(i,j)$ pairs where a given value $z \in \{1, \dots, a\}$ appears. Create a storage node by including the entries of $A$ corresponding to the identified $(i,j)$ pairs.
\item Repeat this for each $L_i$ and all $z \in \{1, \dots, a\}$. This creates another $(r-2)a$ storage nodes.
\end{itemize}
Thus, a total of $ra$ storage nodes of size $a$ can be obtained. Of course, one can choose fewer storage nodes if so desired.
\begin{example} \label{latin}
Let $a=4$, and $r = 2$. Then, we have the following construction.
$$A=\begin{array}{cccc}
1&2&3&4\\
5&6&7&8\\
9&10&11&12\\
13&14&15&16
\end{array},$$
$$
L_1=\begin{array}{cccc}
1&2&3&4\\
2&1&4&3\\
3&4&1&2\\
4&3&2&1
\end{array} \mbox{~and}~
 L_2=\begin{array}{cccc}
1&2&3&4\\
3&4&1&2\\
4&3&2&1\\
2&1&4&3
\end{array} .$$
We have the cells $(L_1(i,j), L_2(i,j))$ for $i,j=1,2,3,4$ in a matrix form as follows:
$$\begin{array}{cccc}
(1,1)&(2,2)&(3,3)&(4,4)\\
(2,3)&(1,4)&(4,1)&(3,2)\\
(3,4)&(4,3)&(1,2)&(2,1)\\
(4,2)&(3,1)&(2,4)&(1,3)
\end{array} .$$
As we can see from this matrix, all possible cells are covered by the cells $(L_1(i,j), L_2(i,j))$. Thus $L_1$ and $L_2$ are orthogonal.
We have the parallel classes and corresponding storage nodes illustrated in Example \ref{4mols}.
\end{example}

Note that in describing the above construction we assumed the existence of $r-2$ MOLS. We now discuss the issue of the existence of such structures. If $p$ is a prime number, $m$ is a positive integer, and $N = p^m$ then we can construct $N-1$ mutually orthogonal Latin squares as described below.
\begin{itemize}
\item[(i)] Define $L_a : \bbf_N \times \bbf_N \rightarrow \bbf_N$, by $(r, c) \mapsto ar + c$ (where the addition is over $\bbf_N$)
for all $a\in \bbf_N \setminus \{0\}$. Then, $L_a$ is a Latin square since for a given row $r$ (or column $c$) the column (or row) location of an element $s$ is uniquely specified. 
\item [(ii)] For any $a, b \in \bbf_N \setminus \{0\}$, $L_a$ and $L_b$ are orthogonal since for given ordered pair $(s, t)$ the system $ar+c = s$, $br+ c = t$, determine $r = (a- b)^{-1}(s-t)$ and $c = s-ar$ uniquely.
\end{itemize}
\begin{example}
Let N=3. Then  $\mathbb{F}_3=\{0,1,2\}$, $L_1: x+y$ and $L_2: 2x+y$. The two orthogonal Latin squares of order 3 constructed by the above method are
$$L_1=\begin{array}{ccc}
0&1&2\\
1&2&0\\
2&0&1\\
\end{array},~~\mbox{and}~~
~L_2=\begin{array}{ccc}
0&1&2\\
2&0&1\\
1&2&0\\
\end{array}$$
\end{example}
It turns out that in general, the construction described above produces a net FR code. The parameters are discussed in the following discussion.
\begin{lemma}\label{MOLScons}
The construction procedure described above produces 
a net FR code with $\theta = a^2, n=ra, d=\alpha = a, \rho = r$  where non-parallel nodes intersect in exactly one point.
\end{lemma}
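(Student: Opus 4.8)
The plan is to check, one property at a time, that the construction meets the definition of a net FR code, after which the parameters $d=a$ and $\beta=1$ are read off from the general discussion of net FR codes that precedes the lemma. Concretely, I would show: (i) the symbol set has size $\theta=a^2$ and each node has size $\alpha=a$; (ii) the $ra$ nodes split into exactly $r$ parallel classes, so that the code is resolvable with $\rho=r$; and (iii) any two non-parallel nodes meet in exactly one symbol, which is precisely the ``net'' condition with $b=1$.

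For (i), each row and each column of $A$ contains exactly $a$ cells, and for a fixed Latin square $L_i$ and value $z$ the cell set $\{(i',j'):L_i(i',j')=z\}$ also has exactly $a$ cells because every row of a Latin square of order $a$ is a permutation of $\{1,\dots,a\}$; hence $\alpha=a$, while $|\Omega|=a^2$ by construction. Counting nodes gives $a$ rows, $a$ columns, and $a$ value-nodes for each of the $r-2$ squares, totalling $n=ra$ (and these nodes are pairwise distinct for $a\ge 2$, since a Latin square is nonconstant on rows and columns and distinct squares share each ordered value-pair in only one cell). For (ii), I would exhibit the $r$ parallel classes explicitly: the set of $a$ row-nodes, the set of $a$ column-nodes, and for each $i$ the set of $a$ nodes indexed by the values of $L_i$. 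Within each of these families the nodes are pairwise disjoint (distinct rows, distinct columns, or cells carrying distinct values of a fixed $L_i$) and their union is all of $\Omega$, so each family is a parallel class; this gives a resolution into exactly $r$ classes, hence $\rho=r$, and simultaneously shows every symbol lies in exactly one node of each class, i.e.\ has repetition degree $r$.

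The main work is (iii), handled by cases on the types of the two non-parallel nodes. Row $i$ and column $j$ share only the cell $(i,j)$. A row (or column) and the value-$z$ node of $L_m$ share exactly one cell, since that row (or column) of $L_m$ is a permutation and therefore attains $z$ exactly once. The value-$z$ node of $L_m$ and the value-$w$ node of $L_{m'}$ with $m\neq m'$ share exactly one cell by orthogonality of $L_m$ and $L_{m'}$: the pair $(z,w)$ occurs in precisely one cell of their superposition. Thus every pair of non-parallel nodes meets in exactly one symbol, so the design is a net and $\calC$ is a net FR code with $(n,\theta,\alpha,\rho)=(ra,a^2,a,r)$. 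Finally, $\beta$-recoverability with $d=a$, $\beta=1$ is immediate: a failed node is non-parallel to all $a$ nodes of any surviving parallel class and meets each of them in exactly one of its $a$ symbols, so downloading that single symbol from each of those $d=a$ nodes reconstructs it. I do not expect a genuine obstacle here; the only step using anything beyond immediate counting is the last case of (iii), where orthogonality of the Latin squares is exactly what is needed.
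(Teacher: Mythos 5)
Your proposal is correct and follows essentially the same route as the paper's proof: identify the rows, columns, and value-classes of each Latin square as the $r$ parallel classes, then settle non-parallel intersections case by case, using the permutation property of a single Latin square for row/column versus value-class overlaps and orthogonality for overlaps between value-classes of distinct squares. The extra details you supply (explicit disjointness within classes, node counting, and the closing $\beta$-recoverability remark with $d=a$, $\beta=1$) are consistent with the paper, which handles the recoverability point in its general discussion of net FR codes preceding the lemma.
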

\begin{proof}
It is clear from the construction that $\theta = a^2$ and $n = ra$. Each storage node has $a$ symbols so that $\alpha = a$. We need to show that the code is resolvable. Towards this end, note that it is evident that we obtain a parallel class by considering the nodes corresponding to the rows of $A$ (a similar argument holds for the columns of $A$). Next, the nodes obtained by considering Latin square $L_i$ also form a parallel class, since the set of elements obtained by considering the $(i,j)$ pairs corresponding to $z_1 \in \{1, \dots, a\}$ are distinct from those corresponding to  $z_2 \in \{1, \dots, a\}$, if $z_1 \neq z_2$. As we have $r$ parallel classes, we obtain $\rho = r$. Next, consider the overlap between any two storage nodes belonging to different parallel classes. As $L_i$ and $L_j$ are orthogonal, any entry $(k,l) \in [a]\times[a]$ appears exactly once in the superposition of $L_i$ and $L_j$, which implies that the overlap between storage nodes from different parallel classes corresponding to the $L_i$'s is exactly one element. Similarly, a block from a parallel class corresponding to $L_i$ has exactly one overlap with the blocks corresponding to the rows and columns of $A$.
\end{proof}

\begin{remark}[\textbf{There are FR Codes which can be obtained from MOLS but not from Steiner Systems}] \label{MOLS1}
In general, the construction of orthogonal Latin squares is somewhat involved. However, the celebrated results of \cite{bose1960}, demonstrate the construction of two orthogonal Latin squares for all orders $N \neq 2, 6$.
This immediately allows us to construct net FR codes with the following parameters $n=4a, \theta=a^2,$ $d=\alpha=a,$ $\beta=1,$ and $\rho=4$ for any $a\neq 2, 6$. By applying Lemma \ref{lemma:net_FR_code} we can get the file size $\mathcal{M}=4a-6$ for $k=4$ for $a>6$ and it is optimal with respect to Singleton bound ({\it cf.} Observation 1).

This construction allows us to design some FR codes whose parameters cannot be obtained from Steiner systems. For instance, Let $\alpha=10$ and $\theta=100$. Then to construct a FR code we need use the Steiner system $S(2,10,100)$ which does not exist \cite{lam1989}. However the above construction with two orthogonal Latin squares of order 10 provides us a net FR code with $\alpha=10$ and $\theta=100$.\\
\end{remark}
\begin{lemma}[\textbf{File size of FR Codes obtained from MOLS}]
\label{MOLS2}
Let $p$ be a prime and $m$ be a positive integer, so that there exist $p^m -1$ MOLS of order $p^m$. Consider a subset of these $p^m -1$ MOLS of size $r$ and let $\mathcal{C}$ be a net FR code constructed from them. Then for any $k\leq r$, the code rate $R_{\calC} = (k(p^m)-\binom{k}{2})/n p^m$.
\end{lemma}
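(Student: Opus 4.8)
The plan is to realize $\calC$ as a sub-configuration of the affine plane $AG_2(p^m)$, and then to sandwich the file size between a generic inclusion--exclusion lower bound and an explicit ``general position'' construction that matches it. Write $q=p^m$ and identify the symbol set with $\mathbb{F}_q^{2}$, so that each storage node of $\calC$ is a line of $AG_2(q)$: the block of $L_a$ carrying value $z$ is $\{(i,j):ai+j=z\}$, a line of slope $-a$, and the $r$ parallel classes of $\calC$ correspond to the $r$ distinct finite nonzero slopes $-a$, $a\in\mathbb{F}_q^{\ast}$. Recall that every such line has exactly $q=\alpha$ points and that two non-parallel lines meet in exactly one point. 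Applying the inclusion--exclusion bound (\ref{eq:inc_enc_lower_bd}) of Theorem \ref{thm:inc_enc} to any $k$ nodes gives a union of size at least $k\alpha-\sum_{i<j}|V_i\cap V_j|\ge kq-\binom{k}{2}$, so $\calM\ge kq-\binom{k}{2}$; it remains to exhibit $k$ nodes whose union has exactly this size.

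For the matching bound I would produce $k$ lines with $k$ distinct slopes such that no three of them are concurrent; for such lines all pairwise intersections are singletons and all higher-order intersections are empty, so by Theorem \ref{thm:inc_enc} their union has exactly $kq-\binom{k}{2}$ points. Concretely, pick distinct slopes $s_1,\dots,s_k$ from the $r$ available ones and take $\ell_i:\ y=s_ix+s_i^{2}$ for $i=1,\dots,k$ (the tangent lines to the parabola $y=x^{2}$). Since the $\ell_i$ have pairwise distinct slopes, any three of them $\ell_i,\ell_j,\ell_l$ are concurrent if and only if $\det\bigl[(s_t,-1,s_t^{2})\bigr]_{t\in\{i,j,l\}}=0$; but this determinant equals, up to a column permutation, the Vandermonde determinant $\prod_{t<t'}(s_{t'}-s_t)\ne 0$. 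Hence no three of the $\ell_i$ share a point, so $|\cup_{i=1}^{k}\ell_i|=kq-\binom{k}{2}=kp^m-\binom{k}{2}$. Combining the two bounds gives $\calM=kp^m-\binom{k}{2}$ for all $1\le k\le r$, whence $R_{\calC}=\calM/(n\alpha)=(kp^m-\binom{k}{2})/(np^m)$.

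The one genuinely non-routine ingredient is this general-position construction: one must exhibit the $k$ lines inside the prescribed parallel classes, pairwise non-parallel and with no three concurrent, and crucially for the \emph{entire} range $k\le r$, with no side condition of the form $\binom{k-1}{2}<\alpha$ like the one in Lemma \ref{lemma:net_FR_code}. A greedy selection of lines only achieves the weaker range; it is the special algebraic choice $c_i=s_i^{2}$ (tangent lines to a conic) that removes the restriction. The remaining steps are bookkeeping: verifying the dictionary between the MOLS construction and lines of $AG_2(p^m)$, and, if one prefers to also include the row/column parallel classes in $\calC$, observing that restricting the $k$ chosen directions to finite slopes is harmless whenever $k\le q$, which holds here.
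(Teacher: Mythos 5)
Your proposal is correct and takes essentially the same route as the paper: you select exactly the same witness nodes (the paper picks, in the class of the MOLS generated by $a_i$, the line $a_i x + y = a_i^2$, which is your family $y = s_i x + s_i^2$ with $s_i = -a_i$) and sandwich the file size between the inclusion--exclusion lower bound and this no-three-concurrent configuration. The only differences are cosmetic: you certify non-concurrency via a Vandermonde determinant where the paper runs an explicit elimination argument with $\lambda,\mu$, and your description of the lines as ``tangent lines to $y=x^2$'' is not literally accurate (and unnecessary), since the Vandermonde computation stands on its own.
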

\begin{proof}
Let $\eta$ be a primitive element of $\mathbb{F}_{p^m}$. From the construction of the $r$ MOLS, we can associate a set of non-zero field elements $\{\eta^{\alpha_1}, \dots, \eta^{\alpha_r}\}$ so that the $i$-th Latin square is generated by the corresponding $\eta^{\alpha_i}$, where $\alpha_i$'s are distinct.
In the discussion below we demonstrate the existence of $r$ storage nodes that cover exactly $rp^m-\binom{r}{2}$ symbols. The argument will also show the required result for any $k<r$. From the inclusion-exclusion principle it is evident that any $r$ nodes cover at least $rp^m-\binom{r}{2}$ symbols. For demonstrating a set of nodes that cover exactly this number we first pick the storage nodes from different parallel classes and demonstrate that the intersection of any three nodes from this set is empty. 

Towards this end in the $i$-th MOLS, consider the storage node determined by the equation $\eta^{\alpha_i} x + y = \eta^{2\alpha_i}$. This specifies the set of nodes that we will be considering. Three nodes intersect in some symbol if the following system of equations has a solution.
\begin{align}
\label{eq:mols_file_size_1} \eta^{\alpha_i} x+y &= \eta^{2\alpha_i}\\
\label{eq:mols_file_size_2} \eta^{\alpha_j} x+y &= \eta^{2\alpha_j} \\
\label{eq:mols_file_size_3}  \eta^{\alpha_k} x+y &= \eta^{2\alpha_k}
\end{align}

Note that any two equations from the set above are linearly independent and have exactly one solution. Thus, if the above system has a solution, then there exist $\mu \neq 0 $ and $\lambda \neq 0$ such that
\begin{align*}
\lambda \eta^{\alpha_i} + \mu \eta^{\alpha_j} &= \eta^{\alpha_k}\\
\lambda + \mu &= 1\\
\lambda \eta^{2\alpha_i} + \mu \eta^{2\alpha_j} &= \eta^{2\alpha_k}
\end{align*}
Next, we note that it cannot be the case that $\eta^{2\alpha_i} = \eta^{2\alpha_j} = \eta^{2\alpha_k}$. 
To see this note that there are no zero divisors in a finite field so $z_1^2=z_2^2$ implies $z_1=z_2$ or $z_1=-z_2$. Thus, we can conclude that
%
$$\lambda=\frac{\eta^{\alpha_k}-\eta^{\alpha_j}}{\eta^{\alpha_i}-\eta^{\alpha_j}}=\frac{\eta^{2\alpha_k}-\eta^{2\alpha_j}}{\eta^{2\alpha_i}-\eta^{2\alpha_j}}.$$

However $$\frac{\eta^{2\alpha_k}-\eta^{2\alpha_j}}{\eta^{2\alpha_i}-\eta^{2\alpha_j}}=\frac{(\eta^{\alpha_k}-\eta^{\alpha_j})(\eta^{\alpha_k}+\eta^{\alpha_j})}{(\eta^{\alpha_i}-\eta^{\alpha_j})(\eta^{\alpha_i}+\eta^{\alpha_j})}$$
and this implies $\eta^{\alpha_i}=\eta^{\alpha_k}$ which is a contradiction. Thus, a solution to the system of equations in (\ref{eq:mols_file_size_1}) - (\ref{eq:mols_file_size_3})  does not exist. The result follows. 
\end{proof}

\begin{remark} The existence of $p^m-1$ MOLS implies the existence of an affine plane of order $p$ \cite{stinson2004}. Thus choosing $r=p^m-1=k$, we can obtain the corresponding file sizes for affine planes. Codes constructed from affine planes were also considered in \cite{el2010} under Steiner systems.
\end{remark}

\begin{example}
A FR code obtained from affine plane of order $3$ is depicted in Fig. \ref{fig:affine3}. This code can be obtained by following the construction outlined above with $p=3$ and $m=1$. It can be observed that this code is optimal with respect to the Singleton bound when $k=2$. ({\it cf.} Observation \ref{obs:meet_bound}). 
\begin{figure*} [t]
\centering
\includegraphics[scale=0.65]{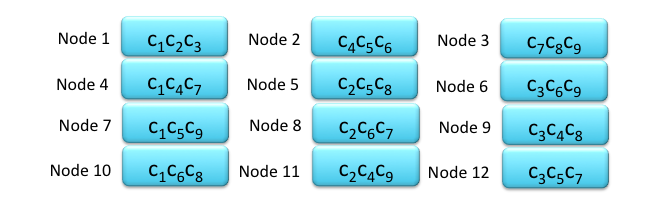}
\caption{\label{fig:affine3} FR code derived from an affine plane of order $3$.}
\end{figure*}
\end{example}

\subsection{Discussion of code parameters achieved by the proposed constructions}
\label{sec:discuss_code_params}
In this subsection, we summarize the range of DSS parameters that our constructions can achieve. Note that there are certain parameter restrictions that any FR code has to satisfy. We list these below. To avoid trivialities, we assume there are no repeated storage nodes in the system.
 $$n\alpha=\theta\rho, \text{~by counting the number of ones in the incidence matrix,}$$
$$n\leq \binom{\theta}{\alpha}, \text{~as the nodes are $\alpha$-sized subsets of the symbols,}$$
$$2 \leq \alpha \leq \theta-1, \text{~as the storage capacity can be at most $\theta -1$,}$$
$$\displaystyle 1 < k \leq d=\frac{\alpha}{\beta}, \text{~as the nodes need to be $\beta$-recoverable,}$$ and
$$\alpha +1 \leq \mathcal{M} \leq k\alpha.$$
If $\beta = 1$, the result of \cite{ernvall12} shows that the conditions are also sufficient for the existence of a FR code; however \cite{ernvall12} does not discuss the file size of such a code. It is evident that specific construction technique imposes additional restrictions. For instance, if the FR code is obtained from a resolvable design, then $\displaystyle \frac{\theta}{\alpha}$ needs to be an integer as it is the number of nodes in a parallel class. In Tables \ref{table:d_greater_k_beta_1} -- \ref{table:local_fr} ({\it cf.} section \ref{sec:back_rel_work}), we summarize the parameters (and the corresponding restrictions that apply) of the different constructions proposed above.  

We emphasize that any FR code is equivalent to a biregular bipartite graph ({\it cf.} Definition \ref{def:bipartite_gr}) and the file size for a given value of $k$ is closely related to the expansion properties of $k$-sized subsets of the storage nodes. It is well recognized that determining the expansion of an arbitrary bipartite graph is a computationally hard problem. In particular, precise numbers are known only for certain families of graphs. High probability results for expansion are known; however, such results are asymptotic in nature and do not provide deterministic constructions. Parameters such the file size can only be found by inspection of the randomly constructed graph. Furthermore, it is not clear whether the $\beta$-recoverability property can be shown for these codes. For these reasons, it is very hard to fully characterize the range of achievable parameters for FR codes (other than the necessary constraints presented above).

Reference \cite{silberstein2014} presents results on the file size of resolvable FR codes that we have considered above. However, we emphasize that Theorem 19 in \cite{silberstein2014} does not apply in our situation. For instance, consider the construction of FR codes from MOLS presented above and Lemma \ref{MOLS2}. Suppose that we choose $r = p^m -1$ and $k = p^m-2$. In this case it can be verified that for large $p$, the result of Theorem 19 in \cite{silberstein2014} does not apply. Furthermore, our affine resolvable design based construction has $\beta > 1$ and the results of \cite{silberstein2014} do not apply here.

On a different note, it can also be argued that one can simply treat the FR codes discussed in this section as local codes, by choosing a value of $k$ that is strictly larger than $d$ (note that $k$ is under our control as a system designer). However, we will now argue that this will result in significantly suboptimal codes with respect to the minimum distance bound in Lemma \ref{local_bound}. 
Suppose for instance that we consider a net FR code with parameters $(n=ar, \theta = a^2b, \alpha = ab, \rho = r)$ with $\beta=b$ and $d = a$. Note that there are $r$ parallel classes in the code. The bound in eq. (\ref{local_bound}), reduces to the Singleton bound as $d \alpha = a^2b = \theta$, so that $\lceil \frac{\mathcal{M}}{d\alpha} \rceil = \lceil \frac{\mathcal{M}}{\theta} \rceil = 1$. Thus, while increasing the value of $k$ above $d$ makes the code local, it will be far from the achieving the local code minimum distance bound in Lemma \ref{local_bound}. As a concrete example, consider a grid code (an instantiation of the net FR code) with $(n = 20,\theta = 100,\alpha = 10,\rho = 2)$. In this case $d=10$ and if $k = 6$, the code is optimal with respect to the Singleton bound as $d_{\min} = 20 - \lceil \frac{51}{10} \rceil + 1 = 15$. However if choose $k = 11$, so that it becomes a local code, the corresponding file size is $\mathcal{M} = 80$, so that the minimum distance bound is $20 - \lceil \frac{80}{10} \rceil + 1 = 13$. However, this code can only recover from at most 9 node failures and not $12$. Thus, such a code is a suboptimal local regenerating code. As all the resolvable codes presented in this section are instances of net FR codes, similar statements apply to all these constructions.


\section{Some characteristics of FR codes obtained from Kronecker Products}
\label{sec:iterative_construction}
The resolvable FR codes derived from affine resolvable designs and Hadamard designs are families of FR codes that have $\beta > 1$ and in many cases cannot be obtained via trivial $\beta$-expansion. In this section, we present the Kronecker product as a technique for obtaining new codes that have $\beta > 1$. 
In essence, we demonstrate the following result. Suppose that we start with a base FR code with storage capacity $\alpha$ where the pairwise intersection between storage nodes is at most one symbol and is such that its file size equals the inclusion-exclusion lower bound in eq. (\ref{eq:inc_enc_lower_bd}). If we consider the Kronecker product of the code with itself, we get a new FR code, where the normalized repair bandwidth equals $\alpha$ and a precise determination of the file size of the new code is possible. FR codes from Steiner systems and their transposes, form a large class of base FR codes that satisfy these requirements. We also demonstrate that the Kronecker product technique yields infinite families of FR codes that cannot be obtained from trivial $\beta$-expansion method. Furthermore, a careful analysis of the construction also allows to conclude that the failure resilience of these codes is as high as possible. We conclude by showing that the property of being resolvable in maintained under taking Kronecker products. 

We begin with a simple example that generates a code that meets the Singleton bound. Let $\theta = 2a+1$ for $a \geq 1$ and the incidence matrices $N_1$ and $N_2$ be equal to $J-I$ where $J$ denotes $\theta \times \theta$ all-ones matrix and $I$ denotes the identity matrix of the appropriate size. Then, the FR code $\mathcal{C}$ obtained from the incidence matrix $\bar{N}=N _1\otimes N_2$  has the following properties:
\begin{itemize}
\item The parameters of the code are $\bar{n}=\bar{\theta}=(2a+1)^2$ and  $\bar{\alpha}=\bar{\rho}=(2a)^2$.
\item A failed node can be recovered by contacting two nodes.
\item Contacting any two nodes recovers at least $2a(2a+1)$ symbols. Thus, when $k=2$, we have that the file size $\mathcal{M} = 2a(2a+1)$, where it can be observed  that $\lceil \frac{\mathcal{M}}{\alpha}\rceil = 2$, so that the code meets the Singleton bound.
\end{itemize}


\begin{example}\label{eg:kron_prod_eg}   Let $\mathcal{C}=(\Omega,V)$ be a FR code with $\Omega=\{1,2,3\}$ and  $V=\{V_1=\{2,3\},V_2=\{1,3\}, V_3=\{1,2\}\}$, so that its incidence matrix $N = \begin{bmatrix} 0 & 1 & 1\\ 1 & 0 & 1\\ 1 & 1 & 0\end{bmatrix}$. The new code is obtained from the incidence matrix of $\bar{N}=N \otimes N$ and the storage nodes are shown in Fig. \ref{fig:kron_local_eg_2}.

Suppose that the outer MDS code has parameters $(9,6)$, so that $\theta = 9, \calM = 6$. In this construction, the file can be recovered by contacting any two nodes, so that $k=2$ and that a failed node can be recovered by contacting two nodes and downloading two packets from each of them. 
\end{example}
 \begin{figure*} [t]
\centering
\includegraphics[scale=0.65]{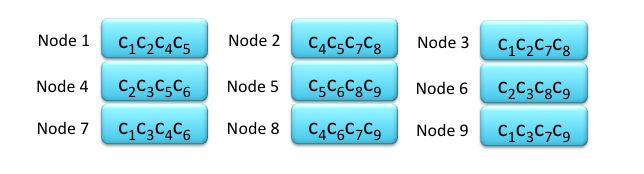}
\caption{\label{fig:kron_local_eg_2} A failed node can be recovered by contacting two nodes and downloading two packets from each of them. The code is resilient up to a total of three failures (corresponding to its minimum distance) and the file size is $6$.}
\end{figure*}

\begin{observation}[\textbf{Non-trivial FR Codes with $\beta>1$ obtained from Kronecker product}] \label{non_trivial_kron} A FR code $\calC$ with parameters $(n, \theta, \alpha, \rho)$, yields a new FR code $\bar{\calC}$ with parameters $(n^2, \theta^2, \alpha^2 \rho^2)$ via Kronecker product  method with itself. If $\alpha$ does not divide $\theta$ then storage nodes of $\bar{\calC}$ cannot be obtained from a trivial $\beta$-expansion with $\beta=\alpha$.
\end{observation}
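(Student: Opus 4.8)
The plan is to invoke Observation \ref{non_trivial_FR}, which states that a FR code with parameters $(n', \theta' m, \alpha' m, \rho')$, $\beta = m$ and distinct storage nodes cannot arise from a trivial $\beta$-expansion whenever $n' > \binom{\theta'}{\alpha'}$. So the work reduces to two things: (i) checking that the Kronecker square $\bar{\calC}$ really does have parameters of the form $(n', \theta' m, \alpha' m, \rho')$ with $m = \alpha$, i.e. exhibiting the ``underlying'' parameters $\theta' = \theta$, $\alpha' = \alpha$ (since $\theta \cdot \alpha$ and $\alpha \cdot \alpha$ are the symbol count and storage capacity of $\bar{\calC}$, and $\beta = \alpha$ by the Kronecker construction), and (ii) verifying the inequality $n^2 > \binom{\theta}{\alpha}$ under the hypothesis that $\alpha \nmid \theta$, together with the claim that the storage nodes of $\bar{\calC}$ are distinct.

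First I would recall from the discussion preceding the statement (the $J-I$ example and Example \ref{eg:kron_prod_eg}, and the general Kronecker setup in Section \ref{sec:iterative_construction}) that if $\calC$ has parameters $(n,\theta,\alpha,\rho)$ then $\bar{\calC} = \calC \otimes \calC$ has parameters $(n^2, \theta^2, \alpha^2, \rho^2)$ and normalized repair bandwidth $\beta = \alpha$ (a failed node is repaired by contacting $d = \alpha$ nodes and downloading $\alpha$ packets from each, since $\alpha^2 = \beta d$). Thus $\bar{\calC}$ has the shape $(n^2, \theta\cdot\alpha, \alpha\cdot\alpha, \rho^2)$ with ``expansion factor'' $m = \alpha$, so Observation \ref{non_trivial_FR} applies with $\theta' = \theta$, $\alpha' = \alpha$, and the relevant threshold is $\binom{\theta}{\alpha}$. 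Next I would argue distinctness of the storage nodes of $\bar{\calC}$: the columns of $\bar{N} = N \otimes N$ are indexed by pairs $(i,j)$ and the $(i,j)$ column is the ``stacked'' incidence vector whose block structure records $V_i$ in positions dictated by $V_j$; two such columns coincide only if $V_i = V_{i'}$ and $V_j = V_{j'}$, and since $\calC$ itself has distinct nodes (a standing assumption, to avoid trivialities), all $n^2$ columns of $\bar{N}$ are distinct. Then it remains to show $n^2 > \binom{\theta}{\alpha}$.

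For the counting inequality, I would use $n = \theta\rho/\alpha$ (the relation $n\alpha = \theta\rho$ from counting ones in the incidence matrix) so that $n^2 = \theta^2\rho^2/\alpha^2 \ge \theta^2/\alpha^2$ since $\rho \ge 1$; in fact $\rho \ge 2$ for any genuine FR code, giving $n^2 \ge 4\theta^2/\alpha^2$, but I expect the main point is subtler than a crude bound. The hypothesis $\alpha \nmid \theta$ is what rules out the degenerate ``grid-like'' cases where $\binom{\theta}{\alpha}$ could actually exceed $n^2$; when $\alpha \mid \theta$ one can have resolvable codes where $n$ is as small as $\theta\rho/\alpha$ with $\rho$ small, and $\binom{\theta}{\alpha}$ is huge. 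So the real content is: \emph{given} $\alpha \nmid \theta$ and the pairwise-intersection constraints that a base FR code satisfies, the number of blocks $n$ is forced to be large enough that $n^2$ clears $\binom{\theta}{\alpha}$. I would extract this by noting that when $\alpha \nmid \theta$ the code is not a single parallel class covering all of $\Omega$, hence $\rho \ge 2$ forces a comparison, and — more to the point — for the base codes of interest here (Steiner systems and their transposes, where pairwise overlaps are exactly one) one has $\theta$ comparable to $\alpha^2$, so $n \approx \theta^2/\alpha^2 \approx \alpha^2$ while $\binom{\theta}{\alpha}$ grows only like $\binom{\alpha^2}{\alpha}$; a direct estimate $\binom{\theta}{\alpha} \le (e\theta/\alpha)^\alpha$ against $n^2 = \theta^2\rho^2/\alpha^2$ then closes the gap.

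\textbf{Main obstacle.} The delicate step is pinning down exactly why $\alpha \nmid \theta$ suffices in full generality (not just for Steiner-type base codes) to guarantee $n^2 > \binom{\theta}{\alpha}$. A clean argument likely runs through: if $\alpha \nmid \theta$ then no parallel class of $\alpha$-subsets can tile $\Omega$, which forces additional structure/overlaps and hence more blocks; quantifying ``more blocks'' as $n^2 > \binom{\theta}{\alpha}$ may require either the extra hypothesis that the base code has bounded pairwise intersections (which the surrounding text assumes for the Kronecker construction anyway) or a separate combinatorial lemma. I would therefore present the proof by first reducing to Observation \ref{non_trivial_FR} and distinctness, and then discharging $n^2 > \binom{\theta}{\alpha}$ either via the generic bound $n = \theta\rho/\alpha$ combined with the known range of $\theta$ for the base families in Section \ref{sec:iterative_construction}, or by citing the pairwise-intersection hypothesis under which the whole Kronecker machinery is developed.
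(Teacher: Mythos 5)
Your reduction goes through the wrong mechanism, and the step you flag as the ``main obstacle'' is not an obstacle in the paper at all — it is a sign that the approach is off. First, the decomposition you feed into Observation \ref{non_trivial_FR} is incorrect: $\bar{\calC}$ has $\theta^2$ symbols, not $\theta\cdot\alpha$, so if $\bar{\calC}$ were a trivial $\beta$-expansion with $\beta=m=\alpha$, the hypothetical base code would have $\tilde{\alpha}=\alpha^2/\alpha=\alpha$ but $\tilde{\theta}=\theta^2/\alpha$ symbols; the relevant threshold in your counting argument would therefore be $\binom{\theta^2/\alpha}{\alpha}$, not $\binom{\theta}{\alpha}$, and you give no way to establish $n^2$ exceeds it (nor could you in general — no such inequality is needed or claimed). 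Second, and more fundamentally, the hypothesis $\alpha\nmid\theta$ does not enter through any node-counting or pairwise-intersection considerations, as you speculate; it is a pure divisibility obstruction. The paper's argument (spelled out in the Fano-plane example immediately following the observation) is one line: by Definition \ref{def:trivial_beta}, a trivial $\beta$-expansion with expansion factor $m=\alpha$ forces integers $\tilde{\theta},\tilde{\alpha}$ with $\tilde{\theta}\,\alpha=\theta^2$ and $\tilde{\alpha}\,\alpha=\alpha^2$, i.e.\ $\tilde{\theta}=\theta^2/\alpha$, and the non-divisibility of $\theta$ by $\alpha$ rules this factorization out (in the Fano case, $m=3$ would require $3\mid 49$). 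No appeal to Observation \ref{non_trivial_FR}, to distinctness of nodes, or to bounds of the form $n^2>\binom{\theta}{\alpha}$ is made or needed.

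Your fallback suggestions (restricting to Steiner-type base codes, invoking bounded pairwise intersections, or estimating $\binom{\theta}{\alpha}\le (e\theta/\alpha)^{\alpha}$) would not repair this, because they address an inequality that is not the content of the statement; the statement is about the arithmetic impossibility of writing the parameters $(\theta^2,\alpha^2)$ in the form $(\tilde{\theta}\,m,\tilde{\alpha}\,m)$ with $m=\alpha$. If you want to be careful beyond the paper, the honest form of the divisibility argument requires $\alpha\nmid\theta^2$ (which coincides with $\alpha\nmid\theta$ whenever $\alpha$ is squarefree, as in the paper's Corollary with $\alpha=3$), but that refinement is orthogonal to the counting route you propose.
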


\begin{example}
Consider the FR code obtained by the Kronecker product of the Fano plane (shown in Fig. \ref{fanoplane1}) with itself. The resultant code will have $49$ symbols with nodes with storage capacity $9$. If this code could be obtained by trivial $\beta$-expansion from a base code with number of symbols $\tilde{\theta}$ and storage capacity $\tilde{\alpha}$, then there has to exist an integer $m$ so that
\begin{align*}
\tilde{\theta} m &= 49, \text{~and}\\
\tilde{\alpha} m & = 9.
\end{align*}
As $9 \nmid  49$, the only feasible solution to the above system of equation is $\tilde{\theta} = 49, \tilde{\alpha} = 9$ and $m = 1$, which corresponds to the Kronecker product code.
\end{example}

In fact, there exists a family of codes whose parameters cannot be obtained via trivial $\beta$-expansion, as discussed in the corollary below.
\begin{corollary} Let $\calC$ be a FR code obtained from a Steiner system $S(2,3,6u+1)$ for some integer $u$. Then the FR code $\bar{\calC}$, which is obtained by the Kronecker product of $\calC$ with itself, cannot be obtained by trivial $\beta$-expansion with $\beta=3$.
\end{corollary}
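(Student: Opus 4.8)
The plan is to obtain this as an immediate consequence of Observation~\ref{non_trivial_kron}. The first step is to record the parameters of the base FR code $\calC$ coming from a Steiner system $S(2,3,6u+1)$. Using the correspondence between $S(2,\alpha,\theta)$ Steiner systems and BIBDs stated earlier, $\calC$ is a FR code with symbol count $\theta = 6u+1$, storage capacity $\alpha = 3$, repetition degree $\rho = \frac{\theta-1}{\alpha-1} = 3u$, block count $n = \frac{\theta(\theta-1)}{\alpha(\alpha-1)} = u(6u+1)$, and $\beta = 1$. (One should also note that such a system exists at all: $S(2,3,\theta)$ exists precisely for $\theta \equiv 1,3 \pmod 6$, and $6u+1 \equiv 1 \pmod 6$, so $\calC$ is a bona fide FR code for every $u \geq 1$.)

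The second step is the one-line arithmetic observation that $\alpha \nmid \theta$: since $6u+1 \equiv 1 \pmod 3$, we have $3 \nmid 6u+1$. This is exactly the hypothesis of Observation~\ref{non_trivial_kron} with $\alpha = 3$. Applying that observation to $\calC$ yields that the FR code $\bar\calC$ formed as the Kronecker product of $\calC$ with itself, which has parameters $(n^2,\theta^2,\alpha^2,\rho^2) = \bigl(u^2(6u+1)^2,\,(6u+1)^2,\,9,\,9u^2\bigr)$ and $\beta = \alpha = 3$, has storage nodes that cannot be produced by a trivial $\beta$-expansion with $\beta = 3$. That is precisely the assertion of the corollary.

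For completeness one can unwind why Observation~\ref{non_trivial_kron} gives this: a trivial $\beta$-expansion with $\beta = 3$ would require a base FR code with $\tilde\theta$ symbols and storage capacity $\tilde\alpha$ obeying $3\tilde\theta = \bar\theta = (6u+1)^2$ and $3\tilde\alpha = \bar\alpha = 9$, and the first equation has no integer solution because $(6u+1)^2 \equiv 1 \pmod 3$. There is essentially no obstacle in the argument — it is a divisibility check once Observation~\ref{non_trivial_kron} is available; the only mild bookkeeping is confirming the existence of $S(2,3,6u+1)$ and that its Kronecker square is a legitimate FR code, both of which follow from material already in the paper.
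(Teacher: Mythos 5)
Your proof is correct and follows the same route the paper intends: the corollary is stated as an immediate consequence of Observation~\ref{non_trivial_kron} (illustrated by the preceding Fano-plane example), and your divisibility check $3 \nmid (6u+1)^2$ is exactly the needed hypothesis. The extra bookkeeping you include (parameters of $S(2,3,6u+1)$, its existence, and unwinding the expansion equations $3\tilde\theta=(6u+1)^2$, $3\tilde\alpha=9$) is consistent with the paper and fills in details the paper leaves implicit.
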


\begin{lemma}\label{lemma:kron_prod_steiner} Let $\mathcal{C}_1=(\Omega_1,V_1)$ and $\mathcal{C}_2=(\Omega_2,V_2)$ be two FR codes with parameters $(n_1, \theta_1, \alpha, \rho_1)$ and $(n_2, \theta_2, \alpha, \rho_2)$ such that any two storage nodes in $\mathcal{C}_1$ (or $\mathcal{C}_2$) have at most one symbol in common. Let $\mathcal{M}_1$ and $\mathcal{M}_2$ denote the file sizes of  $\mathcal{C}_1$ and $\mathcal{C}_2$ respectively for a given $k_1 \leq \min{ \{n_1,n_2\}}$. Suppose that either $\mathcal{M}_1$ or $\mathcal{M}_2$ is equal to $k_1 \alpha- \binom{k_1}{2}$. Then the FR code $\mathcal{C}$ obtained from Kronecker product of
$\mathcal{C}_1$ and $\mathcal{C}_2$ has parameters $(n = n_1n_2, \theta = \theta_1 \theta_2, \alpha^2, \rho_1\rho_2)$. The file size for $\mathcal{C}$ when $k=k_1$ is given by $k_1 \alpha^2-\alpha \binom{k_1}{2}$. 
\end{lemma}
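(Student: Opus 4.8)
The plan is to read off the parameters of $\mathcal{C}$ directly from the Kronecker product $N_1\otimes N_2$ of the incidence matrices, and then to determine the file size for $k=k_1$ by sandwiching it between a lower bound from the inclusion--exclusion inequality \eqref{eq:inc_enc_lower_bd} and a matching upper bound exhibited by an explicit set of $k_1$ storage nodes.

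\emph{Parameters.} First I would index the symbols of $\mathcal{C}$ by pairs $(p,q)\in\Omega_1\times\Omega_2$ and its nodes by pairs $(B,W)$ with $B\in V_1$ and $W\in V_2$. From the definition of $N_1\otimes N_2$, the node indexed by $(B,W)$ contains the symbol $(p,q)$ precisely when $p\in B$ and $q\in W$; in other words this node is the product set $B\times W\subseteq\Omega_1\times\Omega_2$. Counting then yields $n_1n_2$ nodes, $\theta_1\theta_2$ symbols, node size $|B|\cdot|W|=\alpha^2$, and each symbol $(p,q)$ contained in exactly $\rho_1\rho_2$ nodes, which is the claimed tuple $(n_1n_2,\theta_1\theta_2,\alpha^2,\rho_1\rho_2)$. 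That $\mathcal{C}$ is a genuine FR code (with $\beta=\alpha$) is inherited from the product structure as in the discussion around Observation~\ref{non_trivial_kron}: if a failed node $B$ of $\mathcal{C}_1$ is recoverable from surviving nodes $B^{(1)},\dots,B^{(\alpha)}$, then $B\times W$ is recoverable from $B^{(1)}\times W,\dots,B^{(\alpha)}\times W$ for any fixed $W\in V_2$, which also pins down $d$ and $\beta$.

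\emph{Lower bound on the file size.} The structural fact I would use is that intersections in $\mathcal{C}$ factor: $(B\times W)\cap(B'\times W')=(B\cap B')\times(W\cap W')$. Since two distinct nodes of $\mathcal{C}_1$ (and of $\mathcal{C}_2$) share at most one symbol, while $|B\cap B|=\alpha$, I would check the three cases for distinct nodes $(B,W)\neq(B',W')$ of $\mathcal{C}$ --- namely $B=B'$ with $W\neq W'$, then $W=W'$ with $B\neq B'$, then $B\neq B'$ with $W\neq W'$ --- and conclude $|(B\times W)\cap(B'\times W')|\le\alpha$ in all of them. Substituting this bound on pairwise node overlaps into \eqref{eq:inc_enc_lower_bd} shows that any $k_1$ nodes of $\mathcal{C}$ cover at least $k_1\alpha^2-\binom{k_1}{2}\alpha$ symbols, so $\mathcal{M}\ge k_1\alpha^2-\alpha\binom{k_1}{2}$.

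\emph{Matching upper bound and conclusion.} Here the hypothesis enters. Assume without loss of generality $\mathcal{M}_1=k_1\alpha-\binom{k_1}{2}$ (the case of $\mathcal{C}_2$ is symmetric), and pick $k_1$ nodes $B_1,\dots,B_{k_1}\in V_1$ realizing this minimum, so that $\bigl|\bigcup_{a=1}^{k_1}B_a\bigr|=k_1\alpha-\binom{k_1}{2}$. Fixing an arbitrary node $W\in V_2$, the $k_1$ distinct nodes $B_1\times W,\dots,B_{k_1}\times W$ of $\mathcal{C}$ satisfy $\bigcup_{a=1}^{k_1}(B_a\times W)=\bigl(\bigcup_{a=1}^{k_1}B_a\bigr)\times W$, which has cardinality exactly $\bigl(k_1\alpha-\binom{k_1}{2}\bigr)\alpha=k_1\alpha^2-\alpha\binom{k_1}{2}$; hence $\mathcal{M}\le k_1\alpha^2-\alpha\binom{k_1}{2}$, and the two bounds match. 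I expect the only real (and mild) obstacle to be the little case analysis behind the factored-intersection bound; everything else is bookkeeping, with the one caveat that the hypothesis $k_1\le\min\{n_1,n_2\}$ is exactly what makes $\mathcal{M}_1$ (or $\mathcal{M}_2$) meaningful for this $k_1$ and guarantees the $k_1$ product nodes above actually exist among the $n_1n_2$ nodes of $\mathcal{C}$.
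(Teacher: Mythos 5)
Your proof is correct, and its skeleton matches the paper's: both get the lower bound $k_1\alpha^2-\alpha\binom{k_1}{2}$ from the fact that any two distinct product nodes overlap in at most $\alpha$ symbols (the intersection factors as $(B\cap B')\times(W\cap W')$, exactly your case analysis and the paper's identity $(c_i\otimes d_j)^t(c_{i'}\otimes d_{j'})=c_i^tc_{i'}\cdot d_j^td_{j'}\leq\alpha$) together with the inclusion--exclusion inequality \eqref{eq:inc_enc_lower_bd}, and both exhibit the same witnessing family, namely a minimizing set of $k_1$ nodes of $\mathcal{C}_1$ (WLOG) each Kronecker-paired with one fixed node of $\mathcal{C}_2$. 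Where you genuinely differ is in how the size of that witness's union is evaluated. The paper first appeals to its Appendix lemma to conclude that any family achieving $\mathcal{M}_1=k_1\alpha-\binom{k_1}{2}$ has all pairwise intersections of size exactly one and all triple intersections empty, and then runs inclusion--exclusion on the product nodes (pairwise overlap exactly $\alpha$, triple overlaps zero) to land on exactly $k_1\alpha^2-\alpha\binom{k_1}{2}$. You instead use the identity $\bigcup_a (B_a\times W)=\bigl(\bigcup_a B_a\bigr)\times W$, so the union's cardinality is just $\mathcal{M}_1\cdot\alpha$; this turns the matching upper bound into a one-line computation and makes the Appendix lemma unnecessary for this step, which is a mild but real simplification. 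Both routes are sound; the paper's detour buys the finer structural information about minimizing families (exact pairwise intersections, empty triple intersections), which it states and proves separately anyway, while your route is more economical and generalizes immediately to the symmetric case where $\mathcal{M}_2$ meets the bound. Your closing remarks on the parameters and on $\beta$-recoverability being inherited from the factors are consistent with the paper, which handles recoverability of Kronecker products outside this lemma.
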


\begin{proof} Let $N_1$ and $N_2$ denote the incidence matrices of the FR codes $\mathcal{C}_1$ and $\mathcal{C}_2$. Let $c_i$ denote a column in $N_1$ and $d_i$ denote a column in $N_2$. The overlap between any two columns in $N_1 \otimes N_2$ can be expressed as $(c_i \otimes d_j)^t (c_{i'} \otimes d_{j'}) = c_i^t c_{i'} \otimes d_j^t d_{j'} \leq \alpha$. Thus the overlap between any two columns in $N_1 \otimes N_2$ is at most $\alpha$ and therefore the file size of $\mathcal{C}$ is at least $k_1 \alpha^2-\alpha \binom{k_1}{2}$.

We know that any two nodes in $\mathcal{C}_1$ and $\mathcal{C}_2$ have at most one symbol in common. Thus, using a simple inclusion-exclusion principle argument implies that $\mathcal{M}_i \geq k_1 \alpha- \binom{k_1}{2}$ for $i = 1,2$. Furthermore, we are given that one of them meets this lower bound. Without loss of generality we assume that $\mathcal{M}_1 = k_1 \alpha- \binom{k_1}{2}$. This implies that there exists a set of column vectors $\mathcal{I}_1 = \{c_1, \dots, c_{k_1}\}$ in $N_1$ such that they cover $\mathcal{M}_1 = k_1 \alpha- \binom{k_1}{2}$ symbols, i.e., any two columns from $\mathcal{I}_1$ have exactly one symbol in common and any three columns from $\mathcal{I}_1$ have no symbols in common (see Appendix).

Next, we demonstrate a set of columns in $N_1 \otimes N_2$ that meets this lower bound. Let us consider a column in $N_2$, denoted $d_1$ and examine $N_1 \otimes d_1$. Within this set we have a subset of $k_1$ columns denoted $\mathcal{I}_2= \{c_i \otimes d_1$, for $c_i \in \mathcal{I}_1\}$. Now $(c_i \otimes d_1)^t (c_j \otimes d_1) = c_i^t c_j \otimes d_1^T d_1 = \alpha$, whereas any three column vectors from $\mathcal{I}_2$ will have a zero overlap. Thus, the number of symbols covered by this set is exactly $k_1 \alpha^2-\alpha \binom{k_1}{2}$. \end{proof}

This lemma can be used to determine the file size for the Kronecker product of certain Steiner systems.
\begin{lemma}
\label{lemma:max_arc_kron}
Let $\mathcal{C}$ be a FR code obtained from a Steiner system $S(2,\alpha,\theta)$ with $\displaystyle \rho=\frac{\theta-1}{\alpha-1}$ such that it has a maximal arc of size $\rho + 1$. Then the Kronecker product of the transposed code $\mathcal{C}^T$ with itself is such that the file size equals $k\rho^2 - \rho \binom{k}{2}$ for $1 \leq k \leq \rho$.
\end{lemma}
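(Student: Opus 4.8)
The plan is to derive this statement by stitching together Lemma~\ref{lemma:maximal_arc} and Lemma~\ref{lemma:kron_prod_steiner}; essentially all of the work lies in translating between $\mathcal{C}$ and its transpose so that the hypotheses of the Kronecker-product lemma are visibly satisfied, after which the file-size formula drops out. First I would set up the parameter dictionary. Since $\mathcal{C}$ arises from the Steiner system $S(2,\alpha,\theta)$, every symbol of $\mathcal{C}$ lies in $\rho=\frac{\theta-1}{\alpha-1}$ nodes, every pair of symbols lies in exactly one node, and $\mathcal{C}$ has $n=\frac{\theta(\theta-1)}{\alpha(\alpha-1)}$ nodes. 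Transposing, $\mathcal{C}^T$ is an FR code (it is $1$-recoverable, exactly because any two nodes of the transpose of a $t=2$ Steiner system share precisely one symbol) with $n$ symbols, storage capacity $\rho$, repetition degree $\alpha$, and $\theta$ nodes; in particular any two storage nodes of $\mathcal{C}^T$ intersect in exactly one symbol, which is the ``at most one symbol in common'' hypothesis of Lemma~\ref{lemma:kron_prod_steiner} with the common storage capacity playing the role of $\rho$.

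Next I would invoke Lemma~\ref{lemma:maximal_arc}: because $\mathcal{C}$ has a maximal arc of size $\rho+1$, the file size of $\mathcal{C}^T$ equals $k\rho-\binom{k}{2}$ for every $k$ with $1\le k\le\rho+1$. Restricting to $1\le k\le\rho$, this says the file size of $\mathcal{C}^T$ attains the inclusion--exclusion lower bound $k\rho-\binom{k}{2}$ exactly. Taking $\mathcal{C}_1=\mathcal{C}_2=\mathcal{C}^T$ and $k_1=k$, this is precisely the equality hypothesis ``$\mathcal{M}_1=k_1\alpha-\binom{k_1}{2}$'' of Lemma~\ref{lemma:kron_prod_steiner} (with $\alpha$ there instantiated to $\rho$ here), and the range requirement $k_1\le\min\{n_1,n_2\}=\theta$ holds since $k\le\rho<\theta$.

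Applying Lemma~\ref{lemma:kron_prod_steiner} to $\mathcal{C}_1=\mathcal{C}_2=\mathcal{C}^T$ then yields that $\mathcal{C}^T\otimes\mathcal{C}^T$ is an FR code with $\theta^2$ nodes, $n^2$ symbols, storage capacity $\rho^2$ and repetition degree $\alpha^2$, whose file size for $k=k_1$ equals $k_1\rho^2-\rho\binom{k_1}{2}$. Since this is valid for every $1\le k\le\rho$, the claimed formula follows, and I would not need to re-examine the explicit witness set of columns — Lemma~\ref{lemma:kron_prod_steiner} already constructs it from the maximal-arc columns. The one place requiring care (the ``hard part,'' such as it is) is purely bookkeeping: transposition swaps $\theta\leftrightarrow n$ and $\alpha\leftrightarrow\rho$, so one must confirm that the target range $1\le k\le\rho$ sits inside both the range $1\le k\le\rho+1$ on which Lemma~\ref{lemma:maximal_arc} gives the exact file size and the range $k\le\theta$ demanded by Lemma~\ref{lemma:kron_prod_steiner}; once these are aligned, both lemmas apply verbatim and no further computation is needed.
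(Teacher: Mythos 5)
Your proposal is correct and follows exactly the paper's route: the paper's proof is the one-line observation that the result follows from Lemma~\ref{lemma:maximal_arc} and Lemma~\ref{lemma:kron_prod_steiner}, and your parameter bookkeeping (transposition swapping $\theta \leftrightarrow n$, $\alpha \leftrightarrow \rho$, pairwise intersections of size one in $\mathcal{C}^T$, and the range check $k \le \rho < \theta$) is precisely the verification the paper leaves implicit.
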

\begin{proof}
The result follows from Lemma \ref{lemma:maximal_arc} and Lemma \ref{lemma:kron_prod_steiner}.
\end{proof}

\begin{remark} By Skolem's construction \cite{skolem1958} we have $S(2,3,\theta)$ for all $\theta\geq7$ and $\theta \equiv 1,3 \mod{6}$. Moreover, for all $\theta\geq7$ and $\theta \equiv 3,7 \mod{12}$ a Steiner system $S(2,3,\theta)$ has at least one maximal arc \cite{quattrocchi1993}. Thus, Lemma \ref{lemma:max_arc_kron} applies. 
\end{remark}

\begin{lemma}\label{Kronec} Let $N_1$ and $N_2$ be incidence matrices of two FR codes such that the size of the pairwise intersection of distinct nodes is at most $1$. Let $(n_1, \theta_1, \alpha, \rho_1)$ and $(n_2, \theta_2, \alpha, \rho_2)$ be parameters of these FR codes respectively. Assume that the FR code obtained from $\bar{N}=N_1 \otimes N_2$ has normalized repair bandwidth $\beta=\alpha$. Then the  FR code $\bar{N}$ is resilient up to $\rho_1\rho_2-1$ failures.
\end{lemma}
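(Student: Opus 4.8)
The plan is to show that the Kronecker product code $\bar{\mathcal C}=(\bar\Omega,\bar V)$ has $\rho_{res}=\rho_1\rho_2-1$. It has storage $\bar\alpha=\alpha^2$, repetition degree $\bar\rho=\rho_1\rho_2$, and, since $\beta=\alpha$, repair degree $d=\bar\alpha/\beta=\alpha$. One direction is free: erasing all $\bar\rho$ nodes containing a fixed symbol makes that symbol unrecoverable, so $\rho_{res}\le\bar\rho-1=\rho_1\rho_2-1$. The content is the matching lower bound.

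I would first pin down the repair structure of $\bar{\mathcal C}$. Writing its nodes as $\bar V_{(i,j)}=V^{(1)}_i\times V^{(2)}_j$, where $V^{(1)}_i, V^{(2)}_j$ range over the nodes of the two base codes, the hypothesis that distinct base nodes overlap in at most one symbol gives $\bar V_{(i,j)}\cap\bar V_{(i',j')}=(V^{(1)}_i\cap V^{(1)}_{i'})\times(V^{(2)}_j\cap V^{(2)}_{j'})$, which has size at most $1$ when $i\ne i'$ and $j\ne j'$, and otherwise is a full ``line'' of the $\alpha\times\alpha$ array $V^{(1)}_i\times V^{(2)}_j$ --- a row $\{u\}\times V^{(2)}_j$ when $i'=i$, a column $V^{(1)}_i\times\{v\}$ when $j'=j$. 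Since $d=\alpha$ and $\beta=\alpha$, every helper used to repair $\bar V_{(i,j)}$ must supply a full $\alpha$-subset of $\bar V_{(i,j)}$ lying inside it, hence a line; and since $\alpha$ lines can tile the $\alpha\times\alpha$ array only if they are all rows or all columns (a row and a column always overlap), repair of $\bar V_{(i,j)}$ must proceed either through $\alpha$ ``row-helpers'' $\bar V_{(i',j)}$ or through $\alpha$ ``column-helpers'' $\bar V_{(i,j')}$.

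With this in hand, fix a failure set $F$ and a failed node $\bar V_{(i,j)}\in F$. For each symbol $u\in V^{(1)}_i$ the row $\{u\}\times V^{(2)}_j$ lies in exactly $\rho_1$ nodes of $\bar{\mathcal C}$ (the $\bar V_{(i',j)}$ with $u\in V^{(1)}_{i'}$); one of them is $\bar V_{(i,j)}$ itself, so it has $\rho_1-1$ helper candidates, and these candidate sets are disjoint over the $\alpha$ choices of $u$ (two symbols of $V^{(1)}_i$ cannot share another base node). Dually, each column has $\rho_2-1$ helper candidates with disjoint candidate sets. Hence row-repair of $\bar V_{(i,j)}$ fails only if $F$ contains all $\rho_1-1$ helpers of some row, and column-repair fails only if $F$ contains all $\rho_2-1$ helpers of some column; as the two offending helper families together with $\bar V_{(i,j)}$ are pairwise disjoint, $\bar V_{(i,j)}$ becomes unrecoverable only when $|F|\ge(\rho_1-1)+(\rho_2-1)+1$.

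The main obstacle is that this disjointness argument, by itself, only certifies resilience up to $(\rho_1-1)+(\rho_2-1)=\rho_1+\rho_2-2$ failures, which is strictly weaker than $\rho_1\rho_2-1$ unless $\rho_1=\rho_2=2$. To reach the stated bound one has to either allow a degraded-mode repair that may contact all surviving nodes --- in which case recoverability of every failed node is immediate from $\bar\rho=\rho_1\rho_2$, since each of its $\alpha^2$ symbols still survives in some node after $\rho_1\rho_2-1$ erasures --- or else exploit the finer structure of the specific base codes appearing in the applications (Steiner systems and their transposes, {\it cf.}\ Lemma \ref{lemma:maximal_arc} and Lemma \ref{lemma:max_arc_kron}) to rule out the adversarial ``one fully-failed row plus one fully-failed column'' configuration. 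Settling which of these the statement relies on, and pushing the count through accordingly, is the step I expect to require the real work.
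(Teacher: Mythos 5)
Your setup matches the paper's: you correctly establish that, because $\beta=\alpha$, a helper for a failed node $c_i\otimes d_j$ must contribute a full ``line'' of its $\alpha\times\alpha$ symbol array, that mixing row-helpers and column-helpers is impossible (a row piece and a column piece overlap, forcing strictly more than $\alpha^2$ downloads), and that blocking row-repair costs at least $\rho_1-1$ failures while blocking column-repair costs at least $\rho_2-1$ further, disjoint failures. But you stop at $\rho_1+\rho_2-2$ and declare the remaining jump to $\rho_1\rho_2-1$ unresolved, guessing it needs either repair from all survivors or special Steiner structure. Neither is what the paper uses, and this is a genuine gap: the missing idea is a cascaded (sequential) repair argument. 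The paper argues by contradiction with $|F^*|=\rho_1\rho_2-1$: if $c_i\otimes d_j$ is unrecoverable, there is a failed set $F_1\subset\mathcal{N}(c_i)\otimes d_j$ of size at least $\rho_1-1$ (all other carriers of some $e_{k^*}\otimes d_j$) and a disjoint failed set $F_2\subset c_i\otimes\mathcal{N}(d_j)$ of size at least $\rho_2-1$. Now each node $c_{i'}\otimes d_j\in F_1$ is itself barred from row-repair (the same copies of $e_{k^*}\otimes d_j$ are all gone), so it can only be column-repaired from $c_{i'}\otimes\mathcal{N}(d_j)$; if it \emph{were} recoverable, one would repair it first and thereby restore a copy of $e_{k^*}\otimes d_j$, making $c_i\otimes d_j$ recoverable after all. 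Hence every node of $F_1$ must also be unrecoverable, which costs at least $\rho_2-1$ additional failures per node of $F_1$, and these failure sets are pairwise disjoint and disjoint from $F_1\cup F_2\cup\{c_i\otimes d_j\}$. The total is at least $1+(\rho_1-1)+(\rho_2-1)+(\rho_1-1)(\rho_2-1)=\rho_1\rho_2$, contradicting $|F^*|=\rho_1\rho_2-1$.

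So the bound $\rho_1\rho_2-1$ is reached not by a one-shot repair-from-survivors count (which, as you observed, would cap out at $\rho_1+\rho_2-2$), but by allowing repaired nodes to be reused as helpers and counting the failures an adversary would need to block the entire repair cascade. Your ``degraded-mode, contact-all-survivors'' alternative would violate the $d$-helper, $\beta$-per-helper repair model, and no special property of Steiner systems beyond pairwise intersection at most one is needed; the tightness direction (each symbol has $\rho_1\rho_2$ copies, so $\rho_1\rho_2$ failures can be fatal) you already have and agrees with the paper.
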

\begin{proof} Define $\mathcal{N}(c_i)$ ($\mathcal{N}(d_j)$) to be the set of storage nodes in $N_1$ ($N_2$) that have exactly one symbol in common with $c_i$ ($d_j$). As $N_1$ and $N_2$ are Steiner systems, two nodes have at most one symbol in common. In the discussion below we show that if there are at most $\rho_1 \rho_2 - 1$ failures, we can recover all the nodes. We proceed by contradiction, i.e., assume that there exists a set of failed nodes $F^*$ in $\bar{N}$ with $|F^*| = \rho_1 \rho_2 -1$. Suppose that there is a failed node $c_i \otimes d_j \in F^*$ that cannot be recovered. Note that $\beta = \alpha$. Thus, we need to download $\alpha$ symbols each from the surviving nodes, i.e., we need to consider nodes in $\bar{N}$ that have an overlap of $\alpha$ with $c_i \otimes d_j$.

Our first observation is that only the nodes in $\mathcal{N}(c_i) \otimes d_j$ and $c_i \otimes \mathcal{N}(d_j)$ are useful for recovering $c_i \otimes d_j$. To see this consider a node $c_i' \otimes d_j'$ in $\bar{N}$ such that it does not belong to $\mathcal{N}(c_i) \otimes d_j$ or $c_i \otimes \mathcal{N}(d_j)$. If $c_i' = c_i$, then $d_j' \notin \mathcal{N}(d_j)$, i.e., $(c_i' \otimes d_j')^t (c_i \otimes d_j) = 0$; a similar argument holds when $c_i' \notin \mathcal{N}(c_i), d_j' = d_j$. Otherwise $(c_i' \otimes d_j')^t (c_i \otimes d_j)$ can be at most $1$. Thus, only the nodes in  $\mathcal{N}(c_i) \otimes d_j$ and $c_i \otimes \mathcal{N}(d_j)$ are useful for reconstructing $c_i \otimes d_j$.

Next, note that $c_i$ ($d_j$) can be expressed as the sum of $\alpha$ unit vectors of length $\theta_1$ ($\theta_2$). Let $e_k$ denote the unit vector with a one in the $k$-th location. Thus, $c_i = \sum_{k \in I_1} e_k$, where $I_1 \subset [\theta_1]$ and $d_j = \sum_{l \in I_2} e_l$ where $I_2 \subset [\theta_2]$. Thus, the overlap between $c_i \otimes d_j$ and  $\mathcal{N}(c_i) \otimes d_j$ can be expressed as $e_k \otimes d_j$ for some $k \in I_1$. A similar statement holds for the overlap between $c_i \otimes d_j$ and $c_i \otimes \mathcal{N}(d_j)$. Our next observation is that when we reconstruct $c_i \otimes d_j$, we can either download symbols from $\mathcal{N}(c_i) \otimes d_j$ or from $c_i \otimes \mathcal{N}(d_j)$ but not both. Indeed, for $k \in I_1, l \in I_2$, we have $(c_i \otimes e_l)^t (e_k \otimes d_j) = 1$. Thus, if we download symbols from both $\mathcal{N}(c_i) \otimes d_j$ and from $c_i \otimes \mathcal{N}(d_j)$, then we will need to download strictly more than $\alpha^2$ symbols for reconstructing $c_i \otimes d_j$.

Note that there are $\rho_1$ copies of each $e_k \otimes d_j$, where $k \in I_1$. If there is at least one copy of $e_k \otimes d_j$, for all $k \in I_1$ available in the surviving nodes, then it is clear that $c_i \otimes d_j$ can be recovered by downloading copies of each $e_k \otimes d_j$ from the surviving nodes. Likewise, there are $\rho_2$ copies of each $c_i \otimes e_l$ for $l \in I_2$ and $c_i \otimes d_j$ can be recovered if each of these copies is available in the surviving nodes. In the discussion below we say that $c_i \otimes d_j$ is recoverable if either or both of these situations apply.

Thus, it is clear that if $c_i\otimes d_j$ is not recoverable it has to be the case that all copies of $e_{k^*} \otimes d_j$ for some $k^* \in I_1$ are unavailable. This implies that there exists a set of failed nodes denoted $F_1 \subset \mathcal{N}(c_i) \otimes d_j $ of size at least $\rho_1 - 1$. Arguing in a similar vein, we can consider whether $c_i \otimes d_j$ can be recovered from the nodes in $c_i \otimes \mathcal{N}(d_j)$. 
Based on the discussion above, if $c_i \otimes d_j$ is not recoverable, it has to be the case that there exists a set of failed nodes $F_2 \subset c_i \otimes \mathcal{N}(d_j)$ of size at least $\rho_2 -1$. 
In addition the node sets $\mathcal{N}(c_i) \otimes d_j$ and $c_i \otimes \mathcal{N}(d_j)$ are disjoint, thus $F_1 \cap F_2 = \emptyset$, i.e., it is clear that at least $\rho_1 + \rho_2 - 2$ failures are essential to ensure that $c_i \otimes d_j$ is not recoverable.

Next, we examine whether any of the nodes in $F_1 \cup F_2$ are recoverable. A given node in $F_1$ is of the form $c_{i'} \otimes d_j$ where $c_i^t c_{i'} = 1$ . It is evident that $c_{i'} \otimes d_j$ cannot be recovered from $\mathcal{N}(c_{i'}) \otimes d_j$ as all copies of $e_{k^*} \otimes d_j$ for a specific $k^*$ are unavailable owing to the failure of the nodes in $F_1$. Specifically, note that it rules out the possibility of using the surviving nodes in the set $\mathcal{N}(c_{i}) \otimes d_j$. From the previous observation, it can only be recovered exclusively from the nodes in $c_{i'} \otimes \mathcal{N}(d_j)$.

Thus, there need to be at least $\rho_2 -1$ failures from the node set $c_{i'} \otimes \mathcal{N}(d_j)$ to ensure that $c_{i'} \otimes d_j$ is not recoverable. Furthermore, these failures are distinct from the failures in $F_1 \cup F_2$. Arguing in this way for each node in $F_1$, we conclude that at least $(\rho_1 - 1) (\rho_2 -1)$ failures need to be induced to ensure that none of the nodes in $F_1$ can be recovered.

However, this implies a total of $1 + \rho_1 + \rho_2 - 2 + (\rho_1 - 1)(\rho_2 - 1) = \rho_1 \rho_2 > \rho_1 \rho_2 - 1$ failures. Thus, we conclude that even if an appropriate $F_1 \cup F_2$ can be found for $c_i \otimes d_j$, at least one node in $F_1$ can be recovered. After this recovery, the set $F_1$ cannot exist. This implies that $c_i \otimes d_j$ can be recovered. As the choice of $c_i \otimes d_j$ was arbitrary, we can recover any node when there are at most $\rho_1 \rho_2 -1$ failures.

This bound is tight since each symbol in $\bar{N}$ is repeated $\rho_1\rho_2$ times. Thus, we can easily find a set of $\rho_1 \rho_2$ failures that we cannot recover from.

\end{proof}

\begin{corollary}
Let $N_1$ and $N_2$ be transposes of incidence matrices of two Steiner systems namely $S(2, \alpha_1, \theta_1)$ and $S(2, \alpha_2, \theta_2)$ where the parameters satisfy $\displaystyle \rho=\frac{\theta_1-1}{\alpha_1-1}=\frac{\theta_2-1}{\alpha_2-1}$. Assume the FR code obtained from $\bar{N}$ has normalized repair bandwidth $\beta=\rho$. Then, the FR code is resilient up to $\alpha_1\alpha_2-1$ failures.
\end{corollary}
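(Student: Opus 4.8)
The plan is to show that this corollary is a direct instance of Lemma~\ref{Kronec}, once the parameters of the transposed Steiner systems are translated into the notation of that lemma. First I would recall the structure of a transposed Steiner system. If $\mathcal{C}_i$ is the FR code obtained from $S(2,\alpha_i,\theta_i)$, its incidence matrix is $\theta_i\times n_i$ with $n_i=\frac{\theta_i(\theta_i-1)}{\alpha_i(\alpha_i-1)}$, every symbol lies in $\rho=\frac{\theta_i-1}{\alpha_i-1}$ blocks, and every pair of symbols lies in exactly one block. Passing to $N_i$, the incidence matrix of $\mathcal{C}_i^T$, swaps the roles of symbols and nodes: $\mathcal{C}_i^T$ has $\theta_i$ storage nodes, $n_i$ symbols, storage capacity equal to $\rho$ (each original symbol appeared in $\rho$ blocks), and repetition degree $\alpha_i$ (each original block contained $\alpha_i$ symbols). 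Moreover, since any two points of $S(2,\alpha_i,\theta_i)$ lie in exactly one common block, any two distinct storage nodes of $\mathcal{C}_i^T$ intersect in exactly one symbol; in particular the pairwise intersection of distinct nodes is at most one, which is precisely the hypothesis invoked in Lemma~\ref{Kronec}.

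Next I would match these parameters to Lemma~\ref{Kronec}. By the assumption $\frac{\theta_1-1}{\alpha_1-1}=\frac{\theta_2-1}{\alpha_2-1}=\rho$, the codes $\mathcal{C}_1^T$ and $\mathcal{C}_2^T$ have a \emph{common} storage capacity, which is exactly the quantity called $\alpha$ in Lemma~\ref{Kronec}; thus we set $\alpha:=\rho$ there. Their repetition degrees are $\rho_1=\alpha_1$ and $\rho_2=\alpha_2$ in the notation of that lemma. Finally, the corollary's hypothesis that $\bar{N}=N_1\otimes N_2$ has normalized repair bandwidth $\beta=\rho$ is literally the hypothesis $\beta=\alpha$ of Lemma~\ref{Kronec}, since here $\alpha=\rho$. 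All hypotheses of Lemma~\ref{Kronec} being satisfied, it applies and yields that $\bar{N}$ is resilient up to $\rho_1\rho_2-1=\alpha_1\alpha_2-1$ failures; tightness likewise carries over, since each symbol of $\bar{N}$ is repeated $\rho_1\rho_2=\alpha_1\alpha_2$ times.

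The main (and essentially only) obstacle is the bookkeeping in this notational translation: one must keep straight that the generic storage capacity ``$\alpha$'' of Lemma~\ref{Kronec} corresponds to the common replication number $\rho$ of the two Steiner systems, while the generic repetition degrees ``$\rho_1,\rho_2$'' of that lemma correspond to the Steiner block sizes $\alpha_1,\alpha_2$. Once this dictionary is fixed there is no further content to prove; in particular, no new recoverability argument is needed beyond what Lemma~\ref{Kronec} already supplies.
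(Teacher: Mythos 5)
Your proposal is correct and follows essentially the same route as the paper: both reduce the corollary to Lemma~\ref{Kronec} by observing that in the transposed Steiner codes any two nodes share exactly one symbol (so the at-most-one-intersection hypothesis holds), with storage capacity $\rho$ playing the role of the lemma's $\alpha$ and the Steiner block sizes $\alpha_1,\alpha_2$ playing the role of the repetition degrees, whence resilience up to $\alpha_1\alpha_2-1$ follows. Your explicit parameter dictionary is just a slightly more careful spelling-out of the paper's "the rest follows similarly" remark.
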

\begin{proof} Any two nodes meet in exactly one symbol in the FR code obtained by transposes of incidence matrices of a Steiner system. Also note that the main ingredient of the proof Lemma \ref{Kronec} is the property that two nodes meet in at most one symbol in Steiner systems. So the rest follows similarly as in the previous proof.
\end{proof}

We also investigate the properties of FR codes that are generated by taking the Kronecker product of net FR codes with themselves. The Kronecker product does not necessarily produce a new net FR code but it yields a resolvable FR code. For example, in Fig. \ref{fig:complete3} a resolvable FR code is obtained from the Kronecker product of a net FR code with itself. However, the obtained code is not a net FR code. To see this, we note that that node sets $\{1, 5, 9, 13\}$ and $\{2, 6, 10, 14\}$ form parallel classes, but the intersection sizes of node 1 with the nodes in the set $\{2, 6, 10, 14\}$ are either two or zero, which implies that the obtained code is not a net FR code.
\begin{lemma} Let $N$ be the incidence matrix of a net FR code with parameters $(n, \theta, \alpha, \rho)$. Then, the FR code obtained from $\bar{N}=N \otimes N$ is a resolvable FR code.\end{lemma}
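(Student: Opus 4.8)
The plan is to exhibit an explicit resolution of the Kronecker product code $\bar{N} = N \otimes N$ by building parallel classes from the parallel classes of the base net FR code. Let $\mathcal{C} = (\Omega, V)$ be the net FR code with parameters $(n, \theta, \alpha, \rho)$, and recall that $V$ admits a resolution into $\rho$ parallel classes $P_1, \dots, P_\rho$, where each $P_t$ consists of $\theta/\alpha$ nodes that partition $\Omega$. The nodes of $\bar{\calC}$ are indexed by pairs $(V_i, V_j)$ with $V_i, V_j \in V$, and the symbol set of $\bar{\calC}$ is naturally $\Omega \times \Omega$ of size $\theta^2$; a node $(V_i, V_j)$ corresponds to the set $V_i \times V_j$. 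The first step is to record these parameters precisely and note that the number of nodes in a purported parallel class of $\bar{\calC}$ must be $\theta^2 / \alpha^2 = (\theta/\alpha)^2$.

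Next I would define the candidate parallel classes. For each ordered pair of indices $(s, t)$ with $1 \le s, t \le \rho$, set
\[
\bar{P}_{s,t} = \{\, V_i \times V_j : V_i \in P_s,\ V_j \in P_t \,\}.
\]
I claim each $\bar{P}_{s,t}$ is a parallel class of $\bar{\calC}$ in the sense of Definition~\ref{resolvable fractional repetition code}. There are two things to check. First, disjointness: if $(V_i, V_j)$ and $(V_{i'}, V_{j'})$ are distinct nodes in $\bar{P}_{s,t}$, then either $V_i \neq V_{i'}$ (both in $P_s$, hence $V_i \cap V_{i'} = \emptyset$) or $V_j \neq V_{j'}$ (both in $P_t$, hence $V_j \cap V_{j'} = \emptyset$); in either case $(V_i \times V_j) \cap (V_{i'} \times V_{j'}) = (V_i \cap V_{i'}) \times (V_j \cap V_{j'}) = \emptyset$. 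Second, covering: since $P_s$ and $P_t$ each partition $\Omega$, the products $V_i \times V_j$ over $V_i \in P_s$, $V_j \in P_t$ partition $\Omega \times \Omega$, so $\cup_{(V_i,V_j) \in \bar{P}_{s,t}} (V_i \times V_j) = \Omega \times \Omega$. This also confirms $|\bar{P}_{s,t}| = (\theta/\alpha)^2$.

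Finally I would verify that the collection $\{\bar{P}_{s,t} : 1 \le s, t \le \rho\}$ is a resolution, i.e., it partitions the node set $V \times V$ into parallel classes. Every node $(V_i, V_j)$ of $\bar{\calC}$ lies in exactly one $\bar{P}_{s,t}$, namely the one where $s$ is the unique index with $V_i \in P_s$ and $t$ the unique index with $V_j \in P_t$ (uniqueness because $\{P_t\}$ is a resolution of $\mathcal{C}$, so each node of $V$ lies in exactly one parallel class). Hence $\bar{\calC}$ is resolvable. One small thing to note for cleanliness is that the incidence matrix $\bar{N} = N \otimes N$ indexes its columns as $c_i \otimes c_j$, and the support of $c_i \otimes c_j$ is precisely the set $\{(k,l) : k \in V_i, l \in V_j\} = V_i \times V_j$ under the standard identification $[\theta^2] \cong [\theta] \times [\theta]$; so the set-theoretic description above matches the matrix description. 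I do not expect any real obstacle here — the main point is just to identify the correct candidate parallel classes and check the two defining properties, which follow immediately from the product structure. The only mild subtlety worth a sentence is that $\bar{\calC}$ need not be a net FR code (the pairwise intersection of non-parallel nodes need not be constant), as the text already illustrates with the example in Fig.~\ref{fig:complete3}; but resolvability, which is all that is claimed, holds.
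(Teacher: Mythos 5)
Your proposal is correct and follows essentially the same route as the paper: the paper also takes, for each fixed pair of parallel classes of the base code, the Kronecker products of their columns ($c_{i,j}\otimes c_{s,r}$) as a candidate parallel class and verifies pairwise disjointness via $(c_{i,j}^t\otimes c_{s,r}^t)(c_{i,u}\otimes c_{s,v})=c_{i,j}^tc_{i,u}\otimes c_{s,r}^tc_{s,v}=0$, deducing coverage from the block count $\theta^2/\alpha^2$. Your set-theoretic phrasing ($V_i\times V_j$ with explicit checks of covering and of the partition of the node set) is just a slightly more detailed rendering of the same argument.
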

\begin{proof} We can order the columns of $N$ with respect to the $\rho$ parallel classes. Assume that the $j$-th block in $i$-th parallel class is represented by the column $c_{i,j}$. We will show for fixed $i$ and $s$, $c_{i,j}\otimes c_{s,r}$ with $1\leq j\leq \frac{\theta}{\alpha}$ and $1\leq r\leq  \frac{\theta}{\alpha}$ forms a set of blocks which is a parallel class. There will be $\frac{\theta^2}{\alpha^2}$ blocks in this set, hence it is enough to show any distinct two blocks does not share any points. Since  $(c_{i,j}^t\otimes c_{s,r}^t)(c_{i,u}\otimes c_{s,v})=(c_{i,j}^tc_{i,u}\otimes c_{s,r}^tc_{s,v})$ equals the zero, the $\frac{\theta^2}{\alpha^2}$ vectors form a parallel class.
\end{proof}

\begin{figure}[t]
\centering
\includegraphics[scale=0.6]{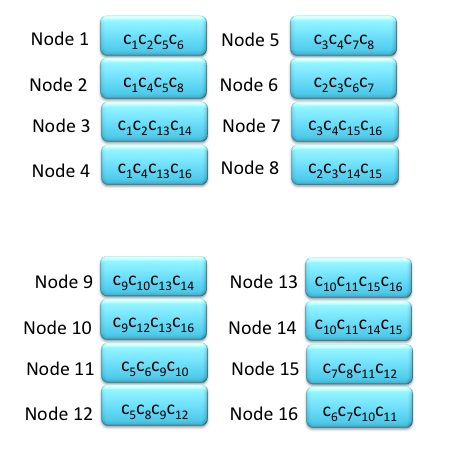}
\caption{\label{fig:complete3} The resultant FR code has $\Omega=\{c_1,c_2,c_3,c_4,c_5,c_6,c_7,c_8,c_9,c_{10},c_{11},c_{12},c_{13},c_{14},c_{15},c_{16}\}$. Each storage node contains $4$ symbols. A failed node can be recovered by contacting two nodes and downloading $2$ packets from each. The code is resilient up to $3$ failures. }
\end{figure}

\begin{example} A simple example can be obtained from $\mathcal{C}=(\Omega,V)$ where $\Omega=\{1,2,3,4\}$ and  $V=\{V_1=\{1,2\},V_2=\{3,4\}, V_3=\{1,3\}, V_4=\{2,4\}\}$. The code obtained from $\bar{N}=N \otimes N$ is illustrated in Fig. \ref{fig:complete3}.
\end{example}



\section{Construction of FR codes when $d < k$}
\label{sec:locally_recoverable_dss}
In the discussion so far, we have considered FR codes where the recovery degree $d \geq k$, i.e., the repair degree ($d$) of the code is at least as high as the number of nodes ($k$) contacted for recovering the file. Of course, the codes operate at the MBR point which implies that they download exactly $\alpha$ symbols for regeneration. However, as discussed in Section \ref{sec:intro}, in many application scenarios it has been recognized that the number of nodes that the new node has to contact is an important metric that needs to be optimized, rather than the repair bandwidth. Note that the definition of a FR code does not rule out codes where $d < k$.

In this section, we discuss constructions of {\it locally recoverable FR codes} that have the property that $d < k$. It turns out that the minimum distance bound for locally recoverable codes that was derived in \cite{gopalan2012, papD12}, needs to be refined for our scenario of exact, uncoded and table-based repair. We derive such a bound and present constructions that meet this bound.

%
%

\begin{definition} [\textbf{Locally recoverable fractional repetition code.}] Let $\mathcal{C} = (\Omega, V)$ be a FR code for a $(n,k,d,\alpha)$-DSS, with repetition degree $\rho$ and normalized repair bandwidth $\beta= \alpha/d$. If the repair degree $d < k$, then the FR code $\calC$ is called a locally recoverable fractional repetition code.
\end{definition}
As before we define $\rho_{res}$ to be the maximum number of node failures such that each failed node can be recovered by contacting $d$ surviving nodes and downloading symbols from them. For a node $V_i \in V$ in $\calC$, let $\calS(V_i) \in V$ denote the set of nodes (with $|\calS(V_i)| < k$) that are contacted if $V_i$ fails. We refer to $\calS(V_i)$ as the local structure associated with $V_i$. Note that it is possible that the set of nodes in $\calS(V_i)$ and the corresponding symbols form a FR code ({\it cf.} Definition \ref{defn:fr_code}); however this is not essential. 



\subsection{Codes for systems with $\rho_{res} = 1$}

Our first construction is a class of codes which is optimal with respect to the bound provided in Lemma \ref{local_bound} and allow local recovery in the presence of a single failure. Our construction leverages the properties of undirected graphs with large girth\footnote{The girth of a graph is the length of its shortest cycle.}. The basic idea is to associate the edges of the undirected graph with the symbols and the vertices with the storage nodes. Each storage node stores its incident symbols. We explain this construction and highlight the intuition behind it by means of the following example.

\begin{example}
The Petersen graph on 10 vertices and 15 edges is a 3-regular graph with girth 5. We label the edges $1, \dots, 10$ and $A, B, \dots, E$ in Fig. \ref{Ptrgraph}. If a given storage node fails, it is evident that it can be regenerated by contacting its corresponding neighbors in the Petersen graph and downloading one symbol each from them. For instance, if node $\{1,A,5\}$ fails, it can download one symbol each from $\{1, B,2\}, \{8,A,9\}$ and $\{4,E,5\}$. Next, note that there is no cycle of length 4, in the Petersen graph. Thus, if we consider any collection of four nodes (as an example), we are guaranteed that the number of edges incident on them is reasonably large. This allows to assert that the file size for such $k=4$ is high. In fact, in the subsequent discussion we show that the file size in this case and for $k=5$ meets the minimum distance bound for locally recoverable codes.
\begin{figure} [t]
\centering
\includegraphics[scale=0.4]{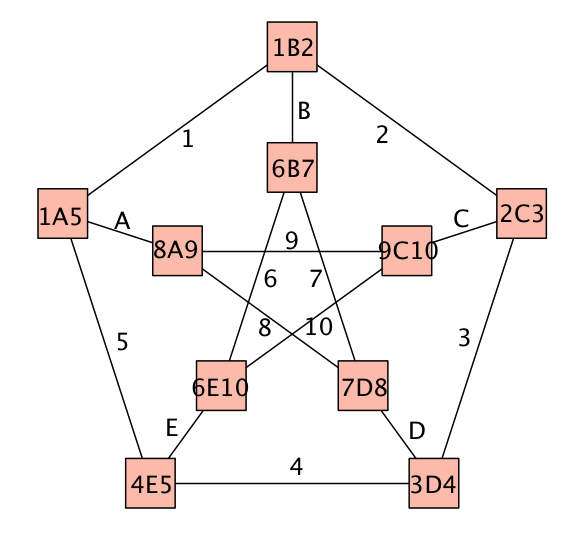}
\caption{{\small The figure shows the Petersen graph with its edges labeled from 1, \dots, 10 and $A, \dots, E$. Each vertex acts as a storage node and stores the symbols incident on it.}}
\label{Ptrgraph}
\end{figure}
\end{example}
We now formalize the basic intuition in the above example, by considering general graphs and precisely calculating the file sizes and minimum distance bounds.

\begin{definition} An undirected graph $\Gamma$ is called an $(s,g)$-graph if each vertex has degree $s$, and the length of the shortest cycle in $\Gamma$ is $g$.
\end{definition}
\begin{construction}
\label{grcons1}
Let $\Gamma=(V',E')$ be a $(s,g)$-graph with $|V'|=n$.
\begin{itemize}
\item[(i)] Arbitrarily index the edges of $\Gamma$ from 1 to $\frac{ns}{2}$.
\item[(ii)] Each vertex of $\Gamma$ corresponds to a storage node and stores the symbols incident on it.
\end{itemize}

\end{construction}
The above procedure yields a FR code $\mathcal{C} = (\Omega, V)$ with $n$ storage nodes, parameters $\theta = \frac{ns}{2}$, $\alpha = s$ and $\rho = 2$. Upon single failure, the failed node can be regenerated by downloading one symbol each from the storage nodes corresponding to the vertices adjacent to it in $\Gamma$ (i.e., $\beta = 1$); thus, the repair degree $d = s$. Note that for this construction, the local structures are typically not FR codes. Suppose that the storage node corresponding to vertex $v_1 \in \Gamma$ fails, then we contact the storage nodes corresponding to its $(s-1)$ neighbors in $\Gamma$; this is the local structure associated with $v_1$. If the girth $g > 3$, then it is clear that the nodes in the local structure do not have symbols in common, i.e., they do not form a FR code.

We note that the work of \cite{el2010} also used the above construction for MBR codes; however, they did not have the girth restriction on $\Gamma$. As we discuss next, $(s,g)$-graphs allow us to construct locally recoverable codes and provide a better bound on the file size when $k \leq g$. We allow the system parameter $k$ to be greater than $d$, however in the work of \cite{el2010}, they consider only the case $k\leq d$. The work of \cite{silberstein2014} also used high-girth graphs, but their constructions are not in the context of locally recoverable codes. 

\begin{lemma} \label{lemma:grcons1_coverage} Let $\mathcal{C} = (\Omega, V)$ be a FR code constructed by Construction \ref{grcons1}. If $s > 2$, and $k \leq g$, we have $|\cup_{i=1}^k V_i| \geq k(s-1)$ for any $V_i \in V, i = 1, \dots k$.
\end{lemma}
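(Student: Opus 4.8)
The plan is to view the $k$ chosen storage nodes as $k$ vertices $v_1,\dots,v_k$ of the $(s,g)$-graph $\Gamma$, and to lower-bound $|\cup_{i=1}^k V_i|$, which is exactly the number of edges of $\Gamma$ incident on the set $W=\{v_1,\dots,v_k\}$. Each $V_i$ has size $s=\alpha$, so by inclusion-exclusion it suffices to control the overlaps $|V_i\cap V_j|$: two distinct vertices share at most one incident edge (the edge between them, if it exists), and a symbol can never lie in three different $V_i$'s since an edge has only two endpoints. Hence $|\cup_{i=1}^k V_i| = ks - e(W)$, where $e(W)$ is the number of edges of $\Gamma$ both of whose endpoints lie in $W$. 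So the whole statement reduces to showing $e(W)\le k$, i.e., the induced subgraph $\Gamma[W]$ has at most $k$ edges whenever $|W|=k\le g$.

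The key step is thus a purely graph-theoretic claim: \emph{if $\Gamma$ has girth $g$ and $W$ is a vertex set with $|W|=k\le g$, then $\Gamma[W]$ has at most $k$ edges.} I would prove this by examining the structure of $\Gamma[W]$. Since $k\le g$, the induced subgraph $\Gamma[W]$ cannot contain a cycle of length $\le k\le g$ except possibly a single cycle of length exactly $g=k$ using all vertices of $W$; in particular $\Gamma[W]$ has at most one cycle. A connected graph on $t$ vertices with at most one cycle has at most $t$ edges (it is a tree, with $t-1$ edges, or unicyclic, with $t$ edges). Summing over the connected components of $\Gamma[W]$, whose vertex counts sum to $k$, gives $e(W)\le k$, with equality only in the degenerate case where $\Gamma[W]$ is itself a single $g$-cycle on all $k=g$ vertices. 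Combining, $|\cup_{i=1}^k V_i| = ks - e(W) \ge ks - k = k(s-1)$, which is the claimed bound. (The hypothesis $s>2$ is used only to guarantee the code is nondegenerate — e.g.\ that local repair with $d=s>2$ neighbors makes sense — and is not strictly needed for the counting inequality itself, but I would retain it to match the lemma statement.)

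The main obstacle is making the cycle-counting argument for $\Gamma[W]$ airtight: one must be careful that ``$k\le g$'' genuinely forbids $\Gamma[W]$ from having two independent cycles or a cycle plus extra chords, since any such configuration would force a cycle of length strictly less than $g$ in $\Gamma$ (a subgraph on $\le k\le g$ vertices containing $\ge k+1$ edges has at least two independent cycles, hence — by a short argument about cycle spaces or by finding a short cycle through a chord — a cycle of length $< g$). Once that combinatorial fact is nailed down, the inclusion-exclusion reduction and the final arithmetic are routine. I would also remark that when $k<g$ strictly, $\Gamma[W]$ is in fact a forest, giving the slightly stronger $|\cup_{i=1}^k V_i|\ge k(s-1)+1$; but since the lemma only asserts $\ge k(s-1)$, I would state the clean bound here and defer the sharper forest-based count to wherever the minimum-distance computation requires it.
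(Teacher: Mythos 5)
Your proof is correct, but it takes a different route from the paper's. The paper proceeds by induction on the number of chosen nodes, maintaining the invariant $|\cup_{i=1}^{j}V_i|\geq j(s-1)+\xi_j$, where $\xi_j$ is the number of connected components spanned by the corresponding vertices, and using the girth condition to argue that each newly added vertex can attach at most once to each existing component while $j+1<g$, and at most twice in total (or once per component) at the final step $j+1=k\leq g$. You instead use a global double count: since each symbol is an edge with exactly two endpoints, there are no triple intersections and $|\cup_{i=1}^k V_i|=ks-e(\Gamma[W])$ exactly, where $e(\Gamma[W])$ is the number of edges induced on the $k$ chosen vertices; the problem then reduces to the extremal fact that a subgraph on $k\leq g$ vertices of a girth-$g$ graph has at most $k$ edges (every component is a tree or unicyclic, and two cycles would force, via a chord of a Hamiltonian cycle, a cycle of length at most $k-1<g$). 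That reduction is the one point you flagged, and your chord/cycle-space sketch does close it. Your identity is in fact a little sharper than the paper's bookkeeping: it immediately gives $k(s-1)+1$ when $k<g$ (the forest case), mirroring the paper's ``$+\xi$'' slack, and your observation that $s>2$ plays no role in the counting is also accurate. The paper's induction, on the other hand, is self-contained and gives per-step control without invoking a separate extremal lemma. One small caveat common to both arguments: the $k$ nodes must be taken distinct, which is the intended reading of the lemma.
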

\begin{proof}
Let $V_1, V_2,\cdots,V_{k-1}$ and $V_{k}$ be any $k$ nodes in our DSS, where $k \leq g$. We argue inductively. Note that $|V_1| = s > s-1$. Suppose that $|\cup_{i=1}^j V_i| \geq j(s-1) + \xi$ for $j < k$, where $1 \leq \xi \leq j$ is the number of connected components formed by the nodes $V_1, \dots, V_j$ in $\Gamma$. Now consider $|\cup_{i=1}^{j+1} V_i|$ where $j+1 < k$. Note that since $j+1 < g$ there can be no cycle in $\cup_{i=1}^{j+1} V_i$. Thus, $V_{j+1}$ is connected at most once to each connected component in $\cup_{i=1}^j V_i$. Suppose that $V_{j+1}$ is connected to $\ell$ existing connected components in $\cup_{i=1}^j V_i$, where $0 \leq \ell \leq \min (\xi,s)$. 
Then, the number of connected components in $\cup_{i=1}^{j+1} V_i$ is $\xi - \ell + 1$ and the number of new symbols that it introduces is $s - \ell$. Therefore $|\cup_{i=1}^{j+1} V_i| = j(s-1) + \xi + s - \ell = (j+1)(s-1) + \xi -\ell + 1$. This proves the induction step.

Thus, $|\cup_{i=1}^{k-1} V_i| \geq (k-1)(s-1) + \xi_{k-1}$, where $\xi_{k-1}$ is the number of connected components formed by $V_1, \dots, V_{k-1}$. Now consider $\cup_{i=1}^{k} V_i$. Note that there can be a cycle introduced at this step if $k=g$. Now, if $\xi_{k-1} \geq 2$, it can be seen that $V_k$ can only connect to each of the $\xi_{k-1}$ connected components once, otherwise it would imply the existence of a cycle of length strictly less than $g$ in $\Gamma$. Thus, in this case $|\cup_{i=1}^{k} V_i| \geq k(s-1)$. On the other hand if $\xi_{k-1} = 1$, then $V_k$ can connect at most twice to this connected component. In this case again we can observe that $|\cup_{i=1}^{k} V_i| \geq k(s-1)$.
\end{proof}

\begin{lemma} \label{lemma:high_girth_local_bd} Let $\Gamma=(V,E)$ be a $(s,g)$-graph with $|V|=n$ and $s>2$. If $g\geq k=as+b$ such that $s > b \geq a+1$, then $\mathcal{C}$ obtained  from $\Gamma$ by Construction \ref{grcons1} is optimal with respect to the minimum distance bound in Lemma \ref{local_bound} when the file size $\calM=k(s-1)$.
\end{lemma}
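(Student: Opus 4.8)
The plan is to deduce the claim from the ``meets-the-bound'' criterion of Observation~\ref{obs:meet_bound}, namely that a DSS attains the local minimum-distance bound of Lemma~\ref{local_bound} as soon as $k=\lceil \calM/\alpha\rceil+\lceil \calM/(d\alpha)\rceil-1$. For the code $\calC$ produced by Construction~\ref{grcons1} from an $(s,g)$-graph we have $\alpha=s$, repair degree $d=s$, and $\beta=1$, and by Lemma~\ref{lemma:grcons1_coverage} (using $s>2$ and $k\le g$) any $k$ storage nodes cover at least $k(s-1)$ symbols, so the hypothesis $\calM=k(s-1)$ is precisely the statement that this lower bound is tight. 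Everything then reduces to substituting $\calM=k(s-1)$, $\alpha=d=s$, $k=as+b$ into the identity above and checking it, which is where the hypothesis $s>b\ge a+1$ gets used.

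First I would compute $\lceil \calM/\alpha\rceil$. Since $k/s=a+b/s$ with $1\le b\le s-1$ (from the hypotheses), we get $\calM/\alpha=k(s-1)/s=k-k/s=(k-a)-b/s$, which lies strictly between $k-a-1$ and $k-a$; hence $\lceil \calM/\alpha\rceil=k-a$. Next I would compute $\lceil \calM/(d\alpha)\rceil=\lceil k(s-1)/s^{2}\rceil$, writing $k(s-1)/s^{2}=a+\bigl(b(s-1)-as\bigr)/s^{2}$. Here $b\ge a+1$ and $s\ge b+1\ge a+2$ give $b(s-1)\ge (a+1)(s-1)=as+(s-1-a)\ge as+1$, so the fractional term is positive, while $b(s-1)-as\le b(s-1)\le (s-1)^{2}<s^{2}$ makes it smaller than $1$; therefore $\lceil \calM/(d\alpha)\rceil=a+1$. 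Combining, $\lceil \calM/\alpha\rceil+\lceil \calM/(d\alpha)\rceil-1=(k-a)+(a+1)-1=k$, and Observation~\ref{obs:meet_bound} finishes the argument.

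I do not expect a real obstacle beyond this ceiling bookkeeping; the point to be careful about is that the two bounds $a+1\le b\le s-1$ are exactly what force $\lceil \calM/(d\alpha)\rceil$ to equal $a+1$ (with a smaller $b$ it would drop to $a$, and $\calC$ would miss the bound by one), so the hypotheses of the lemma are sharp rather than cosmetic. For completeness it is also worth noting that Lemma~\ref{lemma:grcons1_coverage} only yields $\calM\ge k(s-1)$, so to certify the hypothesis $\calM=k(s-1)$ for a concrete $(s,g)$-graph one must also exhibit $k$ nodes covering exactly $k(s-1)$ symbols; for $k=g$ (and $g\ge 5$) the vertex set of a shortest cycle does this, since such a cycle has no chord and no two of its vertices share a neighbour outside the cycle.
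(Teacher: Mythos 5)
Your proposal is correct and follows essentially the same route as the paper's proof: both apply Observation~\ref{obs:meet_bound} after computing $\left\lceil \calM/\alpha \right\rceil = k-a$ and $\left\lceil \calM/(d\alpha) \right\rceil = a+1$ from the hypotheses $s > b \geq a+1$, with Lemma~\ref{lemma:grcons1_coverage} supplying the guarantee that any $k$ nodes cover at least $k(s-1)$ symbols. Your closing remark about exhibiting $k$ nodes that cover exactly $k(s-1)$ symbols is a harmless extra observation not needed for (and not present in) the paper's argument.
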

\begin{proof} We have 
$$k(s-1)=(as+b)(s-1)=as^2+(b-a)s-b.$$
Since, $s > b\geq a+1$ the following holds.
$$\displaystyle \left\lceil\frac{k(s-1)}{s}\right\rceil=\left\lceil\frac{as^2+(b-a)s-b}{s}\right\rceil=as+(b-a),$$
and
$$\displaystyle  \left\lceil\frac{k(s-1)}{s^2}\right\rceil = \left\lceil\frac{as^2+(b-a)s-b}{s^2}\right\rceil=\left\lceil a+ \frac{(b-a)s-b}{s^2}\right\rceil = a+1 .$$
From Lemma \ref{lemma:grcons1_coverage}, any $k$ nodes cover at least $k(s-1)$ symbols. Thus, the code is minimum distance optimal since
$$\displaystyle \left \lceil\frac{k(s-1)}{s} \right\rceil+ \left\lceil\frac{k(s-1)}{s^2}\right\rceil  = k+1. \text{~({\it cf.} Observation \ref{obs:meet_bound})}$$

\end{proof}
\begin{corollary}\label{corollary_constr_1}
Let $\Gamma=(V,E)$ be a $(s,g)$-graph with $|V|=n$ and $s>2$. If $g\geq s+2$, then $\mathcal{C}$  obtained  from $\Gamma$ by Construction \ref{grcons1} is optimal with respect to the bound in Lemma \ref{local_bound} for file size $\calM=s^2+s-2$.
\end{corollary}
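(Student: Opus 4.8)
The plan is to obtain this as an immediate specialization of Lemma~\ref{lemma:high_girth_local_bd}, the only real work being to pick the decomposition $k = as+b$ correctly. First I would note that the target file size factors as
\[
\calM = s^2 + s - 2 = (s-1)(s+2),
\]
which strongly suggests setting $k = s+2$, so that $k(s-1) = (s-1)(s+2) = \calM$ coincides exactly with the file-size expression $\calM = k(s-1)$ appearing in Lemma~\ref{lemma:high_girth_local_bd}.

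Next I would check that $k = s+2$ is an admissible value for that lemma. Writing $k = as + b$ with $a = 1$ and $b = 2$, the lemma requires $s > b \geq a+1$, i.e. $s > 2 \geq 2$. The left inequality is precisely the standing hypothesis $s > 2$, and the right inequality holds with equality, so it is satisfied. The lemma's remaining requirement $g \geq k$ becomes $g \geq s+2$, which is exactly the hypothesis of the corollary. (It should also be verified in passing that Construction~\ref{grcons1} applies, i.e. that $\mathcal{C}$ has $\alpha = s$ and $\rho = 2$, which is already established right after the construction.)

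With all hypotheses of Lemma~\ref{lemma:high_girth_local_bd} in force for $k = s+2$, that lemma directly gives that $\mathcal{C}$ is optimal with respect to the minimum distance bound of Lemma~\ref{local_bound} when $\calM = k(s-1) = (s+2)(s-1) = s^2 + s - 2$, which is the assertion. The only point that needs a moment's attention is that the boundary case $b = a+1$ is genuinely permitted: Lemma~\ref{lemma:high_girth_local_bd} states the condition as $b \geq a+1$, so equality is allowed and no separate argument is needed. Beyond that, the proof is pure substitution, so I do not anticipate any real obstacle.
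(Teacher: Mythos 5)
Your proposal is correct and is exactly the argument the paper intends: the corollary is stated without proof as a direct specialization of Lemma \ref{lemma:high_girth_local_bd} with $k=s+2$, i.e. $a=1$, $b=2$, so that $s>b\geq a+1$ reduces to $s>2$ and $k(s-1)=(s+2)(s-1)=s^2+s-2$. Your verification of the boundary case $b=a+1$ and of the girth condition $g\geq k$ matches the intended reading, so there is nothing to add.
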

It can be observed that in the specific case of $s=2$, applying Construction \ref{grcons1} results in a DSS where the union of any $k$ nodes has at least $k+1$ symbols. We now discuss some examples of codes that can be obtained  from our constructions.

Sachs \cite{sachs1963} provided a construction which shows that for all $s,g \geq 3$, there exists a $s$-regular graph of girth $g$. Also, explicit constructions of graphs with arbitrarily large girth are known \cite{lazebnik1995}. Using these we can construct infinite families of optimal locally recoverable codes. 


An $(s,g)$-graph with the fewest possible number of vertices, among all $(s,g)$-graphs is called an $(s,g)$-cage and will result in the maximum code rate for our construction. For instance, the $(3,5)$-cage is the Petersen graph. We note here that bipartite cages of girth 6 were used to construct FR codes in \cite{koo2011} though these were not in the context of locally recoverable codes.
An exhaustive survey of cages can be found in \cite{Exoo08}.

\subsection{Codes for systems with $\rho_{res} > 1$}

Our second class of codes are such that the local structures are also FR codes. The primary motivation for considering this class of codes is that they naturally allow for local recovery in the presence of more than one failure as long as the local FR code has a repetition degree greater than two. Thus, in these codes, each storage node participates in one or more local FR codes that allow local recovery in the presence of failures. We motivate the design of these FR codes by means of the following example.
\begin{example}
An example of such a code is shown in Fig. \ref{Fano}. The main idea is to have four FR codes derived from the Fano plane that are supported on disjoint sets of symbols. We refer to each of these FR codes as local structures. Note that if there are at most two failures, the nodes can be regenerated by simply downloading symbols from the corresponding local structures. Moreover, upon inspection, it is not too hard to see that any set of 15 nodes cover at least 17 symbols. Thus, we obtain an instance of a local FR code with $n=\theta =28,\alpha = 3, \rho = 3$ that has $k=15$ and $d = 3$. As $d < k$, this FR code is local.

\begin{figure*} [t]
\centering
\includegraphics[scale=0.6]{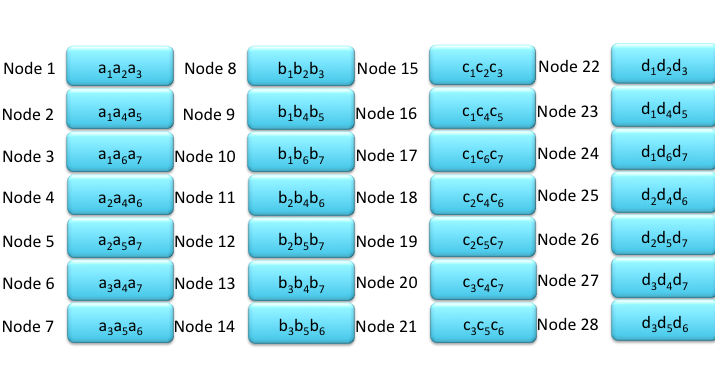}
\caption{The figure shows a DSS where $n=28,k=15,r=3,\theta=28,\alpha =3,\rho =3$ and each local FR code (the columns in the figure) is a projective plane of order $2$ which is also known as a Fano plane. Here, $\rho^{res}=2$. Any set of $15$ nodes cover at least $\mathcal{M}=17$ symbols. Thus, the minimum distance of the code is $14$ when the file size $\mathcal{M}=17$.}
\label{Fano}
\end{figure*}
\end{example}
Note that it is relatively easy to obtain local codes in such a manner, i.e., by considering a collection of FR codes supported on disjoint sets of symbols. However, one really needs to measure them with respect to minimum distance bound for local codes. We did this evaluation for the codes from high girth graphs presented above ({\it cf.} Lemma \ref{lemma:high_girth_local_bd}) and demonstrated that for certain ranges of $k$, the constructed codes were minimum distance optimal. However, we emphasize the minimum distance bound for local codes in Lemma \ref{local_bound} holds for general codes.
In our class of codes, we have the added requirement that each node participates in a local structure that allows it to be recovered by download in case of failure. Accordingly the bound in Lemma \ref{local_bound} is too loose.

For the class of codes that we consider, we derive an upper bound on the minimum distance of such codes when the file size is larger than the number of symbols in one local structure. Following this, we examine (fairly technical) conditions on the local structures that in turn allow for minimum distance optimality of the local FR code. We also demonstrate that several FR codes satisfy these conditions and conclude with some example of minimum distance optimal local FR codes.

\begin{lemma} \label{minimum distance} Let $\mathcal{C}$ be a locally recoverable FR code with parameters $(n,\theta,\alpha,\rho)$ where each node belongs to a local FR code with parameters $(n_{loc},\theta_{loc},\alpha,\rho_{loc})$. Suppose that the file size $\calM > \theta_{loc}$. Then,
\[
\begin{split}
d_{min}& \leq \max \bigg{(}n- \left \lceil \frac{\calM\rho_{loc}}{\alpha} \right \rceil+\rho_{loc},\\
&n+n_{loc}+1- \left \lceil \frac{\calM\rho_{loc}+\theta_{loc}}{\alpha} \right \rceil \bigg{)}.
\end{split}
\]
\end{lemma}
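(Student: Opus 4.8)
The plan is to exhibit, for each of the two terms in the max, a set of failed nodes of the stated size whose removal renders the file unrecoverable; the minimum distance is then at most the size of the smaller such set, hence at most the max. Recall (Definition of minimum distance) that a failure pattern defeats recovery precisely when the surviving nodes cover strictly fewer than $\calM$ symbols, since the outer MDS code needs $\calM$ distinct coded symbols. So the task reduces to: find a large set $F$ of nodes so that $V \setminus F$ covers at most $\calM - 1$ symbols.

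For the first term, the idea is to ``kill'' symbols in groups of size $\alpha$. Pick a node $V_j$ and consider the set of $\rho$ symbols... no — rather, the right move is the following. We want the surviving nodes to cover at most $\calM-1$ symbols, i.e. we want to remove every node that touches some chosen set $T \subseteq \Omega$ of ``forbidden'' symbols with $|T| = \theta - (\calM-1)$. Each symbol lies in exactly $\rho$ nodes, but symbols may share nodes, so a crude bound says removing all nodes meeting $T$ costs at most $\rho|T|$ node-removals; to get a clean bound we instead build $T$ out of whole local structures and single nodes. Concretely: since $\calM > \theta_{loc}$, choose $\lceil \calM/\alpha\rceil$ nodes... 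Let me restructure. The cleanest route mirrors the standard Gopalan-style argument. Let $t = \lceil \calM \rho_{loc} / \alpha \rceil$. The first bound $d_{\min} \le n - t + \rho_{loc}$ is obtained by greedily picking nodes to \emph{retain}: retain a collection $R$ of nodes that lies inside as few symbols as possible. If we retain one full local FR code (covering $\theta_{loc}$ symbols using $n_{loc}$ nodes, with each symbol repeated $\rho_{loc}$ times inside it) and then keep adding nodes one at a time — each new node adding at most $\alpha$ fresh symbols — we can keep $R$ with $|R|$ nodes covering at most $\theta_{loc} + \alpha(|R| - n_{loc})$ symbols. We want this to stay $\le \calM - 1$; solving, $|R|$ can be taken up to roughly $n_{loc} + (\calM - \theta_{loc})/\alpha$. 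The failure set is $F = V \setminus R$, of size $n - |R|$. Comparing the two arithmetic expressions (one book-kept per local structure, one not) yields exactly the two terms; the $\rho_{loc}$ and the $+1$ appear from the ceiling arithmetic and from whether we ``spend'' a whole local structure or not.

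So the two cases are: (i) retain a bunch of nodes \emph{not} aligned to local structures — each retained node contributes $\le \alpha$ new symbols but we must be careful that within a local FR code the symbols repeat $\rho_{loc}$ times, which lets us retain $\rho_{loc}$ nodes of a local structure while only committing to $\theta_{loc}/\rho_{loc} \cdot$(something) symbols — giving the term $n - \lceil \calM \rho_{loc}/\alpha\rceil + \rho_{loc}$; and (ii) retain exactly one full local structure plus extra scattered nodes — giving $n + n_{loc} + 1 - \lceil (\calM \rho_{loc} + \theta_{loc})/\alpha \rceil$. In each case one verifies that the retained set covers $\le \calM - 1$ symbols, so its complement is a valid ``bad'' failure pattern, and $d_{\min}$ is at most its cardinality; taking the better (larger complement is worse, so we want the retained set as small as possible, i.e. we take whichever construction retains fewer nodes, which flips to a $\max$ on the complement side) of the two gives the claimed bound.

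The main obstacle I expect is the careful accounting in case (i): showing that one can retain $\rho_{loc}$ nodes from a single local FR code while the number of distinct symbols they cover is controlled by $\calM$ — this uses that inside a local structure each of the $\theta_{loc}$ symbols appears in $\rho_{loc}$ of the $n_{loc}$ nodes, so $\rho_{loc}$ judiciously chosen nodes of it cover at most $\alpha \rho_{loc}$ symbols but the repetition lets the global count be amortized as $\calM \rho_{loc}/\alpha$ rather than $\calM/\alpha$ worth of node-budget. Getting the ceilings, the ``$+\rho_{loc}$'', and the ``$+1$'' to land exactly right — rather than off by one — is the fiddly part; everything else (the reduction to ``surviving nodes cover $<\calM$'', the greedy one-node-at-a-time symbol accounting, and taking the max over the two retention strategies) is routine.
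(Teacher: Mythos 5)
Your reduction (a failure pattern defeats recovery exactly when the surviving nodes cover fewer than $\calM$ symbols) and your key amortization idea (inside a local structure a retained node costs only $\alpha/\rho_{loc}$ new symbols, versus $\alpha$ for a generic node) are precisely the ingredients of the paper's Gopalan-style proof, but the overall logic of your plan is wrong. You propose to exhibit, separately for each of the two terms, a surviving set of the corresponding size with coverage at most $\calM-1$, and then conclude $d_{\min}\le\min\le\max$. Neither term can be established unconditionally, so the two ``retention strategies'' are not both available; the $\max$ in the lemma is a genuine disjunction arising from a case analysis on how a \emph{single} greedy accumulation terminates (the paper's algorithm keeps adding whole local codes while this keeps coverage below $\calM$, and otherwise tops up with part of one local code; the two exit modes give the two terms, and which mode occurs depends on the code, not on your choice). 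Concretely, your first construction can be impossible: let the code be the disjoint union of four Fano planes, so $n=28$, $\alpha=3$, $(n_{loc},\theta_{loc},\alpha,\rho_{loc})=(7,7,3,3)$, and take $k=10$, for which $\calM=13>\theta_{loc}$. In a Fano plane, $1,2,3,4,5$ lines cover at least $3,5,6,6,7$ points, so the largest set of storage nodes covering at most $12$ symbols is one full Fano plus two lines of another ($9$ nodes); hence $d_{\min}=28-9=19$, while the first term equals $28-13+3=18$. No failure set of the ``first-term'' size exists, so any proof that tries to exhibit one must fail; only the disjunction (here the second term, $20$) is true.

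Your bookkeeping for the second term is also not the right one. ``One full local structure plus scattered nodes, each adding at most $\alpha$ fresh symbols'' retains about $n_{loc}+\lceil(\calM-\theta_{loc})/\alpha\rceil-1$ nodes, i.e.\ gives the bound $n-n_{loc}-\lceil(\calM-\theta_{loc})/\alpha\rceil+1$; this differs from the stated second term by roughly $(\rho_{loc}-1)(\calM-2\theta_{loc})/\alpha$ and becomes weaker than \emph{both} terms once $\calM$ exceeds about $2\theta_{loc}$ (with $\rho_{loc}\ge 2$), so it cannot yield the lemma. The correct accounting is the reverse split: accumulate as many \emph{whole} local structures as possible, using the counting fact that any $s$ nodes of a local code cover at least $s\alpha/\rho_{loc}$ of its $\theta_{loc}$ symbols (this is what makes the $\alpha/\rho_{loc}$ amortization legitimate), and note that this must continue until at least $\calM-\theta_{loc}$ symbols are covered, since otherwise one more whole structure would fit --- this is where the hypothesis $\calM>\theta_{loc}$ enters; only the final at most $\theta_{loc}$ symbols' worth is then bought with individual nodes at cost $\alpha$ each. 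Together with $n_{loc}\alpha=\theta_{loc}\rho_{loc}$ this yields a retained set of at least $\lceil(\calM\rho_{loc}+\theta_{loc})/\alpha\rceil-n_{loc}-1$ nodes, i.e.\ exactly the second term. So beyond the ceiling/off-by-one issues you flagged, the proposal needs a structural repair: replace the two independent constructions by one accumulation argument with a two-case termination analysis, and redo the second case's arithmetic as above.
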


\begin{proof}

%
\begin{figure}[t]
\begin{algorithmic}[1]
\State $S_0=\emptyset$, $i=1$
\While{$H(S_{i-1})<\calM$}
\State For each node $Y_j \in S_{i-1}$, identify a FR code $Pf_j = (\Omega_{Pf_j},V_{Pf_j})$ (if it exists) such that $Y_j \in V_{Pf_j}, V_{Pf_j} \nsubseteq S_{i-1}$. If no such FR code exists, find a FR code that has no intersection with $S_{i-1}$ and set $Pf_1$ equal to it.
\begin{itemize} \item Let $b_j = |\Omega_{Pf_j} \cap H(S_{i-1})|$. Let $j^* = \arg \max_j b_j$. \end{itemize}
\If {$\theta_{loc}-b_{j^*}+H(S_{i-1})<\calM$}
\State Set $S_i = S_{i-1}\cup V_{Pf_{j^*}}$.
\Else \If {there exists $A \subset  V_{Pf_{j^*}}$ such that $|S_{i-1} \cup A| > |S_{i-1}|$ and $H(S_{i-1} \cup A) < \calM$}
        \State Let $V_{Pf^{'}_{j^*}} = \arg \max_{A \subset  V_{Pf_{j^*}}} H(S_{i-1} \cup A) < \calM$. Set $S_i = S_{i-1}\cup V_{Pf^{'}_{j^*}}$.
        \Else
        \State Exit.
        \EndIf
\EndIf
\EndWhile\label{distance while}

\end{algorithmic}
\caption{Algorithm for finding the distance bound}\label{min_dist_algorithm}
\vspace{-0.1in}
\end{figure}

We will apply an algorithmic approach here (inspired by the one used in \cite{gopalan2012}). Namely, we iteratively construct a large enough set $\mathcal{S} \subset V$ so that $|\mathcal{S}| < \calM$. The minimum distance bound is then given by $n - |\mathcal{S}|$. Our algorithm is presented in Fig. \ref{min_dist_algorithm}. Towards this end, let $S_{i}$ and $H(S_i)$ represent the number of nodes and the number of symbols included at the end of the $i$-th iteration. Furthermore, let $s_i=|S_i|-|S_{i-1}|$ and $h_i=|H(S_i)|-|H(S_{i-1})|$, represent the corresponding increments between the $(i-1)$-th and the $i$-th iteration.
We divide the analysis into two cases.
\begin{itemize}
\item {\bf Case 1}: [The algorithm exits without ever entering line 8.] Note that we have $1 \leq s_i \leq n_{loc}$ and $h_i\leq \theta_{loc} -a(n_{loc}-s_i)$ where $a(n_{loc}-s_i)$ is the minimum number of symbols covered by $(n_{loc} - s_i)$ nodes in the local FR code and hence a lower bound on $|\Omega_{Pf_{j^*}} \cap H(S_{i-1})|$. 
    By considering the bipartite graph representing the local FR code ({\it cf.} Definition \ref{def:bipartite_gr}) We see that $\displaystyle a(n_{loc}-s_i) \geq \frac{(n_{loc}-s_i)\alpha}{\rho_{loc}}$ .
    Thus, we have
$$\displaystyle \theta_{loc}-a(n_{loc}-s_i) \leq \theta_{loc}-\frac{n_{loc}\alpha-s_i\alpha}{\rho_{loc}}=\frac{s_i\alpha}{\rho_{loc}}.$$ Suppose that the algorithm runs for $l$ iterations and exits on the $l+1$ iteration. Then
$$\displaystyle \sum_{i=1}^ls_i \geq \frac{\rho_{loc}}{\alpha}\sum_{i=1}^l h_i.$$ Since the algorithm exits without ever entering line $8$, it is unable to accumulate even one additional node. Hence
\begin{align*} \sum_{i=1}^l h_i &\geq \calM-\alpha, \text{~which implies that}\\
\sum_{i=1}^l s_i &\geq \left \lceil \frac{\rho_{loc}}{\alpha}(\calM-\alpha) \right \rceil  \text{~by the integer constraint}.
\end{align*}
Thus, the bound on the minimum distance becomes
$$\displaystyle d_{min}\leq n- \left \lceil\frac{\rho_{loc}M}{\alpha} \right \rceil  +\rho_{loc}.$$
\item {\bf Case 2}: [The algorithm exits after entering line 8.] Note that by assumption, $\calM > \theta_{loc}$. Suppose that the algorithm enters line $5$, $l \geq 1$ times. Now we have $\displaystyle \sum_{i=1}^l h_i \geq \calM-\theta_{loc}$, otherwise we could include another local structure. Hence we need to add nodes so that strictly less than $\displaystyle \calM - \sum_{i=1}^l h_i $ symbols are covered. It can be observed that we can include at least $\displaystyle \left \lceil\frac{\calM-\sum_{i=1}^l h_i}{\alpha} \right \rceil -1$  more nodes. Therefore, the total number of nodes accumulated is
\begin{align*}
&\geq \frac{\rho_{loc}}{\alpha}\sum_{i=1}^l h_i+ \left \lceil\frac{\calM-\sum_{i=1}^l h_i}{\alpha} \right \rceil  - 1\\
&\geq \frac{\rho_{loc}-1}{\alpha}(\calM-\theta_{loc})+\frac{\calM}{\alpha}-1\\
&=\frac{\calM\rho_{loc}+\theta_{loc}}{\alpha}-n_{loc}-1.
\end{align*}
Therefore, we have the following minimum distance bound.
$$\displaystyle d_{min} \leq n+n_{loc}+1- \left \lceil \frac{\calM\rho_{loc}+\theta_{loc}}{\alpha} \right \rceil. $$
The final bound is obtained by taking the maximum of the two bounds obtained above.
\end{itemize}

\end{proof}

The following corollary can be also be established.

\begin{corollary}\label{mincor} Let $\mathcal{C}$ be a locally recoverable FR code with parameters $(n,\theta,\alpha,\rho)$ where each node belongs to a local FR code with parameters $(n_{loc},\theta_{loc},\alpha,\rho_{loc})$. Furthermore, suppose that $\mathcal{C}$ can be partitioned as the union of $\ell$ disjoint local FR codes. If the file size $\calM=t\theta_{loc}+\beta$ for some integer $1 \leq t < \ell$ and $\beta \leq \alpha$, we have $\displaystyle d_{min} \leq n-\left \lceil \frac{\calM\rho_{loc}}{\alpha} \right \rceil+\rho_{loc}.$
\end{corollary}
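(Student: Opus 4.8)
The plan is to prove the bound directly, by exhibiting one explicit failure set, rather than by quoting Lemma \ref{minimum distance} as a black box. This is necessary here: the second term inside the maximum in Lemma \ref{minimum distance} can strictly exceed $n-\lceil \calM\rho_{loc}/\alpha\rceil+\rho_{loc}$ (for instance for the four--Fano--plane code of Figure \ref{Fano} with $\calM=2\theta_{loc}+1$), so the corollary is a genuine refinement valid only in this restricted regime and is not an immediate consequence of the stated two--term bound.

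First I would unpack the partition hypothesis: write $V$ as the node--disjoint union of $\ell$ local FR codes $\calC_1,\dots,\calC_\ell$, each with parameters $(n_{loc},\theta_{loc},\alpha,\rho_{loc})$, supported (as in the construction of Figure \ref{Fano}) on pairwise disjoint symbol sets. Put $U=\calC_1\cup\cdots\cup\calC_t$, a set of exactly $tn_{loc}$ storage nodes that together cover exactly $t\theta_{loc}$ symbols. Since the normalized repair bandwidth is a positive integer we have $\beta\ge 1$, hence $t\theta_{loc}<t\theta_{loc}+\beta=\calM$; because the outer code is MDS on $\calM$ information symbols, the file cannot be recovered from $U$ alone. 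As $t<\ell$, the complement $V\setminus U$ is a nonempty set of $n-tn_{loc}$ nodes whose simultaneous failure leaves exactly $U$ surviving, so by the definition of minimum distance $d_{\min}\le n-tn_{loc}$.

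It then remains to check the arithmetic inequality $n-tn_{loc}\le n-\lceil \calM\rho_{loc}/\alpha\rceil+\rho_{loc}$, i.e.\ $\lceil \calM\rho_{loc}/\alpha\rceil\le tn_{loc}+\rho_{loc}$. Using the FR--code counting identity $n_{loc}\alpha=\theta_{loc}\rho_{loc}$ we get $\calM\rho_{loc}/\alpha=(t\theta_{loc}+\beta)\rho_{loc}/\alpha=tn_{loc}+\beta\rho_{loc}/\alpha$, and since $tn_{loc}$ is an integer this gives $\lceil \calM\rho_{loc}/\alpha\rceil=tn_{loc}+\lceil \beta\rho_{loc}/\alpha\rceil$; finally $\beta\le\alpha$ gives $\beta\rho_{loc}/\alpha\le\rho_{loc}$, and $\rho_{loc}$ is an integer, so $\lceil \beta\rho_{loc}/\alpha\rceil\le\rho_{loc}$. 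Chaining these inequalities yields the claim.

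The same conclusion can alternatively be read off from the proof of Lemma \ref{minimum distance}: under these hypotheses the greedy procedure there terminates in Case 1, since it never adjoins a proper subset of a local structure. Indeed, once its whole--structure--fitting test fails for the first time it must already have accumulated $t$ complete, symbol--disjoint local structures, at which point adjoining even a single fresh node would raise the covered--symbol count to $\ge t\theta_{loc}+\alpha\ge\calM$, so the algorithm simply exits; and the Case 1 bound is exactly the first term $n-\lceil \calM\rho_{loc}/\alpha\rceil+\rho_{loc}$. There is no real obstacle in this corollary; the only points that need care are the ceiling/divisibility bookkeeping together with the identity $n_{loc}\alpha=\theta_{loc}\rho_{loc}$, and the precise reading of ``disjoint local FR codes'' — the exact count ``$U$ covers $t\theta_{loc}$ symbols'' uses symbol--disjointness, but if only node--disjointness is granted one replaces it by ``$U$ covers at most $t\theta_{loc}<\calM$ symbols,'' which is equally sufficient.
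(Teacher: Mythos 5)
Your proof is correct, and your primary route is genuinely different from the paper's. The paper's own proof is exactly your closing remark: it revisits the algorithm of Fig.~\ref{min_dist_algorithm} and observes that, because $\calC$ is a disjoint union of local FR codes and $\calM = t\theta_{loc}+\beta$, the procedure never executes line 8 (it accumulates $t$ whole local structures and then exits), so only the Case~1 bound $n-\left\lceil \calM\rho_{loc}/\alpha\right\rceil+\rho_{loc}$ survives. Your main argument instead exhibits the explicit failure set $V\setminus(\calC_1\cup\cdots\cup\calC_t)$, giving $d_{\min}\leq n-tn_{loc}$ directly, and then converts this to the stated bound via $n_{loc}\alpha=\theta_{loc}\rho_{loc}$ and $\left\lceil \beta\rho_{loc}/\alpha\right\rceil\leq\rho_{loc}$. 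This buys several things the paper's one-line proof does not make explicit: it is self-contained (no dependence on the internals of the greedy procedure), it actually yields the slightly sharper bound $d_{\min}\leq n-tn_{loc}$ (which coincides with the corollary's bound exactly when $\beta=\alpha$, the case used in Lemma~\ref{lemma:cond_kron_opt}), and your preliminary observation that the second term of Lemma~\ref{minimum distance} can strictly dominate the first (e.g., the four-Fano-plane code with $\calM=2\theta_{loc}+1$ gives $18>16$) correctly justifies why the corollary cannot be obtained by merely citing the lemma's statement. The paper's approach, in turn, is shorter and emphasizes that the refinement is obtained "for free" from Case~1 of the existing analysis. Your handling of the minor caveats ($\beta\geq 1$ so that $t\theta_{loc}<\calM$, and node- versus symbol-disjointness of the local codes) is careful and consistent with the paper's intended reading.
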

\begin{proof}
Applying the algorithm in Fig. \ref{min_dist_algorithm} it can be observed that we will never enter line 8, as $\mathcal{C}$ consists of the union of disjoint local FR codes and the file size $\calM = t\theta_{loc}+\beta$. Thus, after accumulating $t$ disjoint local FR codes, the algorithm will exit, yielding the required bound.
\end{proof}
\begin{construction}\label{design}
Let $\mathcal{C} = (\Omega, V)$ be a FR code with parameters $(n,\theta,\alpha,\rho)$ such that any $\Delta$+1 nodes in $V$ cover $\theta$ symbols and for $V_i, V_j \in V$, we have $|V_i \cap V_j| \leq \beta$ when $i \neq j$. We construct a locally recoverable FR code $\bar{\mathcal{C}}$ by considering the disjoint union of $l (>1)$ copies of $\mathcal{C}$. Thus, $\bar{\mathcal{C}}$ has parameters $(ln,l\theta, \alpha, \beta)$. We call $\mathcal{C}$ the local FR code of $\bar{\mathcal{C}}$.
\end{construction}

\begin{lemma} \label{lemma:cond_kron_opt} Let $ \bar{\mathcal{C}}$ be a code constructed by Construction \ref{design} for some $l>1$ such that the parameters of the local FR code satisfy $(\rho-1)\alpha\theta-(\theta+\alpha)(\Delta-1)\beta \geq 0$. Let the file size be $\calM= t\theta+\alpha$ for some $1 \leq t < l$. Then $\bar{\mathcal{C}}$ is optimal with respect to Corollary \ref{mincor}.
\end{lemma}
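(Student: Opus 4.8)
The plan is to evaluate the bound of Corollary~\ref{mincor} for these parameters and then show it is attained. Since $\bar{\calC}$ has $ln$ storage nodes and its local FR code has parameters $(n,\theta,\alpha,\rho)$ with $n\alpha=\theta\rho$, substituting the file size $\calM=t\theta+\alpha$ gives $\lceil\calM\rho/\alpha\rceil=\lceil t\theta\rho/\alpha+\rho\rceil=\lceil tn+\rho\rceil=tn+\rho$, so Corollary~\ref{mincor} reads $d_{\min}\le ln-(tn+\rho)+\rho=(l-t)n$. To get equality it suffices to prove the covering claim: \emph{every set $W$ of $tn+1$ nodes of $\bar{\calC}$ covers at least $t\theta+\alpha=\calM$ symbols.} Indeed, this forces any $(l-t)n-1$ failures to leave $\ge tn+1$ survivors, hence the file is recoverable, so $d_{\min}\ge(l-t)n$; combined with the corollary, $d_{\min}=(l-t)n$, which is the asserted optimality. (It also shows $\calM=t\theta+\alpha$ is exactly attained, by taking $t$ copies of $\calC$ entirely and one node of a further copy, so the statement is not vacuous.)

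To prove the covering claim I would write $x_j$ for the number of nodes of $W$ in the $j$-th copy of $\calC$, so $\sum_{j=1}^l x_j=tn+1$ and $0\le x_j\le n$; since the $l$ copies are on disjoint symbol sets, $W$ covers $\sum_{j=1}^l c_j$ symbols, where $c_j$ is the coverage inside copy $j$. I would lower-bound each $c_j$ via three facts about $\calC$: (i) $c_j=\theta$ when $x_j\ge\Delta+1$ (the defining property in Construction~\ref{design}); (ii) $c_j\ge x_j\alpha-\binom{x_j}{2}\beta$ when $1\le x_j\le\Delta$, by inclusion--exclusion (cf.~(\ref{eq:inc_enc_lower_bd})) together with $|V_i\cap V_{i'}|\le\beta$; and (iii) $c_j\ge\alpha$ when $x_j\ge1$. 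Call a copy \emph{complete} if $x_j\ge\Delta+1$, and let $p$ be the number of complete copies. If $p\ge t$, then $t$ complete copies already contribute $t\theta$, and since $\sum_j x_j=tn+1$ exceeds the number, at most $tn$, of nodes in those $t$ copies, a further nonempty copy contributes at least $\alpha$; hence $\sum_j c_j\ge t\theta+\alpha$ (using $\theta>\alpha$ when $p\ge t+1$). No constraint on the parameters is needed in this case.

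The substantive case is $p\le t-1$. Here the complete copies contribute $p\theta$, while the $N'=(tn+1)-\sum_{\text{complete}}x_j\ge(t-p)n+1$ nodes outside them lie in copies each of size $\le\Delta$, so they contribute at least $N'\alpha-\beta\sum\binom{x_j}{2}$; from $x_j\le\Delta$ we get $\sum x_j^2\le\Delta N'$, hence $\sum\binom{x_j}{2}\le\frac12(\Delta-1)N'$ and the incomplete copies contribute at least $N'\big(\alpha-\frac12(\Delta-1)\beta\big)$. Substituting $\theta=n\alpha/\rho$ and $N'\ge(t-p)n+1$, the desired inequality $p\theta+N'\big(\alpha-\frac12(\Delta-1)\beta\big)\ge t\theta+\alpha$ reduces to an inequality in $n,\rho,\alpha,\Delta,\beta$ for which the hypothesis $(\rho-1)\alpha\theta-(\theta+\alpha)(\Delta-1)\beta\ge0$ is precisely calibrated. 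I expect this last step to be the main obstacle: one must verify that the extremal configuration of $W$ is indeed ``$t$ complete copies plus a single extra node'', and close the inequality with no slack. When $(\Delta-1)\beta$ is near its allowed maximum the coefficient $\alpha-\frac12(\Delta-1)\beta$ is small (and for large $\rho$ the crude bound above can even become vacuous), so one will in addition need the monotonicity of $c_j$ in $x_j$, the refined bound $c_j\ge\max_{1\le y\le x_j}\big(y\alpha-\binom{y}{2}\beta\big)$, the trivial $c_j\ge\alpha$, and the fact that only $l>t$ copies are available (which forces $N'$ large copies when few copies remain) to rule out ``spread-out'' configurations; fitting these pieces together to match the stated hypothesis is the technical heart of the proof.
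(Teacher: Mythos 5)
Your reduction of the statement to the covering claim (any $tn+1$ nodes of $\bar{\mathcal{C}}$ cover at least $t\theta+\alpha$ symbols), the evaluation of the bound of Corollary \ref{mincor} as $(l-t)n$, and the split into ``complete'' copies (those with at least $\Delta+1$ chosen nodes, hence covering all $\theta$ symbols) versus the rest all match the paper's argument. The genuine gap is in the only quantitatively delicate step, which you yourself flag as the ``technical heart'' and leave open: your lower bound on the coverage of $x_j\le\Delta$ nodes inside one local copy is obtained from inclusion--exclusion, $c_j\ge x_j\alpha-\binom{x_j}{2}\beta$, giving the aggregate bound $N'\bigl(\alpha-\tfrac12(\Delta-1)\beta\bigr)$. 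This is not merely hard to close -- it fails for the very codes the lemma is meant to cover. For the affine resolvable local codes of Example \ref{affine resolvable} (already with $m=3$: $\theta=q^3$, $\alpha=q^2$, $\beta=q$, $\Delta=q^3-q^2$) one has $(\Delta-1)\beta\gg 2\alpha$, so your coefficient $\alpha-\tfrac12(\Delta-1)\beta$ is negative and the bound is vacuous; even your refined cap $c_j\ge\max_y\bigl(y\alpha-\binom{y}{2}\beta\bigr)\approx\alpha^2/(2\beta)$ per copy yields only about $t\theta/2$ in the extremal ``spread-out'' configuration, half of what is required, while the true coverage of $\Delta$ nodes in one copy is close to $\theta$. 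No amount of reshuffling of inclusion--exclusion estimates under the stated hypothesis will recover this factor.

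The missing ingredient in the paper is Corradi's lemma \cite{jukna2001}: for $a_i$ nodes of capacity $\alpha$ with pairwise intersections at most $\beta$, the covered set satisfies $|X_i|\ge \alpha^2 a_i/\bigl(\alpha+(a_i-1)\beta\bigr)\ge \alpha^2 a_i/\bigl(\alpha+(\Delta-1)\beta\bigr)$. Unlike the inclusion--exclusion parabola, this bound is linear in $a_i$ with exactly the constant for which the hypothesis $(\rho-1)\alpha\theta-(\theta+\alpha)(\Delta-1)\beta\ge0$ is calibrated: summing it over the incomplete copies, using $\sum a_i\ge(t-t_1)n+1$ and $n\alpha=\theta\rho$, gives
\begin{align*}
\sum_i|X_i|\ \ge\ (t-t_1)\theta+\frac{\bigl((\rho-1)\alpha-(\Delta-1)\beta\bigr)\theta+\alpha^2}{\alpha+(\Delta-1)\beta}\ \ge\ (t-t_1)\theta+\alpha,
\end{align*}
which closes the covering claim with no slack. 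So your proposal is correct in architecture but incomplete as a proof; replacing your step (ii) by the Corradi bound is what makes the stated hypothesis sufficient.
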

\begin{proof}
It is evident that $\bar{\mathcal{C}}$ is the disjoint union of $l$ local FR codes. Thus, the minimum distance bound here is $d_{min}\leq ln-\left \lceil \frac{(t\theta+\alpha)\rho}{\alpha}\right \rceil+\rho = (l-t)n.$ The code is optimal when any $tn+1$ nodes in $\bar{\mathcal{C}}$ cover at least  $\calM= t\theta+\alpha$ symbols. We show that this is the case below.

Let $a_i$ be the number of nodes that are chosen from the $i$-th local FR code and $X_i$ be the symbols covered by these $a_i$ nodes. Note that for any $1\leq i\leq l$ if $a_i \geq \Delta+1$, then $X_i = \theta$ (the maximum possible). Suppose there are $0 \leq t_1\leq t$ local FR codes that cover $\theta$ symbols. In this case it suffices to show that  $(t-t_1)n+1$ nodes cover at least  $(t-t_1)\theta+\alpha$ symbols. Here we can omit case of $t=t_1$, since our claim clearly holds in this situation. Suppose that these nodes belong to $s$ local FR codes, where $a_i \leq \Delta, i = 1, \dots, s$. By applying Corradi's lemma \cite{jukna2001} we obtain
{\small
\[
\begin{split}
|X_i|\geq \frac{\alpha^2a_i}{\alpha+(a_i-1)\beta}
&\geq \frac{\alpha^2a_i}{\alpha+(\Delta-1)\beta}.\\
\end{split}
\]
}
This implies that

\[
\begin{split}
&\sum_{i=1}^{s}|X_i| \geq \sum_{i=1}^{s}\frac{\alpha^2a_i}{\alpha+(\Delta-1)\beta}\\
&=\frac{\alpha^2}{\alpha+(\Delta-1)\beta}\sum_{i=1}^{s}a_i\\
&=\frac{\alpha^2}{\alpha+(\Delta-1)\beta}((t-t_1)n+1)\\
&= \frac{(t-t_1)\theta\rho\alpha}{\alpha + (\Delta -1)\beta} + \frac{\alpha^2}{\alpha + (\Delta -1)\beta} \text{{\normalsize~(since $n\alpha = \theta\rho$)}}\\
&= (t-t_1)\theta  + \bigg{(}\frac{\rho\alpha}{\alpha + (\Delta -1)\beta} - 1\bigg{)}(t - t_1)\theta + \frac{\alpha^2}{\alpha + (\Delta -1)\beta}\\
&\geq (t-t_1)\theta + \frac{((\rho-1)\alpha - (\Delta -1)\beta)\theta + \alpha^2}{\alpha + (\Delta -1)\beta}\\
&\geq (t-t_1)\theta + \alpha \text{~{\normalsize (using the assumed conditions).}}
\end{split}
\]

\end{proof}
The above lemma can be used to generate several examples of locally recoverable codes with $\rho_{res} > 1$. We discuss two examples below. 
\begin{example}\label{affine resolvable}
Let $q$ be a prime power. We consider the codes obtained from affine resolvable designs discussed in Section \ref{sec:affine_resolv_fr_code}. These codes have parameters $\theta = q^m, \alpha = q^{m-1}, \rho = \frac{q^m-1}{q-1}$ and $n =q \rho$. These codes are resolvable and hence we can vary the repetition degree by choosing an appropriate number of parallel classes. Note that the number of nodes in a parallel class is $\theta/\alpha = q$.

Suppose we choose the local FR code by including $q^{m-1}$ parallel classes, so that the repetition degree is $q^{m-1}$ and there are $n = q^m$ nodes. Furthermore, since the design is affine resolvable, $\beta = q^{m-2}$. The value of $\Delta$ ({\it cf.} Definition \ref{design}) can be determined as follows. For the local FR code, any subset of at least $q^m - q^{m-1} + 1$ nodes has at least one intact parallel class, which covers all the $\theta = q^m$ symbols. Accordingly, for this code we can conclude that $\Delta = q^m - q^{m-1}$.

Next, we verify the conditions of Lemma \ref{lemma:cond_kron_opt}. For this local FR code, we have that
\begin{align*}
&(\rho-1)\alpha\theta-(\theta+\alpha)(\Delta-1)\beta\\
&= (q^{m-1} - 1)q^{2m-1} - (q^m + q^{m-1})(q^m - q^{m-1} - 1) q^{m-2}\\
&=q^{2m-3}(q^{m+1} - q^2 - (q+1)(q^m - q^{m-1} - 1))\\
&=q^{2m-3}(q^{m+1} - q^2 - (q^m - q^{m-1} - 1) - (q^{m+1} - q^m - q))\\
&=q^{2m-3}(q^{m-1} + q+1 - q^2)\\
&\geq 0, \text{~when $m \geq 3$.}
\end{align*}

Thus, to summarize for the local FR code under consideration, the conditions of Lemma \ref{lemma:cond_kron_opt} apply when $m \geq 3$. Thus, we can construct a FR code by consider the disjoint union of $l$ of these local FR codes using Construction \ref{design}. The code will be optimal with respect to the bound derived in Corollary \ref{mincor} for file sizes of the form $t q^m + q^{m-1}$ for $ 1 \leq t < l$.
\end{example}

\begin{example}\label{projective plane} A projective plane of order $q$ also forms a FR code $\mathcal{C} = (\Omega, V)$, where $\alpha = q+1$ and $\rho = q+1$. Furthermore, $|V_i \cap V_j| = 1$ if $i \neq j$ and each pair of symbols appears in exactly one node; this further implies that $\beta = 1$. A simple counting argument shows that $|\Omega| = \theta = q^2+q+1$ and $n = q^2 + q + 1$. The value of $\Delta$ ({\it cf.} Definition \ref{design}) can be determined in the following manner. Applying Corradi's Lemma, we note that any $q^2 + 1$ nodes cover at least a number of symbols greater than or equal to
\begin{align*}
\frac{(q+1)^2 (q^2 + 1)}{q^2+q+1} &= q^2 + q + \frac{q+1}{q^2+q+1}\\
&> q^2 + q,
\end{align*}
whereby we conclude that $q^2+1$ nodes cover all the $q^2 + q + 1$ symbols. It can also be observed that there is a set of $q^2$ nodes that do not cover all the $q^2+q +1$ symbols as the repetition degree of the symbols is $q+1$. Thus, in this case we can observe that $\Delta = q^2$.


We construct a locally recoverable FR code $\bar{\mathcal{C}}$ by taking $l>1$ copies of the code $\mathcal{C}$. So the code $\bar{\mathcal{C}}$ has parameters $(l(q^2+q+1),l(q^2+q+1),q+1,q+1)$. Let the file size be $\calM= t(q^2+q+1)+q+1$ for some $1\leq t<l$. Then, $\bar{\mathcal{C}}$ is  optimal with respect to Lemma  \ref{minimum distance} and has $\rho_{res} = q$. 
An example is illustrated in Fig. \ref{Fano}.
\end{example}


\section{Conclusions and Future Work}
\label{sec:conclusions_future_work}
In this work we have constructed several classes of fractional repetition codes that can be used in distributed storage systems. These codes allow for a repair process that is exact and uncoded but table-based. Our constructions stem from combinatorial designs such as Steiner systems, affine geometries, Hadamard designs and mutually orthogonal Latin squares. We demonstrate that (i) the repetition degree of the symbols which dictates the failure resilience of the code can be varied in an easy manner, and (ii) construct instances of codes with $\beta > 1$ that cannot be obtained in a trivial manner from codes with $\beta = 1$. In addition, we show that new FR codes can be obtained from taking Kronecker products of existing ones and analyze their properties. For codes with exact, uncoded and local repair property (where $d < k$), we establish an appropriate minimum distance bound and present constructions from high-girth graphs and collections of local FR codes (with specific properties) that meet these bounds. For most of our constructions, we determine the code rate for specific ranges of $k$.

There are several opportunities for future work. It would be interesting to examine applications of designs in other areas of network coding. For instance, \cite{tripathyR15} shows that designs can be used to construct directed acyclic networks that have nontrivial implications for distributed function computation. In principle, several combinatorial designs can be treated as FR codes. However, it would be interesting to examine if there are other families that have desirable properties and lend themselves to an anysis of the system code rate. It is to be noted that the code rate depends on the minimum size of the union of $k$-sized subsets of the storage nodes. It can also be viewed as determining the expansion level of a bipartite graph derived from the incidence matrix of the design. In general, it is somewhat challenging as most results in the literature only discuss pairwise intersections. A related problem would be determine feasible and infeasible parameter ranges for FR codes.

\section{Acknowledgements}
The authors would like to thank the anonymous reviewers whose comments and suggestions significantly improved the quality of the paper.

\section{Appendix}
\noindent {\it Proof of Lemma \ref{lemma:net_FR_code}}.\\
Note that the properties of net FR codes imply that any two storage nodes intersect in either one or zero symbols. Thus,
 $\alpha k-\binom{k}{2}$ is the lower bound on the file size. In the discussion below we demonstrate the existence of $k$ nodes that cover exactly $\alpha k-\binom{k}{2}$ symbols. 
 Let the parallel classes be indexed from $0$ to $\rho - 1$.
\begin{algorithmic}[1]
\State Choose a node $V_0$ from $0$-th parallel class. Initialize $H = \emptyset$, $S = \{V_0\}$ and $i=1$.
\While{$|H| \leq a$ and $|S| < k$}
\State Choose $V_i$ from the $i$-th parallel class such that $V_\ell \cap V_i \notin H$ for all $V_{\ell} \in S$.
\State Set $\displaystyle H = H \bigcup \cup_{V_{\ell} \in S} V_{\ell} \cap V_i$.
\State  Set $S = S \cup V_i$.
\EndWhile
\end{algorithmic}
We need to show that an appropriate $V_i$ can always be chosen in the algorithm and that $|S| = k$ upon exit. To see this note that $H$ tracks the set of pairwise intersections between the nodes at all times. At the beginning of stage $i$, the size of $H$ is at most $\binom{i}{2}$ (by interpreting $\binom{1}{2} = 0$). 
Note that a parallel class has $a$ nodes and that two nodes from the same parallel class do not intersect. Thus, as long as $a > \binom{i}{2}$ we can always find an appropriate $V_i$. By our assumption  $\binom{k-1}{2} < a$. Thus, the algorithm exits with $|S| = k$.
%
%

\begin{lemma}
Consider sets $A_1, \dots, A_k$ such that $|A_i| = \alpha$ and $|A_i \cap A_j| \leq 1$ when $i \neq j$ and $|\cup_{i=1}^k A_i| = k\alpha - \binom{k}{2}$. This implies that $|A_i \cap A_j| = 1$ for $i\neq j$ and $|A_i \cap A_j \cap A_l |=0 $ for all distinct triples $(i,j,l)$ where $i,j,l = 1, \dots, k$.
\end{lemma}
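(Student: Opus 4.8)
The plan is to use the inclusion-exclusion principle (Theorem~\ref{thm:inc_enc}) to obtain a lower bound on $|\cup_{i=1}^k A_i|$ that already exploits the hypothesis $|A_i\cap A_j|\le 1$, and then argue that the hypothesis $|\cup_{i=1}^k A_i| = k\alpha - \binom{k}{2}$ forces every term in that bound to be as small as possible, which pins down the pairwise and triple intersections exactly.

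First I would write, using eq.~\eqref{eq:inc_enc_lower_bd},
\[
\Big|\bigcup_{i=1}^k A_i\Big| \ \ge\ \sum_{i=1}^k |A_i| - \sum_{i<j} |A_i\cap A_j| \ =\ k\alpha - \sum_{i<j}|A_i\cap A_j|.
\]
Since $|A_i\cap A_j|\le 1$ for each of the $\binom{k}{2}$ pairs, we have $\sum_{i<j}|A_i\cap A_j|\le \binom{k}{2}$, hence $|\cup_{i=1}^k A_i|\ge k\alpha - \binom{k}{2}$. Combining this with the assumed equality $|\cup_{i=1}^k A_i| = k\alpha - \binom{k}{2}$ forces $\sum_{i<j}|A_i\cap A_j| = \binom{k}{2}$, and because each summand is an integer in $\{0,1\}$ and there are exactly $\binom{k}{2}$ of them, every summand must equal $1$; that is, $|A_i\cap A_j| = 1$ for all $i\neq j$.

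Next I would handle the triple intersections. The cleanest route is to sharpen the inclusion-exclusion estimate to track the three-way overlaps. Concretely, a standard refinement of Bonferroni/inclusion-exclusion gives
\[
\Big|\bigcup_{i=1}^k A_i\Big| \ \ge\ \sum_{i=1}^k|A_i| - \sum_{i<j}|A_i\cap A_j| + \sum_{i<j<l}|A_i\cap A_j\cap A_l| \ - \ (\text{correction}),
\]
but rather than invoke a delicate higher-order Bonferroni inequality with the right sign, I would argue directly via a counting/contribution argument: for each element $x\in\cup_i A_i$, let $m(x)$ be the number of sets $A_i$ containing $x$. Then $\sum_{i}|A_i| = \sum_x m(x) = k\alpha$ and $\sum_{i<j}|A_i\cap A_j| = \sum_x \binom{m(x)}{2} = \binom{k}{2}$, while $|\cup_i A_i| = \sum_x 1 = k\alpha - \binom{k}{2}$. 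Subtracting, $\sum_x\big(m(x) - \binom{m(x)}{2} - 1\big) = 0$. Since $t - \binom{t}{2} - 1 = -\tfrac{(t-1)(t-2)}{2} \le 0$ for every integer $t\ge 1$, with equality iff $t\in\{1,2\}$, each summand is $\le 0$, so every summand is $0$, i.e. $m(x)\in\{1,2\}$ for all $x$. Consequently no element lies in three or more of the $A_i$, which is exactly $|A_i\cap A_j\cap A_l| = 0$ for all distinct triples.

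The main obstacle is purely bookkeeping: making sure the double-counting identities $\sum_x\binom{m(x)}{2} = \sum_{i<j}|A_i\cap A_j|$ and $\sum_x m(x) = \sum_i|A_i|$ are applied correctly, and that the already-established fact $|A_i\cap A_j|=1$ (for all pairs) is used to evaluate $\sum_{i<j}|A_i\cap A_j| = \binom{k}{2}$ before passing to the multiplicity argument. Everything else is the elementary observation that $t-\binom{t}{2}-1\le 0$ with equality exactly at $t=1,2$. I would present the two conclusions in this order (pairwise intersections first, then triple intersections), since the second argument reuses the value $\binom{k}{2}$ obtained in the first.
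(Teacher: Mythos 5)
Your proof is correct. The first half (forcing $|A_i\cap A_j|=1$ from the inclusion-exclusion lower bound and the integrality of the pairwise intersections) is exactly the paper's argument. For the triple intersections, however, you take a genuinely different and more elementary route: the paper keeps working with the full inclusion-exclusion identity, deduces that $\sum_{|I|\ge 3}(-1)^{|I|+1}|A_I|=0$, and then runs a somewhat delicate structural argument — it groups the index sets with nonempty intersection under \emph{unique maximal} such sets, shows each maximal set of size $\delta$ contributes $\binom{\delta}{2}-(\delta-1)>0$ through terms not shared with any other maximal set, and derives a contradiction. Your double-counting of element multiplicities replaces all of that: from $\sum_x m(x)=k\alpha$, $\sum_x \binom{m(x)}{2}=\sum_{i<j}|A_i\cap A_j|=\binom{k}{2}$ and $\sum_x 1=k\alpha-\binom{k}{2}$ you get $\sum_x\bigl(m(x)-\binom{m(x)}{2}-1\bigr)=0$ with every summand $-\tfrac{(m(x)-1)(m(x)-2)}{2}\le 0$, hence $m(x)\in\{1,2\}$ for all $x$. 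This is shorter, avoids the maximal-subset bookkeeping (whose "distinct terms" step is the fragile part of the paper's proof), and in fact yields the same stronger conclusion the paper proves, namely that $|A_I|=0$ for every $I$ with $|I|\ge 3$, not just triples. A minor remark: your argument does not even need the first part as a prerequisite — the inequality $\sum_x\binom{m(x)}{2}=\sum_{i<j}|A_i\cap A_j|\le\binom{k}{2}=\sum_x(m(x)-1)$ together with $\binom{t}{2}-(t-1)\ge 0$ gives both conclusions at once — but the order you chose is perfectly sound.
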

\begin{proof}
By the inclusion-exclusion principle, we have that
\begin{align*}
|\cup_{i=1}^k A_i| \geq \sum_{i} |A_i| - \sum_{i < j} |A_i \cap A_j| \geq k\alpha - \binom{k}{2}.
\end{align*}
However, as $|\cup_{i=1}^k A_i| = k\alpha - \binom{k}{2}$, this implies that $|A_i \cap A_j| = 1$ for all pairs $(i,j)$ such that $i \neq j$.

For a set $I \subseteq [k]$, let $A_I$ denote the set $\cap_{i \in I} A_i$. We note that the given conditions also imply that
\begin{align}\label{eq:inc-ex-residual}
\sum_{\emptyset \neq I \subseteq [k], |I| \geq 3} (-1)^{|I|+1} |A_I| = 0.
\end{align}
We argue that it has to be the case that $|A_I| = 0$ for $\emptyset \neq I \subseteq [k], |I| \geq 3$. Suppose that this is not the case and there are $l$ subsets $I_1, \dots, I_l$ such that $|I_i| \geq 3, i = 1, \dots l$ and $|A_{I_i}| = 1$. For each $I_i$, there has to be a maximal $I^*_i$ such that $I_i \subset I^*_i$. Moreover, it has to hold that $|I^*_i \cap I^*_j| \leq 1$, as otherwise $I^*_i \cup I^*_j$ provides an example of a subset that is larger than both $I^*_i$ and $I^*_j$. This establishes that for each $I_i$, there is a {\it unique} maximal $I^*_i$.

Now, we examine contribution of each of the identified maximal subsets $I^*_i$ to the LHS of eq. (\ref{eq:inc-ex-residual}).
It is evident that $|A_J| = 1$ for all $\emptyset \neq J \subseteq I^*_i$. Let $|I^*_i| = \delta$. This implies that the subset $I^*_i$ induces the following contribution to the LHS of eq. (\ref{eq:inc-ex-residual}): $\sum_{i=3}^\delta (-1)^{i+1}\binom{\delta}{i} = \binom{\delta}{2} - (\delta - 1) > 0$. Thus, the subset $I^*_i$ of maximum cardinality contributes a net positive value to the LHS of eq. (\ref{eq:inc-ex-residual}). Following this we can repeat this argument on the next maximal subset. Note that as the maximal subsets have an intersection of size at most one, each maximal subset contributes the LHS of eq. (\ref{eq:inc-ex-residual}) via distinct terms. Finally, it can be observed that the overall contribution of the maximal subsets accounts for all terms in the LHS of eq. (\ref{eq:inc-ex-residual}). We conclude that if there exist $|A_I| > 0$ for $\emptyset \neq I \subseteq [k], |I| \geq 3$, we have $\sum_{\emptyset \neq I \subseteq [k], |I| \geq 3} (-1)^{|I|+1} |A_I| > 0$, which is a contradiction.

%
%
%

\end{proof}



\begin{thebibliography}{10}
\providecommand{\url}[1]{#1}
\csname url@samestyle\endcsname
\providecommand{\newblock}{\relax}
\providecommand{\bibinfo}[2]{#2}
\providecommand{\BIBentrySTDinterwordspacing}{\spaceskip=0pt\relax}
\providecommand{\BIBentryALTinterwordstretchfactor}{4}
\providecommand{\BIBentryALTinterwordspacing}{\spaceskip=\fontdimen2\font plus
\BIBentryALTinterwordstretchfactor\fontdimen3\font minus
  \fontdimen4\font\relax}
\providecommand{\BIBforeignlanguage}[2]{{%
\expandafter\ifx\csname l@#1\endcsname\relax
\typeout{** WARNING: IEEEtran.bst: No hyphenation pattern has been}%
\typeout{** loaded for the language `#1'. Using the pattern for}%
\typeout{** the default language instead.}%
\else
\language=\csname l@#1\endcsname
\fi
#2}}
\providecommand{\BIBdecl}{\relax}
\BIBdecl

\bibitem{inside_ssd_book}
R.~Micheloni, A.~Marelli, and K.~Eshghi, \emph{Inside Solid State Drives
  (SSDs)}.\hskip 1em plus 0.5em minus 0.4em\relax Springer, 2013.

\bibitem{dimakis2010}
A.~G. Dimakis, P.~B. Godfrey, Y.~Wu, M.~J. Wainwright, and K.~Ramchandran,
  ``Network coding for distributed storage systems,'' \emph{IEEE Trans. on
  Info. Th.}, vol.~56, no.~9, pp. 4539--4551, 2010.

\bibitem{gopalan2012}
P.~Gopalan, C.~Huang, H.~Simitci, and S.~Yekhanin, ``On the locality of
  codeword symbols,'' \emph{IEEE Trans. on Info. Th.}, vol.~58, no.~11, pp.
  6925--6934, 2012.

\bibitem{papD12}
D.~S. Papailiopoulos and A.~G. Dimakis, ``Locally repairable codes,'' in
  \emph{IEEE Intl. Symposium on Info. Th.}, 2012, pp. 2771 --2775.

\bibitem{oggier2011}
F.~Oggier and A.~Datta, ``Self-repairing homomorphic codes for distributed
  storage systems,'' in \emph{Proceedings IEEE INFOCOM}, 2011, pp. 1215--1223.

\bibitem{jiekak2013}
S.~Jiekak, A.-M. Kermarrec, N.~L. Scouarnec, G.~Straub, and A.~V. Kempen,
  ``Regenerating codes: A system perspective,'' \emph{ACM SIGOPS Operating
  Systems Review}, vol.~47, no.~2, pp. 23--32, 2013.

\bibitem{el2010}
S.~E. Rouayheb and K.~Ramchandran, ``Fractional repetition codes for repair in
  distributed storage systems,'' in \emph{48th Annual Allerton Conference on
  Communication, Control, and Computing}, 2010, pp. 1510--1517.

\bibitem{rashmi2009}
K.~V. Rashmi, N.~B. Shah, P.~V. Kumar, and K.~Ramchandran, ``Explicit
  construction of optimal exact regenerating codes for distributed storage,''
  in \emph{47th Annual Allerton Conference on Communication, Control, and
  Computing}, 2009, pp. 1243--1249.

\bibitem{kamath_et_al_14}
G.~Kamath, N.~Prakash, V.~Lalitha, and P.~Kumar, ``Codes with local
  regeneration and erasure correction,'' \emph{IEEE Trans. on Info. Th.},
  vol.~60, no.~8, pp. 4637--4660, 2014.

\bibitem{colbourn2010}
C.~J. Colbourn and J.~H. Dinitz, \emph{Handbook of combinatorial
  designs}.\hskip 1em plus 0.5em minus 0.4em\relax CRC press, 2010.

\bibitem{bose1960}
R.~C. Bose, S.~S. Shrikhande, and E.~T. Parker, ``{Further results on the
  construction of mutually orthogonal Latin squares and the falsity of Euler's
  conjecture},'' \emph{Canad. J. Math}, vol.~12, pp. 189--203, 1960.

\bibitem{stinson2004}
D.~R. Stinson, \emph{Combinatorial designs: construction and analysis}.\hskip
  1em plus 0.5em minus 0.4em\relax Springer, 2004.

\bibitem{Exoo08}
G.~Exoo and R.~Jajcay, ``Dynamic cage survey,'' \emph{The Electronic Journal of
  Combinatorics}, 2008.

\bibitem{olmez2012}
O.~Olmez and A.~Ramamoorthy, ``Repairable replication-based storage systems
  using resolvable designs,'' in \emph{50th Annual Allerton Conference on
  Communication, Control, and Computing (Allerton)}, 2012, pp. 1174--1181.

\bibitem{olmez2013_2}
------, ``Constructions of fractional repetition codes from combinatorial
  designs,'' in \emph{47th Asilomar Conf. on Signals, Systems and Computers},
  2013, pp. 647--651.

\bibitem{olmez2013}
------, ``Replication based storage systems with local repair,'' in
  \emph{International Symposium on Network Coding (NetCod)}, 2013, pp. 1--6.

\bibitem{rashmi2011}
K.~V. Rashmi, N.~B. Shah, and P.~V. Kumar, ``{Optimal exact-regenerating codes
  for distributed storage at the MSR and MBR points via a product-matrix
  construction},'' \emph{IEEE Trans. on Info. Th.}, vol.~57, no.~8, pp.
  5227--5239, 2011.

\bibitem{SuhR11}
C.~Suh and K.~Ramchandran, ``{Exact-Repair MDS Code Construction Using
  Interference Alignment},'' \emph{IEEE Trans. on Info. Th.}, vol.~57, no.~3,
  pp. 1425 --1442, 2011.

\bibitem{tian2013}
C.~Tian, V.~Aggarwal, and V.~A. Vaishampayan, ``Exact-repair regenerating codes
  via layered erasure correction and block designs,'' in \emph{IEEE Intl.
  Symposium on Info. Th.}, 2013, pp. 1431--1435.

\bibitem{papailiopoulos2012}
D.~S. Papailiopoulos, J.~Luo, A.~G. Dimakis, C.~Huang, and J.~Li, ``Simple
  regenerating codes: Network coding for cloud storage,'' in \emph{Proceedings
  IEEE INFOCOM}, 2012, pp. 2801--2805.

\bibitem{shah2012}
N.~B. Shah, K.~V. Rashmi, P.~V. Kumar, and K.~Ramchandran, ``Interference
  alignment in regenerating codes for distributed storage: necessity and code
  constructions,'' \emph{IEEE Trans. on Info. Th.}, vol.~58, no.~4, pp.
  2134--2158, 2012.

\bibitem{tamo2011}
I.~Tamo, Z.~Wang, and J.~Bruck, ``{MDS array codes with optimal rebuilding},''
  in \emph{IEEE Intl. Symposium on Info. Th.}, 2011, pp. 1240--1244.

\bibitem{papailiopoulos2011}
D.~S. Papailiopoulos, A.~G. Dimakis, and V.~R. Cadambe, ``Repair optimal
  erasure codes through hadamard designs,'' in \emph{49th Annual Allerton
  Conference on Communication, Control, and Computing}, 2011, pp. 1382--1389.

\bibitem{shah_et_al12_rep_by_transfer}
N.~B. Shah, K.~V. Rashmi, P.~V. Kumar, and K.~Ramchandran, ``Distributed
  storage codes with repair-by-transfer and nonachievability of interior points
  on the storage-bandwidth tradeoff,'' \emph{IEEE Trans. on Info. Th.},
  vol.~58, no.~3, pp. 1837--1852, March 2012.

\bibitem{rawatKSV14}
A.~S. Rawat, O.~O. Koyluoglu, N.~Silberstein, and S.~Vishwanath, ``Optimal
  locally repairable and secure codes for distributed storage systems,''
  \emph{IEEE Trans. on Info. Th.}, vol.~60, no.~1, pp. 212--236, 2014.

\bibitem{ShumH12}
K.~W. Shum and Y.~Hu, ``Functional-repair-by-transfer regenerating codes,'' in
  \emph{IEEE Intl. Symposium on Info. Th.}, July 2012, pp. 1192--1196.

\bibitem{HuLS13}
Y.~Hu, P.~P.~C. Lee, and K.~W. Shum, ``Analysis and construction of functional
  regenerating codes with uncoded repair for distributed storage systems,'' in
  \emph{Proceedings IEEE INFOCOM}, 2013, pp. 2355--2363.

\bibitem{koo2011}
J.~C. Koo and J.~T. Gill, ``Scalable constructions of fractional repetition
  codes in distributed storage systems,'' in \emph{49th Annual Allerton
  Conference on Communication, Control, and Computing}, 2011, pp. 1366--1373.

\bibitem{ernvall12}
T.~Ernvall, ``The existence of fractional repetition codes,'' 2012, [Online]
  Available: http://http://arxiv.org/abs/1201.3547.

\bibitem{silberstein2014}
N.~Silberstein and T.~Etzion, ``Optimal fractional repetition codes based on
  graphs and designs,'' \emph{IEEE Trans. on Info. Th.}, vol.~61, no.~8, pp.
  4164--4180, 2015.

\bibitem{kamath2013}
G.~M. Kamath, N.~Silberstein, N.~Prakash, A.~S. Rawat, V.~Lalitha, O.~O.
  Koyluoglu, P.~V. Kumar, and S.~Vishwanath, ``{Explicit MBR all-symbol
  locality codes},'' in \emph{IEEE Intl. Symposium on Info. Th.}, 2013, pp.
  504--508.

\bibitem{quattrocchi1993}
G.~Quattrocchi and H.~Zeitler, ``Hyperovals in steiner triple systems,''
  \emph{Journal of Geometry}, vol.~47, no.~1, pp. 125--130, 1993.

\bibitem{greig2003}
M.~Greig and A.~Rosa, ``Maximal arcs in steiner systems $s(2, 4, v)$,''
  \emph{Discrete Mathematics}, vol. 267, no.~1, pp. 143--151, 2003.

\bibitem{assmus1992}
E.~F. Assmus, \emph{Designs and their Codes}.\hskip 1em plus 0.5em minus
  0.4em\relax Cambridge University Press, 1992.

\bibitem{skolem1958}
T.~Skolem, ``{Some Remarks on the Triple Systems of Steiner.}''
  \emph{Mathematica Scandinavica}, vol.~6, pp. 273--280, 1958.

\bibitem{horn2012}
R.~A. Horn and C.~R. Johnson, \emph{Matrix Analysis}.\hskip 1em plus 0.5em
  minus 0.4em\relax Cambridge University Press, 2012.

\bibitem{yates1936}
F.~Yates, ``A new method of arranging variety trials involving a large number
  of varieties,'' \emph{The Journal of Agricultural Science}, vol.~26, no.~03,
  pp. 424--455, 1936.

\bibitem{lam1989}
C.~W. Lam, L.~Thiel, and S.~Swiercz, ``The non-existence of finite projective
  planes of order 10,'' \emph{Canad. J. Math}, vol.~41, no.~6, pp. 1117--1123,
  1989.

\bibitem{sachs1963}
H.~Sachs, ``Regular graphs with given girth and restricted circuits,''
  \emph{Journal of the London Mathematical Society}, vol.~1, no.~1, pp.
  423--429, 1963.

\bibitem{lazebnik1995}
F.~Lazebnik and V.~A. Ustimenko, ``Explicit construction of graphs with an
  arbitrary large girth and of large size,'' \emph{Discrete Applied
  Mathematics}, vol.~60, no.~1, pp. 275--284, 1995.

\bibitem{jukna2001}
S.~Jukna, \emph{Extremal combinatorics}.\hskip 1em plus 0.5em minus 0.4em\relax
  Springer, 2001.

\bibitem{tripathyR15}
A.~S. Tripathy and A.~Ramamoorthy, ``Capacity of sum-networks for different
  message alphabets,'' in \emph{IEEE Intl. Symposium on Info. Th.}, 2015, pp.
  606--610.

\end{thebibliography}
\begin{IEEEbiography}{Oktay Olmez} received his Ph.D. in pure mathematics at the Iowa State University under the supervision of Dr. Sung Song in 2012. He also worked as a postdoctoral fellow in the Department of Mathematics and Department of Electrical and Computer Engineering at the Iowa State University between 2012 and 2013. He is currently an Associate Professor in the Department of Mathematics at Ankara University. His research interest include regenerating codes for distributed storage systems, highly regular graphs arising from finite geometries, highly nonlinear boolean functions and construction of combinatorial block designs via difference sets.
\end{IEEEbiography}
\begin{IEEEbiography} {Aditya Ramamoorthy} (M'05) received the B.Tech. degree in electrical engineering
from the Indian Institute of Technology, Delhi, in 1999, and the M.S.
and Ph.D. degrees from the University of California, Los Angeles (UCLA), in
2002 and 2005, respectively.
He was a systems engineer with Biomorphic VLSI Inc. until 2001. From 2005
to 2006, he was with the Data Storage Signal Processing Group of Marvell
Semiconductor Inc. Since fall 2006, he has been with the Electrical and Computer Engineering Department at Iowa State University,
Ames, IA 50011, USA. His research interests are in the areas of network information theory,
channel coding and signal processing for bioinformatics and nanotechnology.

Dr. Ramamoorthy is the recipient of the 2012 Iowa State University's Early Career Engineering Faculty Research Award, the 2012 NSF CAREER award, and the Harpole-Pentair professorship in 2009 and 2010. He served as an associate editor for the IEEE Transactions on Communications from 2011 -- 2014.
\end{IEEEbiography}

\end{document}